\documentclass[a4paper, 11pt]{article}

\usepackage[english]{babel}

\usepackage[margin=1in]{geometry}

\usepackage{amsmath}
\usepackage{amssymb}
\usepackage{amsthm}
\usepackage{thmtools}
\usepackage{thm-restate}
\usepackage{graphicx}
\usepackage[colorlinks=true, allcolors=blue]{hyperref}
\usepackage[capitalise]{cleveref}
\usepackage{xspace}
\usepackage[dvipsnames]{xcolor}
\usepackage{tikz}
\usepackage{enumerate}

\usetikzlibrary{calc}

\newtheorem{theorem}{Theorem}
\newtheorem{lemma}[theorem]{Lemma}
\newtheorem{corollary}[theorem]{Corollary}
\newtheorem{observation}[theorem]{Observation}
\newtheorem{claim}[theorem]{Claim}
\theoremstyle{definition}
\newtheorem{definition}[theorem]{Definition}
\theoremstyle{remark}

\newenvironment{claimproof}[1][\proofname]{
  
  \begin{proof}[#1]}{\end{proof}}

\newcommand{\pls}{\textup{\textsf{PLS}}\xspace}
\newcommand{\kopt}{\text{$k$-Opt}\xspace}
\newcommand{\infedge}{\text{non-edge}\xspace}
\newcommand{\fedge}{\text{$G$-edge}\xspace}
\newcommand{\fedges}{\text{$G$-edges}\xspace}

\newcommand{\Z}{\mathbb{Z}}

\definecolor{defblue}{rgb}{0, 0.4, 0.796}
\newcommand{\defi}[1]{\textcolor{defblue}{\emph{#1}}}

\title{On the PLS-Completeness of $k$-Opt Local Search for the Traveling Salesman Problem}
\author{Sophia Heimann\thanks{Hertz Chair for Algorithms and Optimization, University of Bonn, Germany (sheimann@uni-bonn.de)}, 
        Hung P. Hoang\thanks{Algorithms and Complexity Group, Faculty of Informatics, TU Wien, Austria, (phoang@ac.tuwien.ac.at) funded by the Austrian Science Foundation (FWF, projects 10.55776/Y1329 and ESP1136425)}, 
        Stefan Hougardy\thanks{Research Institute for Discrete Mathematics and Hausdorff Center for Mathematics, University of Bonn, Germany (hougardy@or.uni-bonn.de) funded by the Deutsche Forschungsgemeinschaft (DFG, German Research Foundation) under Germany's Excellence Strategy -- EXC-2047/1 -- 390685813}}
\begin{document}
 {\let\thefootnote\relax\footnotetext{A preliminary version of the result was announced at ICALP 2024~\cite{HHH2024}.}}
\maketitle

\begin{abstract}
 The \kopt algorithm is a local search algorithm for the traveling salesman problem.
 Starting with an initial tour, it iteratively replaces at most $k$ edges in the tour with the same number of edges to obtain a better tour.
 Krentel (FOCS 1989) showed that the traveling salesman problem with the \kopt neighborhood is complete for the class \pls (polynomial time local search).
 However, his proof requires $k \gg 1000$ and has a substantial gap.
 We provide the first rigorous proof for the \pls-completeness and at the same time drastically lower the value of $k$ to $k \geq 15$, addressing an open question by Monien, Dumrauf, and Tscheuschner (ICALP 2010).
 Our result holds for both the general and the metric traveling salesman problem.
 \end{abstract}

\section{Introduction}
The well-known Traveling Salesman Problem (TSP) consists of finding a Hamiltonian cycle in an 
edge-weighted complete graph such that the total edge weight of the cycle is the smallest possible.
A popular heuristic for this problem is a local search algorithm called \kopt.
Starting with an arbitrary tour, it iteratively replaces at most $k$ edges in the tour with the same number of edges, as long as the resulting tour has a smaller total edge weight.
We define TSP/\kopt as the problem of finding a local optimum for a TSP instance 
with the \kopt algorithm.
See~\cite{HHH2026} for a summary of related work on this problem and the \kopt algorithm.

A classical method to analyze the complexity of such a local search problem is via the complexity class \pls and the notion of \pls-completeness (for definitions, see \cref{sec:preliminaries}).
They were introduced in 1988 by Johnson, Papadimitriou, and Yannakakis~\cite{JPY1988} 
to capture the observation that for many \textsf{NP}-hard problems it is not only difficult to compute a global optimum, but
even computing a local optimum is also hard. Examples of such problems are the Maximum Satisfiability problem~\cite{krentel1990}, 
Max-Cut~\cite{schaffer1991}, and Set Cover~\cite{DS10}. The \pls-completeness of a problem means that a polynomial time
algorithm to find a local optimum for that problem would imply polynomial time algorithms for finding a local optimum for all problems in \pls. 

The \pls-completeness of TSP/\kopt was proved by Krentel~\cite{Kre1989} for $k \gg 1000$. 
Krentel estimated that his proof yields a value for $k$ between $1{,}000$ and $10{,}000$. 
By using a straightforward way to implement some missing details in Krentel's proof it was recently shown that 
his proof yields the value $14{,}208$ for $k$~\cite{HH2023}.
Further, his proof has a substantial gap 
as he assumes that edges of weight infinity cannot occur in a local optimum (for more details, see \cref{sec:no_infty_opt}). 
Following Krentel's paper, there have been claims in other papers~\cite{JG1997, Yan1997} through private communication with Krentel that a careful analysis of the original proof can bring down the value to $k = 8$ 
and conceivably to $k = 6$. However, there has been no available written proof for these claims.
In fact, to date, the 1989 paper of Krentel~\cite{Kre1989} is the only paper on the topic.
Consequently, Monien, Dumrauf, and Tscheuschner~\cite{Monien_Dumrauf_Tscheuschner_2010} posed an open question on the complexity of TSP/\kopt for $k \ll 1000$.

In this paper, we present in \cref{sec:PLS-complete-proof}
the first rigorous proof for the PLS-completeness of TSP/\kopt and at the same time drastically lower the value of
$k$ from Krentel's  $k \gg 1000$~\cite{Kre1989} to $k \ge 15$. 

\begin{restatable}{theorem}{plscomplete}
\label{thm:PLS_complete}
    TSP/\kopt is \pls-complete for $k \geq 15$.
\end{restatable}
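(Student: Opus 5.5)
Membership of TSP/\kopt in \pls is routine. Tours are polynomial-size objects, an initial tour is easy to produce, and the cost is computable in polynomial time. For fixed $k$ the \kopt neighborhood of a tour has $O(n^{2k})$ members, so we can check local optimality, or find an improving neighbor, in polynomial time. The real content is \pls-hardness. Here I would reduce from a problem already known to be \pls-complete whose local moves are very simple. My choice is Max-Cut with the Flip neighborhood, restricted to graphs of bounded degree; it remains \pls-complete for maximum degree five (Els\"asser--Tscheuschner), and the bounded degree is exactly what keeps $k$ a small constant. A local reduction needs maps $(f,g)$. Here $f$ sends a weighted graph $G$ to a TSP instance, and $g$ maps every \kopt-local optimum of $f(G)$ back to a cut of $G$ that is flip-locally optimal.

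For the construction I would follow the spirit of Krentel but with leaner gadgets. Each vertex $v$ of $G$ gets a gadget that can be traversed by a tour in exactly two canonical ways, encoding the side of $v$ in the cut. Each edge $uv$ of $G$ is realized by a few "$G$-edges" between the two gadgets. The weights of these edges are chosen so that the tour cost equals a large constant minus the cut weight, assuming every gadget is traversed canonically. All remaining pairs are "non-edges" with huge weight $M$. To make the instance metric, I would set all weights within a range $[M, 2M]$ by adding $M$ to everything; the triangle inequality then holds automatically and the reduction is unchanged, since all tours have the same number of edges. Flipping a vertex $v$ switches its gadget's traversal and reroutes the $G$-edges at $v$'s at most five incident edges. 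I would count precisely the edges removed and added by this move and verify that it is a single exchange of at most $15$ edges. That count is what yields $k \ge 15$. Consequently, if the cut admits an improving flip, the corresponding canonical tour is not \kopt-optimal.

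The converse is the main obstacle, and it is exactly Krentel's gap. A \kopt local optimum need not be canonical: it may use non-edges, or traverse gadgets in strange ways. I would handle this in two steps. First, I would show that any tour containing a non-edge, or a non-canonically traversed gadget, admits an improving move of at most $k$ edges. The approach is to locate the offending gadget and exhibit a local repair that strictly decreases the number of non-edges or defects while changing only $O(1)$ edges. The difficulty is that a single non-edge may be forced globally by the current tour structure. To get around this, I would design the gadgets with extra "slack" paths so that every defect is locally repairable independently of the rest of the tour. If I cannot prove this for every tour, I would define $g$ to return the cut read from majority orientations. I would then prove directly that at a local optimum there are no defects, by a charging argument that compares the tour with a nearby canonical tour. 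Second, for canonical local optima, I would argue that $g$ reads off the cut and that the flip correspondence above gives flip-local optimality. Choosing the weight scale so that gadget-internal penalties dominate the cut weights makes the case analysis of the first step manageable.
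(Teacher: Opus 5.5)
\textbf{The proposal has genuine gaps: the construction behind $k=15$ is missing, and the planned handling of non-edges would fail.}

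\textbf{Missing construction.} Your overall plan matches the paper's. You reduce from Max-Cut/Flip on graphs of maximum degree five, identify canonical (``standard'') tours with cuts, turn an improving flip into an improving swap, and show that every local optimum is canonical and contains no non-edge. Your ``at most 15 edges'' is enough for plain \textsf{PLS}-hardness, since for $k\ge 15$ a $k$-Opt local optimum is also 15-Opt optimal; the paper's XOR gadgets pad each flip to exactly $k$ edges only to get tightness. But nothing in your proposal supports the number 15. ``A few $G$-edges between two vertex gadgets'' does not explain how the tour cost can depend on whether $u$ and $v$ lie on the \emph{same} side. The paper needs a parity gadget with exactly four standard subtours, weighted $\sigma$ in cases (1),(4) and $\delta$ in cases (2),(3). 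Flipping $x$ then involves $(d(x)+2)+3d(x)+4d^+(x)\le 30=2\cdot 15$ edges. That count relies on a strict $(4,2)$-gadget on oriented $H$-edges, a flexible $(2,2)$-gadget on the rest, and a split of $E(H)$ into a star forest plus an orientation of out-degree at most two.

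\textbf{The converse would fail as planned.} With a uniform non-edge weight $M$, a non-edge can only disappear through a move that lowers the number of non-edges. A bounded move of that kind need not exist; you note this yourself, and it is exactly Krentel's gap. Unspecified ``slack paths'' add Hamiltonian cycles to $G$, which endangers both the cut correspondence and the value of $k$. Penalizing non-canonical traversals and repairing them locally also fails. In constructions like the paper's, a non-strict parity gadget at $x$ forces another non-strict gadget at $x$, and it is also non-strict at its other endpoint. So defects can run around a cycle of gadgets, and removing them can need unboundedly many edge changes.

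\textbf{The two ideas you are missing.}
\begin{itemize}
\item \emph{Non-canonical Hamiltonian cycles of $G$ are made impossible, not repairable.} The strict gadget has no non-standard subtours, and flexible gadgets are placed only on a forest. Two forcing rules then propagate strictness from the leaves: strict at one endpoint implies strict at the other, and $d(x)-1$ strict gadgets at $x$ force the last one. Hence every tour of $G$ is standard.
\item \emph{Non-edges get graded weights $M\cdot 4^{\max\{\lambda(v),\lambda(v')\}}$.} The priority order $\lambda$ is chosen carefully: degree-two vertices first, then the $x_\ell$, then XOR-gadget vertices and $x_r$, then parity-gadget vertices in the order $\psi$. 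Compare $\tau$ with the standard tour $\tau^*$ read off from its left first-set edges. At (or right next to) the highest-priority vertex where they differ, there is an improving $3$-swap. It improves even though it may \emph{create} new, lower-priority non-edges.
\end{itemize}
Neither your potential (number of non-edges or defects) nor the majority-orientation or charging fallback produces such a move.
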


Our proof of \cref{thm:PLS_complete} is based on a new reduction from the bounded degree Max-Cut problem to TSP (see \cref{sec:reduction}). 
To ensure that we achieve a \pls-reduction, we need to take care of the order in which the gadgets are plugged together in our construction. 
We show the correspondence between the solutions of the Max-Cut instance and certain solutions (called \emph{standard tours}) of the constructed graph (see \cref{sec:correspondence}).
Further, we show in \cref{sec:flexible} that with care, we can guarantee that the constructed graph has no other solutions.
We then in~\cref{sec:no_infty_opt} explain how to assign specific weights to the non-edges (i.e., extra edges added to the constructed graph just to make the graph complete) to prove that no local optimum can contain such an edge. 
We achieve this by defining a weight assignment that exploits the special structure of the TSP instance resulting from our \pls-reduction.  
This is the first rigorous proof of such a result for the \kopt algorithm and there seems 
not to be a generic way to prove it for arbitrary TSP instances (as for example those constructed by Krentel~\cite{Kre1989}). 

As observed by Krentel~\cite{Kre1989}, any TSP instance can be made metric by adding a sufficiently large constant to all edge weights. 
This has no effect on the behavior of the \kopt algorithm. 
Hence, \cref{thm:PLS_complete} also holds for Metric TSP, the version of the TSP where distances have to satisfy the triangle inequality.

\paragraph{Comparison with the preliminary conference version.}
We announced a preliminary set of results at ICALP 2024~\cite{HHH2024}, including the \pls-completeness of TSP/\kopt for $k \geq 17$.
In the current paper, we strengthen this result in two ways: we extend it to $k \geq 15$ and present  a tight \pls-reduction (see \cref{def:tight_pls_complete} below), a stronger notion than the original \pls-reduction.
The conference version also includes the \defi{all-exp property} of TSP/\kopt for $k \geq 5$ (i.e., that there exist infinitely many pairs of instances and solutions such that the \kopt algorithm starting from these solutions always takes exponential time).
However, this result was later improved to $k \geq 3$ using a different reduction~\cite{HHH2026}.
Hence, the all-exp property will be discussed in a separate publication.

\section{Preliminaries}
\label{sec:preliminaries}

A $uv$-path is a path that has the vertices $u$ and $v$ as endpoints.

\subsection{Local search problems and the class \pls}
A \defi{local search problem} $P$ is an optimization problem that consists of a set of instances $D_{P}$, a finite set of (feasible) solutions $F_{P}(I)$ for each instance $I\in D_{P}$, an objective function $f_{P}$ that assigns an integer value to each instance $I\in D_{P}$ and solution $s\in F_P(I)$, and a neighborhood $N_{P}(s,I)\subseteq F_{P}(I)$ for each solution $s\in F_{P}(I)$. 
The size of every solution $s \in F_{P}(I)$ is bounded by a polynomial in the size of $I$. 
The goal is to find a \defi{locally optimal solution} for a given instance $I$; that is, a solution $s \in F_{P}(I)$, such that no solution $s' \in N_{P}(s,I)$ yields a better objective value than $f_P(s,I)$.
Formally, this means, for all $s'\in N_{P}(s,I)$, $f_{P}(s,I)\leq f_{P}(s',I)$ if $P$ is a minimization problem and $f_{P}(s,I)\geq f_{P}(s',I)$ if $P$ is a maximization problem.

The \defi{transition graph} $T_I$ of an instance $I$ of a local search problem is a directed graph such that the vertices are the solutions of $I$, and an edge $(s, s')$ exists if and only if $s'$ is a neighbor of $s$ with a better objective value.

We now formalize some classical notions related to the complexity class \pls.

\begin{definition}[The class~\pls~\cite{JPY1988}]
	A local search problem $P$ is in the class \defi{\pls}, if there are three polynomial time algorithms $A_{P}, \ B_{P}, \ C_{P}$ such that 
	\begin{itemize}
		\item Given an instance $I \in D_P$ , $A_{P}$ returns a solution  $s \in F_{P}(I)$;
		\item Given an instance $I \in D_{P}$ and a solution $s\in  F_{P}(I)$, $B_{P}$ computes the objective value $f_{P}(s,I)$ of $s$; and 
		\item Given an instance $I \in D_P$ and a solution $s \in F_{P}(I)$, $C_{P}$ returns a neighbor of $s$ with strictly better objective value, if it exists, and ``locally optimal", otherwise.
	\end{itemize}
\end{definition}

The \defi{standard local search algorithm}~\cite{JPY1988} for an instance $I$ of a problem $P \in \pls$
proceeds as follows.
It starts with some initial solution $s \in F_{P}(I)$ (using algorithm $A_P$).
Then it iteratively visits a neighbor with better objective value (using algorithm $C_P$), until it reaches a local optimum. 
If a solution has more than one better neighbor, the algorithm has to choose one by some prespecified rule, often referred to as a \defi{pivot rule}.

A local search problem~$P$ has the \defi{all-exp} property, if there are infinitely many pairs of an instance~$I$ of $D_P$ and an initial solution $s \in F_P(I)$, for which the standard local search algorithm always needs an exponential number of iterations 
for all possible pivot rules.

\begin{definition}[\pls-reduction~\cite{JPY1988}]
\label{def:pls_complete}
	A \defi{\pls-reduction} from a problem $P \in \pls$ to a problem $Q \in \pls$ is a pair of polynomial-time computable functions $h$ and $g$ that satisfy:
	\begin{enumerate}
		\item Given an instance $I \in D_{P}$, $h$ computes an instance $h(I) \in D_{Q}$; and
		\item Given an instance $I \in D_{P}$ and a solution $s_q \in F_{Q}(h(I))$, $g$ returns a solution $s_p \in F_{P}(I)$ such that if $s_q$ is a local optimum for $h(I)$, then $s_p$ is a local optimum for $I$. 
	\end{enumerate} 
	A problem $Q\in \pls$ is \defi{\pls-complete} if for every problem~$P \in \pls$, there exists a \pls reduction from $P$ to $Q$.
\end{definition}

\begin{definition}[Tight \pls-reduction~\cite{schaffer1991}]
\label{def:tight_pls_complete}
	A \defi{tight \pls-reduction} from a problem $P \in \pls$ to a problem $Q \in \pls$ is a \pls-reduction $(h,g)$ from $P$ to $Q$ such that for every instance $I \in D_P$, we can choose a subset $R$ of $F_Q(h(I))$ that satisfies:
	\begin{enumerate}
		\item $R$ contains all local optima of $h(I)$;
		\item For every solution $s_p \in F_{P}(I)$, we can construct in polynomial time a solution $s_q \in R$ so that $g(I,s_q) = s_p$;
		\item If there is a path from $s_q \in R$ to $s'_q \in R$ in the transition graph $T_{h(I)}$ such that the path contains no other element of $R$, then either there is an edge from the solution $g(I,s_q)$ to the solution $g(I,s'_q)$ in the transition graph $T_I$ or these two solutions are identical.
	\end{enumerate} 
	A problem $Q\in \pls$ is \defi{tightly \pls-complete} if for every problem~$P \in \pls$, there exists a tight \pls-reduction from $P$ to $Q$.
\end{definition}

Sch\"affer and Yannakakis~\cite{schaffer1991} have shown that all tightly PLS-complete problems
have the all-exp property. See the book of Michiels, Aarts, and Korst~\cite{MAK2007} for more background on tightly PLS-complete problems.

\subsection{TSP/\texorpdfstring{\kopt}{k-opt}}
A \defi{Hamiltonian cycle} or a \defi{tour} of an undirected graph is a cycle that contains all vertices of the graph.

A TSP instance is a tuple $(G, c)$, where $G$ is a complete undirected graph $(V, E)$, and $c:E\to \Z_{\geq 0}$ is a function that assigns a nonnegative weight to each edge of $G$.
The goal is to find a tour of $G$ that minimizes the sum of edge weights in the tour.
The definition of the class PLS requires that we have a polynomial time algorithm to find \emph{some}
solution. For complete graphs such an algorithm certainly exists. If the graph is not complete then 
because of the NP-completeness of the Hamiltonian cycle problem we do not know such an algorithm. 

A \defi{swap} is a tuple $(E_1, E_2)$ of subsets $E_1, E_2 \subseteq E$, $|E_1| = |E_2|$.
We say that it is a swap of $|E_1|$ edges.
If $|E_1| \leq k$ for some $k$, then we call it a \defi{$k$-swap}.
Performing a swap $(E_1, E_2)$ from a subgraph $G'$ of $G$ refers to the act of removing $E_1$ from $G'$ and adding $E_2$ to $G'$.
We also call it swapping $E_1$ for $E_2$ in $G'$.
Given a tour~$\tau$, a swap $(E_1, E_2)$ is \defi{improving} for $\tau$, if after swapping $E_1$ for $E_2$ in $\tau$, we obtain a tour with a lower total edge weight.

A \defi{($k$-)swap sequence} is a sequence $L = (S_1, \dots, S_{\ell})$, such that each $S_i$ is a ($k$-)swap.
For a tour $\tau$, we denote by $\tau^L$ the subgraph obtained from $\tau$ by performing $S_1, \dots, S_{\ell}$ in their order in $L$.
$L$ is \defi{improving} for a tour~$\tau$ if each $S_i$ is an improving ($k$-)swap for $\tau^{(S_1, \ldots, S_{i-1})}$.

The local search problem TSP/\kopt corresponds to TSP with the \kopt neighborhood (that is, the neighbors of a tour~$\tau$ are those tours that can be obtained from~$\tau$ by a $k$-swap). 
The \kopt algorithm is then the standard local search algorithm for this problem, and an execution of the algorithm corresponds to an improving $k$-swap sequence.

\subsection{Max-Cut/Flip}
A Max-Cut instance is a tuple $(H, w)$, 
where $H$ is an undirected graph $(V,E)$ and $w: E \to \Z$ is a function that assigns weights to the edges of $H$.
A \defi{cut} $(V_1, V_2)$ of $H$ is a partition of the vertices of $H$ into two disjoint sets $V_1$ and $V_2$, which we call the \defi{first set} and \defi{second set} of the cut, respectively.
The \defi{cut-set} of a cut $(V_1, V_2)$ is the set of edges~$xy \in E$ such that $x \in V_1$ and $y \in V_2$.
The goal of Max-Cut is to find a cut to maximize the \defi{value} of the cut, that is the total weight of the edges in the cut-set.

Given a Max-Cut instance and an initial cut, the \defi{flip} of a vertex is a move of that vertex from a set of the cut to the other.
The flip of a vertex is \defi{improving}, if it results in an increase in the value of the cut.
For a cut~$\sigma$, its \defi{flip neighborhood} is the set of all cuts obtained from~$\sigma$ by an improving flip.
The Max-Cut/Flip problem is the local search problem that corresponds to the Max-Cut problem with the flip neighborhood.
We call its standard local search algorithm the \defi{Flip algorithm}.
A \defi{flip sequence} is a sequence $(v_1, \dots, v_{\ell})$ of vertices of $H$.
A flip sequence is \defi{improving}, if flipping the vertices in the order in the sequence increases the value of the cut at every step.
In other words, an improving flip sequence corresponds to an execution of the Flip algorithm.

Schäffer and Yannakakis~\cite{schaffer1991} proved that Max-Cut/Flip is tightly \pls-complete.
Later, Els\"{a}sser and Tscheuschner~\cite{Elsaesser_Max_Cut_5} showed that Max-Cut/Flip is \pls-complete, even when restricted to graphs of maximum degree five.
However, their reduction is not a tight \pls-reduction, and hence, it is currently not known if Max-Cut/Flip is tightly \pls-complete for constant maximum degree.
Further, note that the Flip algorithm on graphs with maximum degree at most three always terminates after a polynomial number of iterations~\cite{Poljak1995}.
Hence, it is also open whether Max-Cut/Flip is \pls-complete for graphs with maximum degree four; however, it is known that the all-exp property holds in this case~\cite{Monien_Max_Cut_4, Michel_Max_Cut_4}.

\section{The main reduction}
\label{sec:reduction}
Our proof of~\cref{thm:PLS_complete} is based on a new reduction from Max-Cut/Flip
to TSP/\kopt which we will explain in this section. Our reduction will work for all $k \ge 15$. 

Let $(H, w)$ be a Max-Cut instance of maximum degree five.
In order to avoid confusion with the vertices and edges in the TSP instance later on, we use \defi{$H$-vertices} and \defi{$H$-edges} for the vertices and edges of $H$.
We denote by $n$ and $m$ the number of $H$-vertices and $H$-edges, respectively.

We construct from $H$ the corresponding TSP instance as follows.
We start with a cycle of $3(n+m)$ edges.
We assign $n+m$ of the edges, one to each of the $n$ $H$-vertices and the $m$ $H$-edges, such that any two assigned edges have distance at least two on the cycle (see Fig.~\ref{fig:reduction}).

\begin{figure}[ht]
    \centering
    \begin{tikzpicture}[scale=0.75]
    \def\dist{1.35}
    \def\ynodes{-0.75}
    \def\ylabels{-1.75}
    \node[label=above: $x_{\ell}$]    (xl)  at (2*\dist, 0) {};
    \node[label=above: $x^{}_1$] (x1)  at (3*\dist, \ylabels) {};
    \node[label=above: $x_1'$]   (x1') at (4*\dist, \ylabels) {};
    \node[label=above: $x^{}_2$] (x2)  at (5*\dist, \ylabels) {};
    \node[label=above: $x_2'$]   (x2') at (6*\dist, \ylabels) {};
    \node[label=above: $x^{}_3$] (x3)  at (7*\dist, \ylabels) {};
    \node[label=above: $x_3'$]   (x3') at (8*\dist, \ylabels) {};
    \node[label=above: $x_r$]    (xr)  at (9*\dist, 0) {};
    
    \def\eps{0.01}
    \foreach \x in {0,1,2,9,10,11}
        \fill (\x*\dist, 0) circle (0.75mm);
    \foreach \x in {3,...,8}
        \fill (\x*\dist, \ynodes) circle (0.75mm);
    \draw[black, line width = 0.2mm] (0,0)        -- ((11*\dist, 0) 
                                     (2*\dist, 0) -- (3*\dist, \ynodes)
                                     (4*\dist, \ynodes) -- (5*\dist, \ynodes)
                                     (6*\dist, \ynodes) -- (7*\dist, \ynodes)
                                     (8*\dist, \ynodes) -- (9*\dist, 0);  
    \draw[black, line width = 0.25mm, dashed] (3*\dist, \ynodes) -- (4*\dist, \ynodes)
                                              (5*\dist, \ynodes) -- (6*\dist, \ynodes)
                                              (7*\dist, \ynodes) -- (8*\dist, \ynodes);  
    \end{tikzpicture}
    \caption{The first-set edge $x_{\ell} x_r$ and the second-set path $(x_{\ell}, x_1, x'_1, x_2, x'_2, x_3, x'_3, x_r)$ of an $H$-vertex $x$ of degree three. The dashed edges are gateways. The other edges of the second-set path are doors.}
    \label{fig:vertex_gadget}
\end{figure}
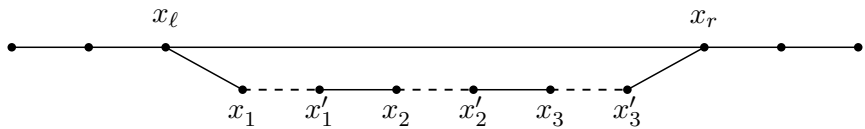

Next, in the cycle consider an edge that is assigned to an $H$-vertex $x$.
(Refer to \cref{fig:vertex_gadget} for an illustration of the following concepts.)
We label the two incident vertices of this edge $x_{\ell}$ and $x_r$, representing the left and the right vertex of the edge.
Let $d(x)$ be the degree of $x$ in $H$.
We add a new path of length $2d(x)+1$ to connect $x_{\ell}$ and $x_r$.
We call this new path the \defi{second-set path} of $x$, while we call the original edge that was assigned to $x$ the \defi{first-set edge} of $x$.
The idea is that the tour can connect $x_{\ell}$ and $x_r$ either via the first-set edge or via the second-set path.
This simulates whether the $H$-vertex $x$ is in the first set or second set of the cut for the Max-Cut problem.
Let $x_{\ell}, x_{1}, x'_{1}, \dots, x_{d(x)}, x'_{d(x)}, x_r$ be the labels of the vertices along the second-set path.
For $i \in \{1, \dots, d(x)\}$, we call the edge $x_i x'_i$ a \defi{gateway} of $x$.
The other edges of the second-set path are called the \defi{doors} of $x$.
In other words, we have alternating doors and gateways along the second-set path, with doors at both ends of the path.

For each $H$-edge $xy$, we call the edge in the cycle of length $3(n+m)$  assigned to $xy$ the 
\defi{$xy$-edge}.
We remove a gateway of $x$, a gateway of $y$, and the $xy$-edge, and we connect the six incident vertices of the three removed edges by a \emph{parity gadget}. 

The purpose of this parity gadget is to simulate the contribution of the weight of edge $xy$ to the objective of the Max-Cut problem, based on whether $x$ and $y$ are in the same set. We will formally define parity gadgets in \cref{sec:parity-gadget}. There we will also precisely describe how a parity gadget will be connected to the rest of the graph. We call this construction step the 
\emph{assigning} of a parity gadget. 

Finally, for each $H$-vertex~$x$, we assign an \emph{XOR gadget} to the first-set edge of~$x$ and the door of $x$ incident to $x_r$.
The purpose of the XOR gadget is to make sure that we can simulate only one flip in $H$ by a $k$-swap in the new graph.
The formal definitions of the XOR gadget and its assignment are discussed in \cref{sec:xor_gadgets}.

\begin{figure}[ht!]
    \centering
    \includegraphics[width=\hsize]{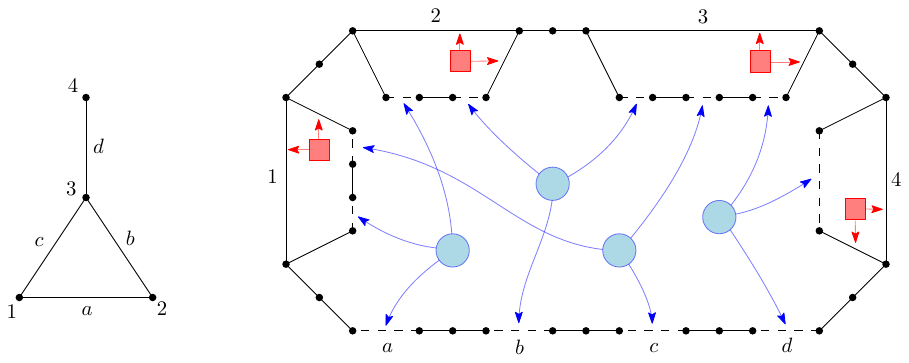}
    \caption{An example of our reduction from a Max-Cut instance (left figure) to a TSP instance (right figure). The parity gadgets are indicated by the blue circles attached to three edges each.
    The XOR gadgets are indicated by red boxes attached to two edges each.}
    \label{fig:reduction}
\end{figure}

The details of how we assign the parity and XOR gadgets are discussed in \cref{sec:equip}.
Except for certain edges in the parity gadgets, which we will specify later, the other edges have weight zero, including the edges in the XOR gadgets, the initial cycle, and the doors.
Let $G$ and $c$ be the resulting graph and weight after all the operations above (see \cref{fig:reduction} for an example).

As a TSP instance requires a complete graph, we add the remaining edges with some suitable large weights, which will be specified in \cref{sec:no_infty_opt}, to obtain the final TSP instance~$(G_{\infty}, c_{\infty})$.
We call these additional edges \defi{non-edges}. 
Our choice of weights for the non-edges will guarantee that  
if we start with a tour that has all edges in $G$, then the \kopt algorithm will never
return a tour that contains a non-edge.

\subsection{Parity gadgets}
\label{sec:parity-gadget}
We now define the parity gadgets, that will have two variants: the \defi{strict gadget} and the \defi{flexible gadget}, defined as the graphs in Figures~\ref{fig:strict_gadget_1} and \ref{fig:flexible_gadget}, respectively.
First we describe some properties and terminologies of these gadgets.

\pgfdeclarelayer{back}
\pgfsetlayers{back,main}

	\tikzset{redline/.style ={draw=red, line width = 1.613}}
	\tikzset{blackline/.style ={draw=black, line width = 0.733}}
	\tikzset{reddashedline/.style ={draw=red, dashed, line width = 0.3666}}
	\tikzset{blackdashedline/.style ={draw=black, dashed, line width = 0.3666}}
	\tikzstyle{vertex}=[black,circle,fill,minimum size=3.666pt, inner sep = 0pt, outer sep = -1] 

\def\nodesnadfixededgesofparitygadget{
    \node[vertex] (a') at (0,4) {};
    \node[vertex] (a) at (4,4) {};
    \node[vertex, label=below: $X$] (X) at (2,2) {}; 
    \node[vertex, label={[shift={(0,0.15)}]right: $X'$}] (X') at (2,6) {};
    \node[vertex] (b') at (1,5) {};
    \node[vertex] (b) at (3,3) {};
    \node[vertex, label=below left: $Z$] (Z) at (4,0) {};
    \node[vertex, label=below right: $Z'$] (Z') at (7,0) {};
    \node[vertex, label=above: $Y'$] (Y') at (7,4) {};
    \node[vertex, label=above: $Y$] (Y) at (9,2) {};
    \node (oZ) at (4,-2.2) {};
    \node (oZ') at (7,-2.2) {};
    \node (oY) at (10.5, 2) {};
    \node (oY') at (10.5,4) {};
    \node (oX) at (-1, 2) {};
    \node (oX') at (-1,6) {};

    \begin{pgfonlayer}{back}
    \draw[redline] (a') -- (X) node[vertex, midway] {};
    \draw[redline] (a) -- (X') node[vertex, midway] {};
    \draw[redline] (b) -- (b') node[vertex, midway] {};
    \draw[redline] (Y) -- (Y') node[vertex, midway] {};
    \draw[reddashedline] (oZ) -- (Z) (oZ') -- (Z');
    
    \end{pgfonlayer}

    \coordinate[label=below: $\sigma$] (sigma) at ($(Z)!1/2!(Z')$);
    \coordinate[label=above: $\sigma$] (sigma) at ($(Y')!1/2!(a)$);
    \coordinate[label=above: $\delta~$] (delta) at ($(Y')!3/4!(Z)$);
    \coordinate[label=above: $~~\delta$] (delta) at ($(a)!3/4!(Z')$);
}    

\begin{figure}[ht]
	\centering

\begin{tikzpicture}[scale=0.4] 
\begin{scope}[shift={(0,0)}] 
    \nodesnadfixededgesofparitygadget
    \begin{pgfonlayer}{back}
    \draw[blackline] (b) -- (a) -- (Z') -- (Z) -- (Y');
    \draw[blackline] (X) -- (b) (Y')--(a) (X') -- (b');
    \draw[blackline] (b') -- (a');
    \draw[blackline] (Y) -- (Z');
    \draw[blackline, out = 180, in =  -90] (Z) to  (a');
    \draw[blackdashedline] (oY) -- (Y) (oY') -- (Y') (oX) -- (X) (oX')--(X');
    \coordinate[label=above: $a$] (sigma) at (a);
    \coordinate[label=below: $b$] (sigma) at (b);
    \coordinate[label=left: $a'$] (sigma) at (a');
    \coordinate[label=left: $b'$] (sigma) at (b');
    \end{pgfonlayer}
 \end{scope}

\begin{scope}[shift={(13,0)}] 
    \nodesnadfixededgesofparitygadget
    \begin{pgfonlayer}{back}
    \draw[blackline] (b) -- (a) -- (Z') -- (Z) -- (Y');
    \draw[redline] (X) -- (b) (Y')--(a) (X') -- (b');
    \draw[blackline] (b') -- (a');
    \draw[redline] (Y) -- (Z');
    \draw[redline, out = 180, in =  -90] (Z) to  (a');
    \draw[blackdashedline] (oY) -- (Y) (oY') -- (Y') (oX) -- (X) (oX')--(X');
    \draw (5.5, -3) node {(1)};
    \end{pgfonlayer}
 \end{scope}
 
\begin{scope}[shift={(26,0)}] 
    \nodesnadfixededgesofparitygadget
    \begin{pgfonlayer}{back}
    \draw[blackline] (X) -- (b) (a) -- (Z') -- (Z)  (Y') -- (a);
    \draw[redline] (a') -- (b') (b) -- (a)  (Z')  (Z) -- (Y') (a);
    \draw[blackline] (X') -- (b');
    \draw[redline] (Y) -- (Z');
    \draw[blackline, out = 180, in =  -90] (Z) to  (a');
    \draw[blackdashedline] (oY) -- (Y) (oY') -- (Y');
    \draw[reddashedline] (oX) -- (X) (oX')--(X');
    \draw (5.5, -3) node {(2)};
    \end{pgfonlayer}
  \end{scope}
\begin{scope}[shift={(6.5,-12)}] 
    \nodesnadfixededgesofparitygadget
    \begin{pgfonlayer}{back}
    \draw[blackline] (b) -- (a) (Y) -- (Z') -- (Z) -- (Y');
    \draw[redline] (X) -- (b)  (X') -- (b') (a) -- (Z');
    \draw[blackline] (b') -- (a') (Y')--(a);
    \draw[redline, out = 180, in =  -90] (Z) to  (a');
    \draw[blackdashedline] (oX) -- (X) (oX')--(X');
    \draw[reddashedline] (oY) -- (Y) (oY') -- (Y');
    \draw (5.5, -3) node {(3)};
    \end{pgfonlayer}
 \end{scope}
\begin{scope}[shift={(19.5,-12)}] 
    \nodesnadfixededgesofparitygadget
    \begin{pgfonlayer}{back}
    \draw[blackline] (X) -- (b) (a) -- (Z') (Z) -- (Y') -- (a);
    \draw[redline] (a') -- (b') (b) -- (a) (Z') -- (Z);
    \draw[blackline] (Y) -- (Z') (X') -- (b');
    \draw[blackline, out = 180, in =  -90] (Z) to  (a');
    \draw[reddashedline] (oY) -- (Y) (oY') -- (Y');
    \draw[reddashedline] (oX) -- (X) (oX')--(X');
    \draw (5.5, -3) node {(4)};
    \end{pgfonlayer}
  \end{scope}
  \end{tikzpicture}
    \caption{The strict gadget (top left) and the subtours (1)--(4). Bold red edges in the top left are edges always in a subtour; those edges in the other panels show the subtours. Dashed edges are external edges (i.e. edges that are incident to the gadget in the overall construction, but not part of the gadget itself). The edges $Y'a$ and $ZZ'$ are same-set edges with same-set weight $\sigma$, while $Z'a$ and $Y'Z$ are different-set edges with different-set weight $\delta$.}
    \label{fig:strict_gadget_1}
\end{figure}
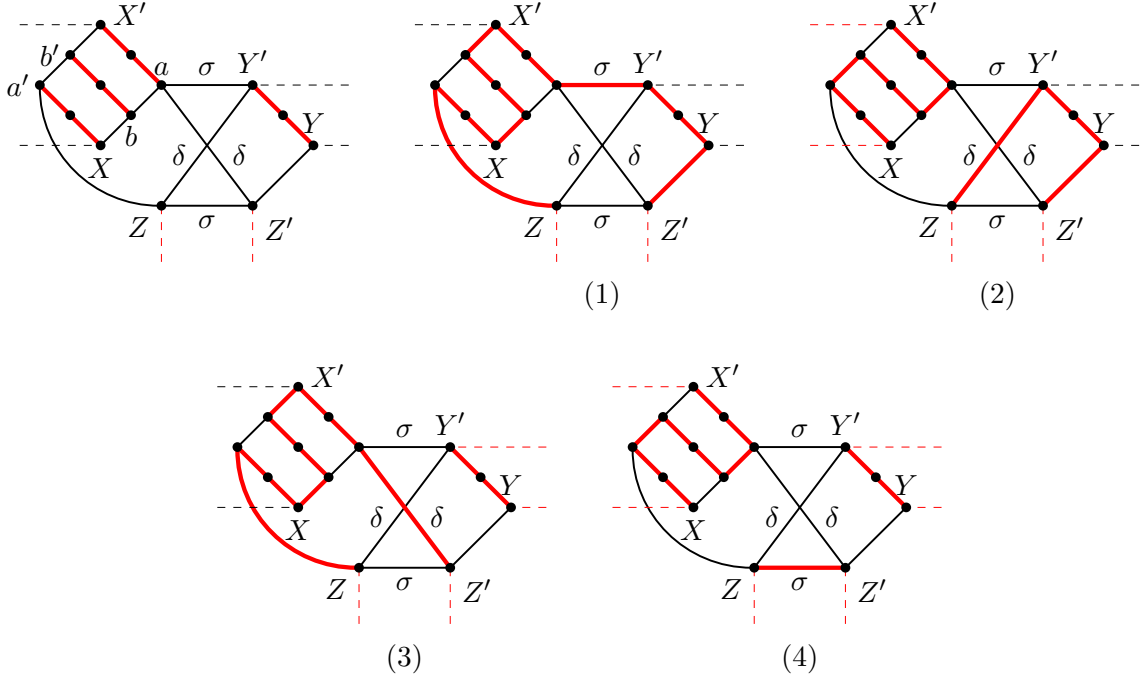

\begin{figure}[ht]
    \centering

\begin{tikzpicture}[scale=0.666]
\small

\newcommand{\vrad}{0.8mm}

\newcommand{\commonpart}{%
  \coordinate[label=left: $X$]  (X)  at (0,3);
  \coordinate[label=left: $Y$]  (Y)  at (0,0);
  \coordinate[label=left: $Z$]  (Z)  at (0,2);
  \coordinate[label=right:$X'$] (Xp) at (1,2);
  \coordinate[label=right:$Y'$] (Yp) at (1,1);
  \coordinate[label=left: $Z'$] (Zp) at (0,1);
  \coordinate (R) at (1,3);
  \coordinate (S) at (1,0);
  \draw[thick, red] (X)--(R)--(Xp) (Yp)--(S)--(Y);
}

\newcommand{\fillverts}[2][]{%
  \foreach \v in {#2} { \fill[#1] (\v) circle (\vrad); }%
}

\commonpart
\draw[thick]  (Y) -- (Zp) (Z) -- (X);
\draw[thick, dashed] (Zp) -- (Z)  (Xp) -- (Yp);
\draw[thick, dotted] (Yp) -- (Z) (Xp)--(Zp);
\fillverts{X,Xp,Y,Yp,Z,Zp,R,S}

\begin{scope}[shift={(3,0)}]
  \commonpart
  \draw[thick] (Yp)  (S)  (Y) (Zp) (Z) (X) (R) (Xp);
  \draw[thick, dotted] (Yp) -- (Z) (Xp)--(Zp);
  \draw[thick, dashed] (Zp) -- (Z);
  \draw[thick, red] (Z) -- (X) (Y)--(Zp);
  \draw[thick, dashed, red] (Xp) -- (Yp);
  \draw (0.5, -0.8) node {(1)};
  \fillverts{X,Xp,Y,Yp,R,S}
  \fillverts[red]{Z,Zp}
\end{scope}

\begin{scope}[shift={(6,0)}]
  \commonpart
  \draw[thick] (X) -- (Z) (Zp)  (Yp) (Xp);
  \draw[thick, dotted] (Xp)--(Zp);
  \draw[thick, dashed] (Zp) -- (Z)  (Xp) -- (Yp);
  \draw[thick, red] (Y)--(Zp);
  \draw[thick, dotted, red] (Yp) -- (Z);
  \draw (0.5, -0.8) node {(2)};
  \fillverts[red]{X,Xp,Z,Zp}
  \fillverts{Y,Yp,R,S}
\end{scope}

\begin{scope}[shift={(9,0)}]
  \commonpart
  \draw[thick] (Z) (Zp) -- (Y)  (Yp)  (Xp);
  \draw[thick, dotted] (Yp) -- (Z);
  \draw[thick, dashed] (Zp) -- (Z)  (Xp) -- (Yp);
  \draw[thick, red] (Z)--(X);
  \draw[thick, dotted, red] (Xp)--(Zp);
  \draw (0.5, -0.8) node {(3)};
  \fillverts{X,Xp,R,S}
  \fillverts[red]{Y,Yp,Z,Zp}
\end{scope}

\begin{scope}[shift={(12,0)}]
  \commonpart
  \draw[thick] (Yp)  (Xp) (Zp) -- (Y)  (X) -- (Z);
  \draw[thick, dotted] (Yp) -- (Z) (Xp)--(Zp);
  \draw[thick, dashed] (Xp) -- (Yp);
  \draw[thick, dashed, red] (Zp) -- (Z);
  \draw (0.5, -0.8) node {(4)};
  \fillverts[red]{X,Xp,Y,Yp,Z,Zp}
  \fillverts{R,S}
\end{scope}

\begin{scope}[shift={(15,0)}]
  \commonpart
  \draw[thick] (Yp)  (S)  (Y) (Zp) (Z) (X) (R) (Xp)  (Z) -- (X)  (Y)--(Zp);
  \draw[thick, dotted, red] (Yp) -- (Z) (Xp)--(Zp);
  \draw[thick, dashed] (Zp) -- (Z)   (Xp) -- (Yp);
  \fillverts[red]{X,Y,Z,Zp}
  \fillverts{Xp,Yp,R,S}
\end{scope}

\begin{scope}[shift={(18,0)}]
  \commonpart
  \draw[thick] (X) -- (Z) (Zp)  (Yp) (Xp) (Zp) -- (Y);
  \draw[thick, dotted] (Xp)--(Zp) (Yp) -- (Z);
  \draw[thick, dashed, red] (Zp) -- (Z) (Xp) -- (Yp);
  \fillverts[red]{X,Y,Z,Zp}
  \fillverts{Xp,Yp,R,S}
\end{scope}

\begin{scope}[shift={(21,0)}]
  \commonpart
  \draw[thick, dotted] (Xp)--(Zp) (Yp) -- (Z) (Zp) -- (Z) (Xp) -- (Yp) ;
  \draw[thick, red] (X) -- (Z) (Zp)  (Yp) (Xp) (Zp) -- (Y);
  \fillverts[red]{Xp,Yp,Z,Zp}
  \fillverts{X,Y,R,S}
\end{scope}

\end{tikzpicture}

    \caption{The flexible gadget (left hand figure). The red edges in the left figure show the four edges that have to be contained in every subtour. The next four figures show the four possibilities for subtour~(1) -- subtour~(4) to cover the vertices of the flexible gadget by disjoint paths (red edges and red endpoints). The last three figures are the three non-standard subtours. The dashed edges are the same-set edges, the dotted edges are the different-set edges, and the solid edges are the remaining edges.}
    \label{fig:flexible_gadget}
\end{figure}

A parity gadget contains six distinct vertices labeled $X, X', Y, Y', Z, Z'$.
A \defi{subtour} of the parity gadget is a spanning subgraph $T$ such that
\begin{itemize}
    \item $T$ is a set of vertex disjoint paths of length at least one,
    \item $Z$ and $Z'$ are endpoints of some path(s) above, and
    \item The other endpoints of the paths are in the set $\{X,X', Y, Y'\}$.
\end{itemize}
Figures~\ref{fig:strict_gadget_1} and \ref{fig:flexible_gadget} show the existence of the following four subtours:
\begin{enumerate}
    \item[(1)] A $ZZ'$-path;
    \item[(2)] An $XX'$-path and a $ZZ'$-path;
    \item[(3)] A $YY'$-path and a $ZZ'$-path; or
    \item[(4)] An $XX'$-path, a $YY'$-path and a $ZZ'$-path.
\end{enumerate}
We call them \defi{subtour (1)}, \defi{subtour (2)}, \defi{subtour (3)}, and \defi{subtour (4)}, respectively.
They are the \defi{standard subtours}, and all other subtours are \defi{non-standard}.

Figures~\ref{fig:strict_gadget_1} and \ref{fig:flexible_gadget} also show the partition of the edges into three subsets, the \defi{same-set edges}, the \defi{different-set edges}, and the remaining edges.
The same-set edges have the same weight, which we call the \defi{same-set weight}.
Similarly, the different-set edges have the same \defi{different-set weight}.
The remaining edges have weight zero. 

\begin{lemma}
\label{lem:unique_subtours}
	Firstly, for $i \in \{1, 2, 3, 4\}$, every parity gadget has a unique subtour $(i)$.
	Secondly, for every parity gadget, the total weights of subtours (1) and (4) are the same-set weight, while those of subtours (2) and (3) are the different-set weight.
	Lastly, the flexible gadget has exactly three non-standard subtours as depicted in \cref{fig:flexible_gadget},  while the strict gadget does not have any.
\end{lemma}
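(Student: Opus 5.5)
This is a finite verification on two fixed small graphs, so I will do a structured case analysis driven by degree constraints rather than any general argument. The structural fact I lean on throughout: in a subtour every vertex outside $\{X,X',Y,Y',Z,Z'\}$ is an interior vertex of a path, so it has degree exactly $2$. Every vertex of degree two in the gadget therefore has both incident edges forced. The six labelled vertices have degree $1$ or $2$. $Z$ and $Z'$ have degree exactly $1$, since they are endpoints and are not themselves among the other endpoints, which lie in $\{X,X',Y,Y'\}$.

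\textbf{Strict gadget.} The degree-two vertices are the midpoints of the four subdivided edges and the vertices $a', b'$ (after checking the picture). They force the red edges of the top-left panel of \cref{fig:strict_gadget_1}: the paths through the midpoints and $Y Y'$ with its midpoint. The remaining choices are few. $Z$ has degree $1$, so it uses exactly one of $ZZ'$, $ZY'$, $Za'$. $Z'$ likewise uses exactly one of $Z'Z$, $Z'a$, $Z'Y$. Vertex $a$ needs degree $2$ among $ab$, $aY'$, $aZ'$ plus the forced red edge. Vertices $b$ and $b'$ need degree $2$ similarly.

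I will branch on which edge $Z$ uses. In each branch I propagate degree constraints at $a, b, a', b'$ and exclude choices that create a cycle or leave a path with an endpoint outside the allowed set. The aim is to show that exactly the configurations (1)--(4) survive. I will also determine in each case which of $X, X', Y, Y'$ are endpoints, which identifies the subtour type. This gives uniqueness of each $(i)$ and nonexistence of non-standard subtours.

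The weights are then read off directly. Subtour (1) uses $Y'a$ only among the weighted edges, (4) uses $ZZ'$, (2) uses $ZY'$, and (3) uses $Z'a$. So (1) and (4) weigh $\sigma$ and (2) and (3) weigh $\delta$.

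\textbf{Flexible gadget.} Here $R$ and $S$ have degree $2$, forcing the paths $XRX'$ and $Y'SY$. The only other edges are $XZ$, $YZ'$, $ZZ'$, $X'Y'$, $Y'Z$, and $X'Z'$. I will enumerate the edge subsets of these six under the constraints:
\begin{itemize}
\item $\deg Z=\deg Z'=1$;
\item every labelled vertex has degree at most $2$;
\item there are no cycles;
\item every endpoint lies in the allowed set.
\end{itemize}
Since $\deg Z=1$, $Z$ picks one of $XZ$, $ZZ'$, $ZY'$, and symmetrically $Z'$ picks one of $YZ'$, $ZZ'$, $X'Z'$. That gives five cases: $ZZ'$ shared, or a $3\times 3$ grid reduced by excluding $ZZ'$. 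In each case the edge $X'Y'$ is either added or not, subject to the cycle and degree conditions. This yields a short list that I match against the seven panels of \cref{fig:flexible_gadget}.

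Uniqueness of the standard subtours follows because each of the path structures (1)--(4) appears exactly once in the list. The weights follow by counting same-set edges ($ZZ'$, $X'Y'$) and different-set edges ($Y'Z$, $X'Z'$) in each panel. Note that subtours (1) and (4) each contain exactly one same-set edge, and (2) and (3) each contain one different-set edge.

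\textbf{Main obstacle.} The step I expect to be most error-prone is the strict gadget enumeration. The adjacency must be read off the figure exactly, and one must confirm that no mixed choice at $a$ and $b$ produces a fifth valid configuration. I would first tabulate the neighbourhoods of $a, b, a', b'$ explicitly, then run the branching on $Z$'s edge.
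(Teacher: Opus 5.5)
Your proposal is correct and is essentially the paper's argument: fix the eight edges at the four subdivision vertices, then run a short degree-propagation case analysis. You branch on the edge used at $Z$, whereas the paper branches on whether $X$ is an endpoint (i.e.\ on $ab$ versus $Xb$); in each of your three branches the choice at $b$ is forced by a $6$-cycle, so your version also yields exactly (1)--(4). For the flexible gadget, your five cases combined with the choice of $X'Y'$ give exactly the seven panels of the figure.

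One slip to fix: $a'$ and $b'$ are not degree-two vertices of the gadget. Here $a'$ is adjacent to the midpoint of $a'X$, to $b'$ and to $Z$, and $b'$ is adjacent to the midpoint of $bb'$, to $X'$ and to $a'$. This does no damage, because the forced set you actually use (the red edges) and your choice lists at $Z, Z', a, b, b'$ are right.
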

\begin{proof}
    We start with the strict gadget as depicted in \cref{fig:strict_gadget_1}.
    Note that the eight edges in the strict gadget that are incident to a vertex of degree~2 
    must be contained in every subtour in the strict gadget.
    
    Firstly, suppose that $X$ is an endpoint of a path in a subtour~$\zeta$ of the strict gadget.
    Then~$Xb$ is not in~$\zeta$, and hence~$ab$ is in~$\zeta$.
    This then implies that $a'b'$ is in $\zeta$, because otherwise, we have a cycle in $\zeta$.
    It follows that the edges $Za'$, $Z'a$, and $aY'$ are not in a path from the subtour.
    By definition of a subtour, the vertices $Z$ and $Z'$ must be endpoints of some paths in the subtour. 
    If $ZZ'$ is an edge in the path cover, then we obtain the subtour~(4).
    If $ZZ'$ is not an edge then $ZY'$ and $Z'Y$ must be edges in the subtour. Thus we obtain subtour~(2).

    Secondly, assume that $X$ is not an endpoint of a path in $\zeta$.
    By similar argument as above, we have that $Xb$ and $X'b'$ are in $\zeta$, while $a'b'$ and $ab$ are not.
    This implies that $Za'$ is in the subtour. 
    If $Z'a$ is an edge in the path cover then we obtain subtour~(3). 
    If this is not the case we obtain subtour~(1).

    The two preceding paragraphs imply the non-existence of non-standard subtours
    and the uniqueness of the four standard subtours of the strict gadget which are depicted in \cref{fig:strict_gadget_1}.
    It is easy to verify that the weight of subtours~(1) and~(4) is the same-set weight, while that of subtours~(2) and~(3) is the different-set weight.
    
    Lastly, for the flexible gadget, we can easily verify the lemma statement, using a similar case distinction as above, and observing that $Z$ and $Z'$ have to be endpoints of some paths in the subtour.
\end{proof}

A \defi{change} from a subtour $T_1$ to a subtour $T_2$ is the act of removing $E(T_1) \setminus E(T_2)$ from $T_1$ and then adding $E(T_2) \setminus E(T_1)$.
We then say that the change \defi{involves} $|E(T_1) \setminus E(T_2)| + |E(T_2) \setminus E(T_1)|$ edges.
The uniqueness of the standard subtours in \cref{lem:unique_subtours} above motivates the following definition.

\begin{definition}
\label{def:rx_ry_pg}
    A parity gadget is an \defi{$(r_x,r_y)$-parity gadget}, if
\begin{itemize}
    \item A change either between subtour (1) and subtour (2) or between subtour (3) and subtour (4) involves $2r_x-1$ edges; 
    \item A change either between subtour (1) and subtour (3) or between subtour (2) and subtour (4) involves $2r_y-1$ edges; 
    \item A change either between subtour (1) and subtour (4) or between subtour (2) and subtour (3) involves more than $2\max\{r_x,r_y\}-1$ edges. 
\end{itemize}
\end{definition}

We call the changes in the first two conditions in the definition above the \defi{standard subtour changes}.

From \cref{lem:unique_subtours} and Figures~\ref{fig:strict_gadget_1} and \ref{fig:flexible_gadget}, we have the following observations.

\begin{observation}
\label{ob:num_exchanged_edges}
	The strict gadget is a (4,2)-parity gadget, while the flexible gadget is a (2,2)-parity gadget.
\end{observation}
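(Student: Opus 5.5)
The plan is a direct verification, based on the explicit description of the subtours from \cref{lem:unique_subtours}. Since each subtour $(i)$ is unique, the number of edges involved in a change from subtour $(i)$ to subtour $(j)$ is well defined. It equals $|E(T_i)\triangle E(T_j)|$. Edges that lie in every subtour (the bold red edges in the figures) cancel in this symmetric difference. So for each gadget I first list, for each $i\in\{1,2,3,4\}$, only the non-forced edges of subtour $(i)$. Then I compute the six pairwise symmetric differences and compare them with $2r_x-1$, $2r_y-1$ and $2\max\{r_x,r_y\}-1$ from \cref{def:rx_ry_pg}.

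\textbf{Strict gadget.} Reading off \cref{fig:strict_gadget_1} (and the case analysis in the proof of \cref{lem:unique_subtours}), the non-forced edges are as follows:
\begin{itemize}
\item subtour (1): $\{Xb, Y'a, X'b', YZ', Za'\}$;
\item subtour (2): $\{a'b', ab, ZY', YZ'\}$;
\item subtour (3): $\{Xb, X'b', aZ', Za'\}$;
\item subtour (4): $\{a'b', ab, ZZ'\}$.
\end{itemize}
The symmetric differences are then:
\begin{itemize}
\item (1)--(2): $4+3=7$ edges;
\item (3)--(4): $4+3=7$ edges;
\item (1)--(3): $\{Y'a,YZ'\}$ versus $\{aZ'\}$, so $3$ edges;
\item (2)--(4): $\{ZY',YZ'\}$ versus $\{ZZ'\}$, so $3$ edges;
\item (1)--(4): the sets are disjoint, so $5+3=8$ edges;
\item (2)--(3): the sets are disjoint, so $4+4=8$ edges.
\end{itemize}
Since $7=2\cdot4-1$, $3=2\cdot2-1$ and $8>7=2\max\{4,2\}-1$, the strict gadget is a $(4,2)$-parity gadget.

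\textbf{Flexible gadget.} The forced edges are the four edges of the paths $X R X'$ and $Y' S Y$. From \cref{fig:flexible_gadget}, the non-forced edges are:
\begin{itemize}
\item subtour (1): $\{ZX, YZ', X'Y'\}$;
\item subtour (2): $\{YZ', Y'Z\}$;
\item subtour (3): $\{ZX, X'Z'\}$;
\item subtour (4): $\{ZZ'\}$.
\end{itemize}
Each of the changes (1)--(2), (3)--(4), (1)--(3) and (2)--(4) involves $3=2\cdot 2-1$ edges. For example, (1)--(2) removes $ZX, X'Y'$ and adds $Y'Z$. The changes (1)--(4) and (2)--(3) each involve $4>3$ edges. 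Hence the flexible gadget is a $(2,2)$-parity gadget.

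The only real obstacle is correctly reading the edge sets of the subtours from the figures. I would guard against errors by checking each listed edge set explicitly. It should form the path system claimed by \cref{lem:unique_subtours} together with the forced edges. For instance, subtour (1) of the strict gadget gives the single path $Z a' X b\, b' X' a Y' Y Z'$, passing through the midpoints of the forced edges. I would also check that the same-set and different-set edges appear with the weights stated in \cref{lem:unique_subtours}. Once these sets are confirmed, the counting above is mechanical.
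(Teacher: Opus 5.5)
Your proposal is correct and is essentially the paper's own argument. The paper states this observation without a separate proof, as a direct consequence of \cref{lem:unique_subtours} and the two figures. Your edge lists agree with the figures, and your symmetric-difference counts are right: $7,7,3,3,8,8$ for the strict gadget and $3,3,3,3,4,4$ for the flexible gadget.
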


We now define what we mean by \defi{assigning} a strict/flexible gadget in the graph $G$ to the pair $(x,y)$ for an $H$-edge $xy$: Remove a gateway 
$x_i x'_i$ of $x$ and a gateway 
$y_j y'_j$
of $y$ and the $xy$-edge $\bar{Z}\bar{Z}'$.
Then add a copy of the strict/flexible gadget and identify the vertices $X, X', Y, Y', Z, Z'$ of the copy with the vertices 
$x_i, x_i', y_j, y_j', \bar{Z}, \bar{Z}'$, 
respectively. Recall that $w(xy)$ is the weight of the $H$-edge $xy$.
If $w(xy) \geq 0$, then we set the same-set weight of the parity gadget to $w(xy)$, and its different-set weight to zero. Otherwise, the same-set weight of the parity gadget is set to zero, and its different-set weight is set to $-w(xy)$.

Observe that the assigned parity gadget is connected with the rest of~$G$ via incident edges to~$X$, $X'$, $Y$, $Y'$, $Z$, and $Z'$.
We call these incident edges the \defi{external edges} of the parity gadget.
We define the \defi{internal edges} as the edges within the parity gadget.
Further, we say that the parity gadget is \defi{related} to the $H$-vertices $x$ and $y$.

By construction, the removed edge $\bar{Z}\bar{Z}'$ was originally part of a path of length five, say $(Z_1, Z_2, \bar{Z}, \bar{Z}', Z_3, Z_4)$.
Since $Z_2$ and $Z_3$ have degree two in~$G$, any tour of~$G$ has to contain the edges~$Z_2\bar{Z}$ and~$\bar{Z}'Z_3$.
Therefore, a tour can only contain exactly one internal edge incident to~$Z$ and one incident to~$Z'$ (which may coincide). 
This is the reason why we define a subtour to have $Z$ and $Z'$ as endpoints.

By \cref{lem:unique_subtours}, the total weight of the tour edges within a parity gadget related to the $H$-vertices $x$ and $y$ is the same-set weight, when $x$ and $y$ are in the same set of the cut, and it is the different-set weight, when they are in different sets.

\subsection{XOR gadgets}
\label{sec:xor_gadgets}
We now define the XOR gadgets that we use in our reduction.
We generalize these gadgets from the XOR gadget by~\cite{pap1978}.
See \cref{fig:xor_gadget}(a)-(c) for an illustration of the definition below.

\begin{figure}[t!]
    \centering

    \def\defvertices{    
    \foreach \x in {1,...,4} 
      \foreach  \y in {1,2,3} {
         \fill (\x, \y) circle(2.5pt);
         \coordinate (a\x\y) at (\x,\y);}  
    \coordinate[label=left: $b_1$] (dummy) at (a11);
    \coordinate[label=left: $a_1$] (dummy) at (a13);
    \coordinate[label=right: $b_4$] (dummy) at (a41);
    \coordinate[label=right: $a_4$] (dummy) at (a43);}

    \begin{tabular}{c@{~~~~~~~}c@{~~~~~~~}c}
    \begin{tikzpicture}
    \defvertices
    \foreach \x in {1,...,4} 
      \draw[blackline] (a\x1) -- (a\x3);
 
    \draw[blackline] (a11) -- (a41) (a13) -- (a43);
    \end{tikzpicture}
&
    \begin{tikzpicture}
    \defvertices
    \begin{pgfonlayer}{back}   
    \foreach \x in {1,...,4} 
      \draw[redline] (a\x1) -- (a\x3);
    \draw[redline] (a13) -- (a23) (a33)--(a43) (a21) -- (a31);
    \draw[blackline] (a11) -- (a21) (a31)--(a41) (a23) -- (a33);
    \end{pgfonlayer}
    \end{tikzpicture}
&
    \begin{tikzpicture}
    \defvertices
    \begin{pgfonlayer}{back}   
    \foreach \x in {1,...,4} 
      \draw[redline] (a\x1) -- (a\x3);
    \draw[blackline] (a13) -- (a23) (a33)--(a43) (a21) -- (a31);
    \draw[redline] (a11) -- (a21) (a31)--(a41) (a23) -- (a33);
    \end{pgfonlayer}
    \end{tikzpicture}\\ (a) & (b) & (c)
    \end{tabular}
    \caption{The XOR gadget of order four (a) and its two subtours (b) and (c).}
    \label{fig:xor_gadget}
\end{figure}
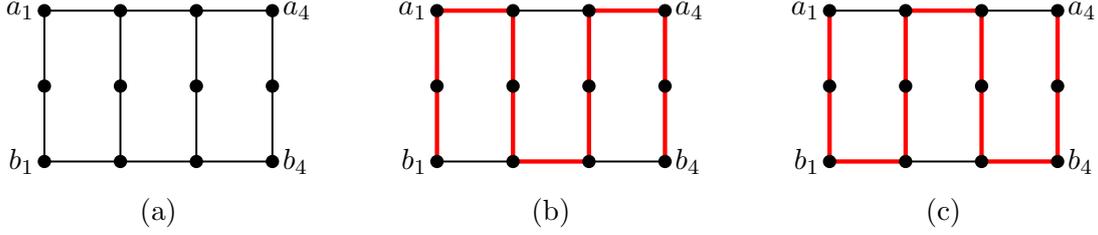

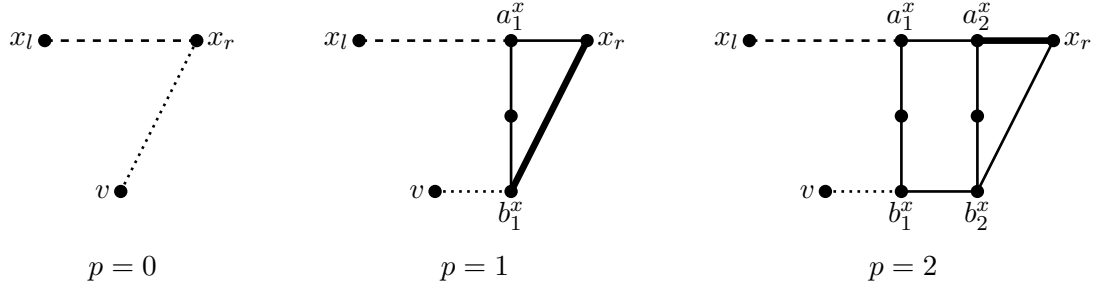
\begin{figure}
    \centering
    \begin{tabular}{c@{~~~~~~~}c@{~~~~~~~}c}
    \begin{tikzpicture}
    \coordinate[label=left: $x_l$] (xl) at (0,2);
    \coordinate[label=left: $v$] (v) at (1,0);
    \coordinate[label=right: $x_r$] (xr) at (2,2);
    \coordinate[label=above: \textcolor{white}{$a_1^x$}] (a1) at (2,2);
    \coordinate[label=below: \textcolor{white}{$b_1^x$}] (b1) at (2,0);
    \fill[black] (xl) circle (2.5pt);  
    \fill[black] (v) circle (2.5pt);  
    \fill[black] (xr) circle (2.5pt);  
    \draw[black, dashed, line width = 1] (xl) -- (xr);
    \draw[black, dotted, line width = 1] (v) -- (xr);
    \end{tikzpicture}&
    \begin{tikzpicture}
    \coordinate[label=left: $x_l$] (xl) at (0,2);
    \coordinate[label=left: $v$] (v) at (1,0);
    \coordinate[label=right: $x_r$] (xr) at (3,2);
    \coordinate[label=above: $a_1^x$] (a1) at (2,2);
    \coordinate[label=below: $b_1^x$] (b1) at (2,0);
    \coordinate[] (c1) at (2,1);
    \fill[black] (xl) circle (2.5pt);  
    \fill[black] (v) circle (2.5pt);  
    \fill[black] (xr) circle (2.5pt);  
    \fill[black] (a1) circle (2.5pt);  
    \fill[black] (b1) circle (2.5pt);  
    \fill[black] (c1) circle (2.5pt);  
    \draw[black, dashed, line width = 1] (xl) -- (a1);
    \draw[black, dotted, line width = 1] (v) -- (b1);
    \draw[black, line width = 1] (b1) -- (a1) -- (xr);
    \draw[black, line width = 2.5] (b1) -- (xr);
    \end{tikzpicture}&
    \begin{tikzpicture}
    \coordinate[label=left: $x_l$] (xl) at (0,2);
    \coordinate[label=left: $v$] (v) at (1,0);
    \coordinate[label=right: $x_r$] (xr) at (4,2);
    \coordinate[label=above: $a_1^x$] (a1) at (2,2);
    \coordinate[label=below: $b_1^x$] (b1) at (2,0);
    \coordinate[] (c1) at (2,1);
    \coordinate[label=above: $a_2^x$] (a2) at (3,2);
    \coordinate[label=below: $b_2^x$] (b2) at (3,0);
    \coordinate[] (c2) at (3,1);
    \fill[black] (xl) circle (2.5pt);  
    \fill[black] (v) circle (2.5pt);  
    \fill[black] (xr) circle (2.5pt);  
    \fill[black] (a1) circle (2.5pt);  
    \fill[black] (b1) circle (2.5pt);  
    \fill[black] (c1) circle (2.5pt);  
    \fill[black] (a2) circle (2.5pt);  
    \fill[black] (b2) circle (2.5pt);  
    \fill[black] (c2) circle (2.5pt);  
    \draw[black, dashed, line width = 1] (xl) -- (a1);
    \draw[black, dotted, line width = 1] (v) -- (b1);
    \draw[black, line width = 1] (b2) -- (a2) -- (a1) -- (b1) -- (b2) -- (xr);
    \draw[black, line width = 2.5] (a2) -- (xr);
    \end{tikzpicture}\\ $p=0$ & $p=1$ & $p=2$
    \end{tabular}
    \caption{Examples of assigning an XOR gadget of order 0, 1, and 2 to an $H$-vertex~$x$.
    Dashed edges,        dotted edges,                   and bold edges represent the 
    left first-set edge, the door closest to $x_r$, and the right first-set edge, respectively.}
    \label{fig:assigned_xor_gadget}
\end{figure}

\begin{definition}[XOR Gadget]
\label{def:xor_gadget}
    Let $p$ be a nonnegative integer.
    The \defi{XOR~gadget} of order~$p$ is a graph containing two paths $(a_1, \dots, a_{p})$ and $(b_1, \dots, b_{p})$ which are called the \defi{rails} of the XOR~gadget.
    In addition, for $i \in \{1, \dots, p\}$, the XOR~gadget contains vertex disjoint paths of length two with $a_i$ and $b_i$ as endpoints.  A \defi{subtour} of the XOR~gadget is a spanning path with two endpoints in the set $\{a_1, a_p, b_1, b_p\}$.
    For convenience, when $p = 0$, both the XOR~gadget of order zero and its only subtour are defined to be the empty graph.
\end{definition}

It is easy to see that an XOR~gadget has exactly two subtours, except when $p \leq 1$, in which case, it has only one subtour.
Further, for $p \geq 2$, changing from one subtour to the other requires a swap of $p-1$ edges.

We define the \defi{assigning} of an XOR~gadget of some order~$p$ to an $H$-vertex~$x$ as follows. 
(See~\cref{fig:assigned_xor_gadget} for an illustration.)
Recall that $x_{\ell}x_r$ is the first-set edge of $x$, and let $vx_r$ be the incident door of~$x$ to $x_r$.
We subdivide these two edges into paths of length~$p+1$, $(x_{\ell}, a^x_1, \dots, a^x_p, x_r)$ and $(v, b^x_1, \dots, b^x_p, x_r)$.
Then for $i \in \{1, \dots, p\}$, we connect $a^x_i$ and $b^x_i$ with a path of length two.
Note that when $p = 0$, we do nothing.
Further note that when we remove the edges incident to $x_{\ell}$, $x_r$, and $v$ in the construction above, we obtain the XOR gadget of order~$p$ as defined in \cref{def:xor_gadget}.

We call these incident edges to $x_{\ell}$, $x_r$, and $v$ the \defi{external edges} of the XOR~gadget, and we call the other edges in the construction the \defi{internal edges} of the XOR~gadget.
For convenience, we still refer to the external edge incident to $v$ (i.e., $vb^x_1$ for $p \geq 1$ and $vx_r$ for $p=0$) as a door of $x$.
Additionally, we call it the \defi{closest door to $x_r$}.
We call the external edge incident to $x_{\ell}$ (i.e., $x_{\ell}a^x_1$ for $p \geq 1$ and $x_{\ell}x_r$ for $p = 0$) the \defi{left first-set edge} of~$x$. 
We define the \defi{right first-set edge} of~$x$ as $x_rx_{\ell}$ if $p = 0$, $x_ra^x_p$ if $p$ is positive and even, and $x_rb^x_p$ if $p$ is odd.
Lastly, we call the other external edge incident to $x_r$ the \defi{right second-set edge} of~$x$.
See \cref{fig:assigned_xor_gadget} for examples of the concepts above.

We define an \defi{incident edge} of a nonempty subtour in the XOR gadget to be an external edge incident to an endpoint of the subtour.
The incident edge of an empty subtour (i.e., when $p = 0$) is defined to be simply an external edge of the XOR gadget.

Based on the definitions above, it is easy to verify the following.
\begin{observation}
\label{obs:xor_external_edges}
    For the XOR gadget assigned to an $H$-vertex~$x$, one subtour in the gadget is incident to the left and right first-set edges of~$x$, and another subtour in the gadget is incident to the closest door to~$x_r$ and the right second-set edge of~$x$.
    These two subtours are identical, if the order of the gadget is at most one.
    Otherwise, they are distinct.
\end{observation}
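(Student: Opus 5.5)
We verify \cref{obs:xor_external_edges} by identifying the two subtours of the assigned XOR gadget explicitly and then reading off their endpoints.

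\emph{Setup.} After assigning, the internal graph consists of the rails $(a^x_1,\dots,a^x_p)$ and $(b^x_1,\dots,b^x_p)$ and, for each $i$, a path $a^x_i c_i b^x_i$ with a middle vertex $c_i$ of degree two. Every subtour is a spanning path, so it must contain both edges at each $c_i$; hence it traverses every rung $a^x_i c_i b^x_i$. A spanning path containing all $p$ rungs must connect consecutive rungs using exactly $p-1$ rail edges, alternating between the two rails. There are only two ways to do this: the edges $a^x_1a^x_2, b^x_2b^x_3, a^x_3a^x_4,\dots$, or the edges $b^x_1b^x_2, a^x_2a^x_3,\dots$. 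These are the two subtours shown in \cref{fig:xor_gadget}(b),(c). When $p\ge 2$ the two connector sets are disjoint and nonempty, so the subtours are distinct, and switching between them swaps $p-1$ edges. When $p\le 1$ there are no rail edges at all, so the two descriptions coincide. This gives the second and third sentences of the observation.

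\emph{Endpoints.} The subtour that uses $b^x_1b^x_2$ leaves $a^x_1$ as an endpoint, and its other endpoint is found by tracking parity. Its connectors alternate $b,a,b,\dots$, so the last connector joins rungs $p-1$ and $p$ on the $b$-rail if $p-1$ is odd, i.e.\ $p$ even, and on the $a$-rail if $p$ is odd. The free endpoint at rung $p$ is therefore $a^x_p$ for $p$ even and $b^x_p$ for $p$ odd. The external edge at $a^x_1$ is the left first-set edge $x_\ell a^x_1$. The external edge at the free endpoint is $x_ra^x_p$ (for $p$ even) or $x_rb^x_p$ (for $p$ odd), which is exactly the right first-set edge as defined.

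The other subtour starts with $a^x_1a^x_2$, so $b^x_1$ is an endpoint, incident to the closest door $vb^x_1$. Its free endpoint at rung $p$ is the opposite rail vertex from the previous case. The external edge there is the remaining external edge at $x_r$, which is by definition the right second-set edge.

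\emph{Degenerate cases.} For $p=1$ the single subtour is $a^x_1c_1b^x_1$. It is incident to $x_\ell a^x_1$ and to $x_rb^x_1$, which is the right first-set edge since $1$ is odd, and also to $vb^x_1$ and $x_ra^x_1$, which is the right second-set edge. So it satisfies both descriptions. For $p=0$ the subtour is empty and the claim holds trivially by the convention that any external edge counts as incident.

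The only real care needed is the parity bookkeeping that matches the free endpoint at rung $p$ with the definition of the right first-set edge.
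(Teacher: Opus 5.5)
Your proposal is correct and is essentially the verification the paper intends: the paper gives no proof and only says the statement is easy to verify from the definitions. You carry that out explicitly by forcing the rung edges, showing the $p-1$ rail connectors must alternate, and matching the parity of the free endpoint at rung $p$ with the definition of the right first-set edge. The one step you leave implicit is that each middle vertex $c_i$ cannot be an endpoint, because subtour endpoints are restricted to $\{a_1,a_p,b_1,b_p\}$; this is what forces both rung edges at $c_i$ into the path.
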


\subsection{Assigning gadgets}
\label{sec:equip}
Within our reduction we use different parity gadgets at different places and XOR gadgets of different orders. 
Now, we specify which parity gadget we assign to which $H$-edge and the XOR gadget of which order to which $H$-vertex. For this we will make use of the following lemma about partial edge orientations. 
Note that the graph $H$ has maximum degree five by assumption.

\begin{lemma} Let $F=(V,E)$ be a graph of maximum degree~5. Then we can compute in polynomial 
time a partition of $E$ into two disjoint sets $E_1$ and $E_2$ and an orientation $\overrightarrow{E_2}$ of the edges in $E_2$ such that $(V, E_1)$ is a disjoint union of stars and each vertex in $(V, \overrightarrow{E_2})$ has out-degree at most~two.   
\label{lem:partialedgeorientation}
\end{lemma}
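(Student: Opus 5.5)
Here is the plan. We first orient all edges so that out-degrees are balanced, and then peel off a star forest from the vertices whose out-degree is too large.

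\emph{Step 1 (balanced orientation).} Add a new vertex adjacent to every odd-degree vertex of $F$, so that all degrees become even. Take an Euler tour in each component and orient the edges along it; then delete the auxiliary vertex. The result is an orientation of $F$ in which every vertex $v$ has out-degree at most $\lceil d(v)/2\rceil\le 3$. This takes polynomial time.

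\emph{Step 2 (local repair).} Let $B$ be the set of vertices with out-degree exactly $3$. For each $v\in B$ we want to move one out-edge of $v$ into $E_1$, while keeping $E_1$ a disjoint union of stars. The only danger is that chosen edges form paths of length $\ge 3$ or cycles. The key observation is that $E_1$ is a star forest exactly when every chosen edge has an endpoint of $E_1$-degree one.

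\emph{Step 3 (choosing the edges).} We choose edges so that each $v\in B$ gets one out-edge $vu$ placed in $E_1$ with $v$ of $E_1$-degree one. Then $v$ is a leaf, and the star is centred at $u$. We must ensure that $u$ is not itself a leaf sending its own edge elsewhere. So we need a function $f\colon B\to V$, with $f(v)$ an out-neighbour of $v$, such that no $v\in B$ has $f(v)\in B$ while also being used as a centre. Equivalently, the functional graph of $f$ must have no directed path of length $2$.

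To achieve this, process the directed graph $D$ on $B$ whose arcs are the out-arcs among the vertices of $B$. If some $v\in B$ has an out-neighbour outside $B$, point $v$ there; such targets only ever act as centres. The remaining vertices of $B$ have all $3$ out-neighbours in $B$. For these, use the slack in Step 1 instead: in Step 1, flip the orientation of edges along directed paths from $B$ to vertices of out-degree $\le 1$, which reduces the out-degree of the $B$-vertex. Alternatively, choose $f$ greedily. Take a maximal set $C$ of ``centres'' inside $B$ that are pointed to. A vertex pointed to loses the need to be a leaf only if it is a centre, and a centre $c\in B$ of out-degree $3$ must then drop an out-edge some other way.

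A cleaner rule is the following. Every $v\in B$ with all out-neighbours in $B$ has in-degree at most $2$, since $d(v)\le 5$. Hence $D$ restricted to these vertices has $3|B'|$ arcs while in-degrees sum to at most $2|B'|$, which is a contradiction unless $B'=\emptyset$. So every $v\in B$ has an out-neighbour outside $B$. Set $f(v)$ to be such a neighbour and put $E_1=\{vf(v): v\in B\}$.

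Then every edge of $E_1$ has its $B$-endpoint of $E_1$-degree one, because $B$-vertices are never targets. Hence $(V,E_1)$ is a star forest. Every vertex has out-degree at most $2$ in $E_2=E\setminus E_1$: vertices in $B$ drop from $3$ to $2$, and the others are unchanged, since removing edges only decreases out-degree.

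\emph{Main obstacle.} The counting argument in the last step is the point to double-check. The restricted subgraph gives $B'$ out-degree $3$ into $B$, but the in-degree bound must be of the arcs coming from $B'$ only. A vertex of out-degree $3$ has total degree at most $5$, so its in-degree is at most $2$. Summing over $B'$, the arcs from $B'$ land in $B$, not necessarily in $B'$, so I will refine the argument to $B$ itself. Arcs inside $B$ number at most $\sum_{v\in B}\text{in}(v)\le 2|B|$. Hence not all $3|B|$ out-arcs stay inside $B$, but only some vertices are guaranteed an exit.

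If this averaging does not suffice, I would fall back on the flipping strategy. Repeatedly reverse a directed path from a vertex of out-degree $3$ to a vertex of out-degree $\le 1$. Each reversal strictly decreases $\sum_v \max(0,\text{out}(v)-2)$, so this runs in polynomial time. Only vertices with no such path remain, and their reachable sets consist of vertices of out-degree $\ge 2$, which then handle the star assignment inside the closed reachable set.
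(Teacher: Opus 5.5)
Your proof has a genuine gap in Step~3, and it is more than a counting slip. The orientation is fixed before $E_1$ is chosen, so each $v\in B$ must put one of its \emph{own} out-arcs $vf(v)$ into $E_1$, and such a choice can be impossible. Suppose $v\in B$ has all three out-neighbours in $B$, and each of those again has all its out-neighbours in $B$. Then $f(v)$, $f^2(v)$, $f^3(v)$ are all defined. The three distinct edges $vf(v)$, $f(v)f^2(v)$, $f^2(v)f^3(v)$ of $E_1$ then form a path of length three or a triangle, whatever $f$ you pick.

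This configuration does occur. Take $v\to w_1,w_2,w_3$ and $w_1\to w_2\to w_3\to w_1$, and give each $w_j$ two further out-neighbours of out-degree $3$. This can be completed to a $5$-regular graph on 20 vertices with a balanced orientation. Every balanced orientation comes from some Euler tour of the augmented graph, so Step~1 may hand you exactly this one.

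Hence your claim that every $v\in B$ has an out-neighbour outside $B$ is false; the refined count only shows that at least $|B|$ of the $3|B|$ out-arcs leave $B$. The fallback does not rescue it either. In a balanced orientation of a $5$-regular graph every vertex has out-degree $2$ or $3$, so there is no vertex of out-degree $\le 1$ to reverse towards, and you are left with the impossible assignment. The closing sentence about the ``closed reachable set'' is not an argument.

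The missing idea is to choose the star forest \emph{before} orienting; your own key observation then suffices. Let $E_1$ be an inclusion-minimal set of edges such that every vertex of degree $5$ is incident to an edge of $E_1$. You can compute it by starting from $E$ and deleting edges one by one while this property holds. For each $e\in E_1$, deleting $e$ would uncover an endpoint of $e$, so that endpoint has $E_1$-degree one. Hence every edge of $E_1$ has an endpoint of $E_1$-degree one, and $(V,E_1)$ is a star forest. Every degree-$5$ vertex loses at least one edge, so $(V,E\setminus E_1)$ has maximum degree $4$, and your Step~1 orientation of it has out-degree at most $2$.

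The paper uses the same order: star forest first, then an Euler-type orientation of a graph of maximum degree $4$. It builds $E_1$ from a maximum matching plus one edge at each unmatched degree-$5$ vertex, and uses maximality of the matching to exclude paths of length three.
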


\begin{proof}
  Let $M$ be a maximum cardinality matching of $F$, which can be computed in polynomial time.
  We initialize $E_1$ to be the edges of $M$.
  Then for every degree-five vertex $v$ of $F$ that is unmatched in $M$, we add an arbitrary incident edge of $v$ to $E_1$.
  We claim that $E_1$ satisfies the lemma condition.
  Indeed, each edge~$e$ in $E_1$ but not in $M$ must be incident to a matched vertex in $M$, because otherwise, we can add $e$ to $M$ to form a larger matching.
  Further, for every edge $uv$ of $M$, if $u$ and $v$ are adjacent to unmatched vertices $u'$ and $v'$ in $F$, respectively, then by removing $uv$ from $M$ and adding $uu'$ and $vv'$, we obtain a larger matching than $M$.
  Hence, at least one of $u$ and $v$ is incident to only matched vertices in $F$.
  This implies that all paths in $(V, E_1)$ have length at most two.
  In other words, it is a disjoint union of stars.
  
  Let $E_2 := E \setminus E_1$.
  The graph $(V, E_2)$ has maximum degree four as $F$ has maximum degree~five and each vertex of degree~five in $F$ has at least one incident edge in $E_1$. 
  Now orient the edges in $E_2$ as follows: Partition $(V, E_2)$ into maximal (possibly closed) walks and 
  orient the edges along each walk. The maximality of the walks guarantees that each vertex has 
  out-degree (and in-degree) at most~two with respect to this orientation.
\end{proof}

  We apply \cref{lem:partialedgeorientation} to the graph $H$ and denote by
  $\overrightarrow{H}$ the resulting graph.
  Note that $\overrightarrow{H}$ has a mix of undirected and directed edges.

    We then assign the parity gadgets based on the orientation of the edges in $\overrightarrow{H}$.
    Let $\psi$ be an arbitrary order of the $H$-edges such that the directed edges precede the undirected edges.
    We process the $H$-edges in this order. 
    Let $xy$ be the current $H$-edge.
    If it is directed in $\overrightarrow{H}$, without loss of generality, suppose it is directed from $x$ to $y$ in $\overrightarrow{H}$.
    We then assign the strict gadget to $(x,y)$.
    If it is undirected in $\overrightarrow{H}$, we assign the flexible gadget to $(x,y)$.
    In either assignment, we remove the available gateways closest to $x_{\ell}$ and $y_{\ell}$.
    
    As each parity gadget is assigned to some $H$-edge, the ordering $\psi$ on the $H$-edges 
    implies a corresponding ordering on the assigned parity gadgets. 
    With a slight abuse of notation, we also use $\psi$ to denote this order of the assigned parity gadgets. We will rely on this ordering $\psi$ in \cref{sec:no_infty_opt}.
    Note that our assignment ensures that for every $H$-vertex $x$, when we go along the second-set path from $x_{\ell}$ to $x_r$, we encounter the gadgets in increasing $\psi$-order.
     
	Finally, we assign the XOR gadgets.
	Recall that $d(x) \leq 5$ denotes the degree of an $H$-vertex $x$.
	Let $d^+(x)$ denote the out-degree of a vertex $x$ in $\overrightarrow{H}$.
	(Note that undirected incident edges do not contribute to this out-degree.)
	By \cref{lem:partialedgeorientation}, every $H$-vertex $x$ has $d^+(x) \leq 2$ .
	Now, to every $H$-vertex $x$, we assign the XOR gadget of order $k - 2d(x) - 2d^+(x) - 1$.
	Since $k \geq 15$, $d(x) \leq 5$, and $d^+(x) \leq 2$, note that this is a valid assignment.  

    See \cref{fig:reduction_full} for an example.

    \begin{figure}[ht!]
    \centering
    \includegraphics[width=\hsize]{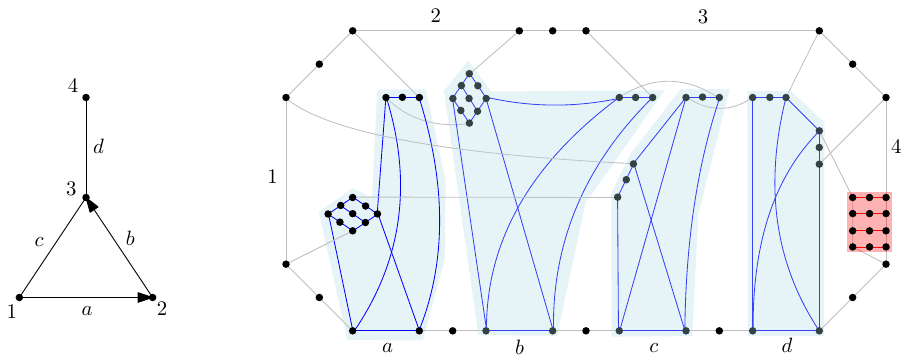}
    \caption{Following the example in \cref{fig:reduction}, suppose $\overrightarrow{H}$ is depicted on the left.
    Then for $k = 7$, we have the construction on the right, where the blue and red subgraphs correspond to the parity gadgets and XOR gadget, respectively.
    Note that the vertices 1--3 are assigned XOR gadgets of order zero.}
    \label{fig:reduction_full}
\end{figure}

\section{Standard tours}
\label{sec:correspondence}
Let $\tau$ be a subgraph of $G$.
For a gadget $\alpha$, the \defi{subtour} of $\tau$ in $\alpha$  is the subgraph of $\alpha$ that contains all common edges of $\alpha$ and $\tau$, i.e., it is the graph $(V(\alpha), E(\alpha) \cap E(\tau))$. We say that $\tau$ is \defi{standard}, if the subtour of $\tau$ in every gadget is a standard subtour of the corresponding gadget.

We start by observing in the following lemma that when $\tau$ is a tour of $G$, then the definition of subtour of $\tau$ in a gadget is consistent with the definition of a subtour of a gadget defined in the previous section.

\begin{lemma}
    Let $\tau$ be a tour of $G$ and $\alpha$ be a gadget in $G$.
    Then a subtour of $\tau$ in $\alpha$ is a subtour of $\alpha$.
\end{lemma}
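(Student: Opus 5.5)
We treat the two kinds of gadgets separately. In each case we use the fact that $\tau$ is a Hamiltonian cycle of $G$, so every vertex has exactly two incident tour edges. We then compare the degrees of the gadget's vertices inside the gadget with their degrees in $G$. The subtour $\zeta=(V(\alpha),E(\alpha)\cap E(\tau))$ is a subgraph of a cycle. It is not the whole cycle, because $G$ contains vertices outside $\alpha$ (for instance, vertices of the initial cycle) and $\tau$ must reach them. Hence $\zeta$ is a disjoint union of paths, some possibly of length zero. The remaining work is to control the endpoints and to exclude isolated vertices.

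\textbf{Parity gadgets.} By construction, the only vertices of a parity gadget incident to external edges are $X,X',Y,Y',Z,Z'$. Every other vertex $u$ of the gadget has all its $G$-edges internal, so both of its tour edges lie in $\zeta$. Consequently $u$ is an interior vertex of a path of $\zeta$, and in particular it is neither isolated nor an endpoint. For $Z$ (and symmetrically $Z'$) we use the argument already given after the assignment of parity gadgets. The neighbour $Z_2$ has degree two in $G$, so $Z_2\bar Z\in\tau$, and $Z$ has exactly one external edge. Hence exactly one internal edge at $Z$ is in $\tau$, which makes $Z$ an endpoint of a path of length at least one. Now take $u\in\{X,X',Y,Y'\}$. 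Such a vertex is a former gateway endpoint on a second-set path, so it has exactly one external edge, namely a door or an edge to a neighbouring gadget. Therefore $u$ has at least one internal tour edge. It is an endpoint of $\zeta$ exactly when its external edge is in $\tau$, and otherwise it is interior.

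One point must be checked from the construction: that $X,X',Y,Y'$ each have exactly one external edge even when consecutive gateways of a vertex belong to different gadgets. This holds because the gateways along a second-set path are separated by doors, so each endpoint $x_i$ or $x_i'$ is incident to exactly one door. Together these facts show that $\zeta$ consists of vertex-disjoint paths of length at least one, that $Z$ and $Z'$ are endpoints, and that all other endpoints lie in $\{X,X',Y,Y'\}$. This is exactly the definition of a subtour.

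\textbf{XOR gadgets.} The case $p=0$ is trivial. For $p\ge 1$, the middle vertices of the length-two paths and the internal rail vertices have all their $G$-edges inside the gadget. The only vertices with external edges are $a_1$ (edge to $x_\ell$), $b_1$ (edge to $v$), and $a_p$ or $b_p$ (edges to $x_r$); when $p=1$, the vertices $a_1$ and $b_1$ play both roles. Each of these vertices has exactly one external edge, since $x_r$ sends exactly one edge to each of the two rails' ends. So by the same degree argument, $\zeta$ is a union of paths, with endpoints among $\{a_1,a_p,b_1,b_p\}$ and no isolated vertex.

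It remains to show that $\zeta$ is a single spanning path, and this is the main obstacle. I would argue that the external edges of the gadget form a cut of size four in $G$: one edge at $x_\ell$, one at $v$, and two at $x_r$. A Hamiltonian cycle crosses any cut an even number of times, and it crosses this one at least twice. If it crossed all four edges, then $x_r$ would use both of its gadget edges, and $\zeta$ would consist of two paths. Each of these paths would be a rail-to-rail segment of the ladder-like gadget, and it would have to cover the middle vertices. I would rule this out by checking that the middle vertex $c_i$ on the path between $a_i$ and $b_i$ forces both $a_ic_i$ and $c_ib_i$ into $\tau$. Propagating this along the rails, two disjoint paths cannot cover every $c_i$ while ending at the four corners; the argument is a short parity/induction along $i$. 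Hence exactly two external edges are used, and $\zeta$ is one spanning path with both endpoints in $\{a_1,a_p,b_1,b_p\}$.
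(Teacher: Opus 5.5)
Your treatment of the parity gadgets is correct, and reducing the XOR case to ``the tour crosses the four external edges either twice or four times'' is a sound framing. However, the step you use to exclude four crossings is false. Write $c_i$ for the middle vertex of the rung between $a_i$ and $b_i$, and take $p=2$. The two paths $(a_1,c_1,b_1)$ and $(a_2,c_2,b_2)$ are vertex-disjoint and cover every $c_i$. Their endpoints are exactly the four corners, and they are consistent with all of $x_\ell a_1$, $vb_1$, $a_2x_r$, $b_2x_r$ lying in $\tau$. So no argument confined to the interior of the gadget can rule this configuration out.

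Your propagation along the rails does settle the other orders. For odd $p$ it gives $a_p$ three tour edges. For even $p\ge 4$ it produces internal cycles $(a_{2j},c_{2j},b_{2j},b_{2j+1},c_{2j+1},a_{2j+1})$. For $p=2$, though, it yields precisely the two paths above. Order two is not excluded by the construction, and it is exactly the case the paper's proof singles out.

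The missing ingredient is to look outside the gadget at $x_r$. You already observed that four crossings would make $x_r$ use both of its gadget edges. Since $c_p$ has degree two in $G$, the edges $a_pc_p$ and $c_pb_p$ are forced into $\tau$. Then $a_px_r$ and $x_rb_p$ would close the $4$-cycle $(a_p,c_p,b_p,x_r)$ inside the Hamiltonian cycle $\tau$, which is the paper's argument for $p=2$. Alternatively, $x_r$ is adjacent on the initial cycle to a degree-two vertex whose edge is forced, so $x_r$ would have three tour edges. Either observation excludes four crossings for every $p\ge 1$ at once, and it makes the rail induction unnecessary.

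A minor slip besides this: for $p=1$, the vertices $a_1$ and $b_1$ each carry two external edges, not one. That case is trivial anyway, since $c_1$ forces $\zeta$ to be the path $(a_1,c_1,b_1)$.
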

\begin{proof}
    If $\alpha$ is a parity gadget, by construction, $\alpha$ is connected with the rest of $G$ via the vertices $X, X', Y, Y', Z, Z'$ of $\alpha$.
    Further, $Z$ and $Z'$ are connected to degree-two vertices outside of the gadget.
    It is then easy to see that the subtour of $\tau$ in $\alpha$ must satisfy the definition of a subtour of $\alpha$.
    A similar argument also applies for the case when $\alpha$ is an XOR-gadget of order other than two.

    The remaining case is when $\alpha$ is an XOR-gadget of order two; that is, a six cycle of the form $(a_1, a_2, u_2, b_2, b_1, u_1)$.
    The only possibility for a subtour of $\tau$ in $\alpha$ that is not a subtour of $\alpha$ is the subgraph that consists of two paths $(a_1, u_1, b_1)$ and $(a_2, u_2, b_2)$.
    However, by construction, $a_2$ and $b_2$ are also adjacent to the vertex $x_r$ for some $H$-vertex $x$ and have no other neighbors in $G$ outside of $\alpha$.
    Since $a_2$ and $b_2$ must have degree two in a tour, $\tau$ must contain the edges $a_2x_r$ and $b_2x_r$.
    This means there is a four cycle $(a_2, u_2, b_2, x_r)$ in $\tau$, a contradiction to the fact that $\tau$ is a tour of $G$.
    Therefore this possibility cannot occur.
    The lemma then follows.
\end{proof}

Now the idea of our tight \pls-reduction from Max-Cut/Flip to TSP/\kopt is that we will use the set of all standard tours of the graph $G$ as the set $R$ required in \cref{def:tight_pls_complete}.
In this section, we establish some useful properties of these standard tours. We will need
these properties to prove the tightness of our \pls-reduction.

We start with a simple lemma about doors of an $H$-vertex in an arbitrary tour.

\begin{lemma}
\label{lem:end_doors}
    For any tour of $G$ and any $H$-vertex~$x$, the tour contains either both the door incident to $x_{\ell}$ and the closest door to ${x_r}$ or none of them.
    Further, these two doors are in the tour, if and only if the left and right first-set edges of $x$ are not in the tour.
\end{lemma}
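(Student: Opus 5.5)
The plan is to exploit the local structure at the two ends of the second-set path of $x$, where every vertex has very small degree in $G$. By construction, $x_{\ell}$ has exactly three neighbours in $G$: its neighbour on the original cycle, the other endpoint of the left first-set edge ($a^x_1$ if the XOR order $p\ge1$, otherwise $x_r$), and the other endpoint of the door incident to $x_{\ell}$. Similarly, $x_r$ has exactly three neighbours: its cycle neighbour, the endpoint of the right first-set edge, and the endpoint of the right second-set edge. The cycle edges at $x_{\ell}$ and $x_r$ must be in every tour. The reason is that the neighbouring cycle vertices (the next vertices of the original cycle) have degree two in $G$. This holds because assigned edges are at distance at least two, and only the assigned edges are ever modified or have gadgets attached. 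Hence every tour uses exactly one of the two remaining edges at $x_{\ell}$, and exactly one of the two remaining edges at $x_r$.

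Step one handles $x_{\ell}$: the tour contains the door at $x_{\ell}$ if and only if it does not contain the left first-set edge. Step two handles $x_r$: the tour contains the right second-set edge if and only if it does not contain the right first-set edge. Step three links the two ends through the XOR gadget, using \cref{obs:xor_external_edges} together with the earlier lemma that the subtour of a tour in a gadget is a subtour of the gadget. For $p\ge 2$, the subtour in the XOR gadget is one of its two subtours. Each subtour is a spanning path whose endpoints are matched to a specific pair of external edges. Since all internal vertices are covered by that path and each endpoint of the path needs exactly one external tour edge, the tour contains either both the left and right first-set edges, or both the closest door to $x_r$ and the right second-set edge. A subtour path cannot have both endpoints take external edges at the same vertex, because that would close a short cycle; this is the kind of argument used in the order-two case of the previous lemma. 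For $p\le1$ the check is direct: when $p=0$ the left and right first-set edges coincide, and when $p=1$ the single degree-two path between $a_1,b_1$ forces $x_{\ell}a_1$ and $b_1x_r$ to appear together or the door $vb_1$ and $a_1x_r$ to appear together.

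Step four connects the closest door to $x_r$ with the door at $x_{\ell}$. The second-set path consists of doors and gateways, and every gateway has been replaced by a parity gadget. By \cref{lem:unique_subtours}, in each parity gadget the vertices $X,X'$ (resp.\ $Y,Y'$) are either both endpoints of an $XX'$-path or both interior-covered. For non-standard flexible subtours this should also be checked from the figure: each of $X,X'$ is then an endpoint with an external edge. I expect this to be the main obstacle. The cleaner route is probably a parity or connectivity argument that avoids propagating along the path. The set consisting of $x_{\ell}$ together with the inner doors' vertices is not obviously separated. So instead I would try a direct degree argument: the door vertices $x_i$ (identified with gadget vertex $X$) each have degree two in the tour. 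Tracing around, a tour that uses the door at $x_{\ell}$ but not the closest door to $x_r$ would have to leave the second-set region through a gadget, and the gadget subtour structure forbids this. Combining steps one through four yields both claims of the lemma.
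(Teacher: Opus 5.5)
Your Steps 1--3 are exactly the paper's proof, and they already finish it; the chain closes without going along the second-set path. Step 1 gives that the door at $x_\ell$ is in the tour iff the left first-set edge is not. Step 3 gives, for $p\ge 1$, that exactly one of two pairs lies in the tour: the left and right first-set edges, or the closest door to $x_r$ and the right second-set edge. For $p\ge 2$ this is because non-endpoints of the XOR subtour already have degree two inside the gadget, and only the two endpoints take their unique external edges (\cref{obs:xor_external_edges}); for $p=1$ it follows from your direct check at $x_r$. Hence the door at $x_\ell$ is in the tour $\iff$ the left first-set edge is absent $\iff$ the XOR subtour is of the second kind $\iff$ the closest door to $x_r$ is present and the right first-set edge is absent. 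This is the whole lemma. For $p=0$, Steps 1 and 2 alone suffice, since both first-set edges are $x_\ell x_r$ and the closest door coincides with the right second-set edge. So the two end doors are linked through the first-set edges and the XOR gadget, not through the parity gadgets.

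Step 4 should be deleted, and as sketched it would fail. Your claim that in a non-standard flexible subtour each of $X,X'$ is an endpoint with an external edge is false. In each of the three non-standard subtours in \cref{fig:flexible_gadget}, exactly one of $X,X'$ is an endpoint. For example, the first consists of a path from $Z'$ through $X'$ to $X$ and a path from $Z$ through $Y'$ to $Y$, so $X$ has degree one and $X'$ degree two. This is precisely the non-strict situation described before \cref{obs:strict_iff_standard}: door membership can genuinely switch across a flexible gadget, so no local propagation along the second-set path can work. Moreover, the paper uses this very lemma in \cref{lem:forcing_b}, and thus in \cref{lem:strict_flexible_gadgets} and \cref{cor:no_nonstandard}, to rule out non-standard subtours. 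Your ``tracing'' argument, which needs the gadgets to prevent the tour from leaving the second-set region, would therefore be circular.
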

\begin{proof}
    By construction, $x_{\ell}$ is incident to three edges in $G$: the left first-set edge, a door, and an edge incident to a degree-two vertex.
    Since the last edge has to be in the tour, the tour uses either the left first-set edge or the door incident to $x$.
    By a similar argument, the tour uses either the right first-set edge or the right second-set edge of $x$.
    Lastly, the tour has a subtour in the XOR gadget assigned to~$x$, if the order of the XOR gadget is nonzero.
    Combined with \cref{obs:xor_external_edges}, all of the above implies the lemma.
\end{proof}

Suppose we have a parity gadget related to an $H$-vertex $x$.
We assume that the vertices $X$ and $X'$ of the parity gadget are incident to doors of $x$ (i.e., they are on the second-set path of $x$); the case when the vertices $Y$ and $Y'$ are incident to doors of $x$ can be argued analogously.
Let $\tau$ be a (not necessarily standard) tour of $G$.
We say that the parity gadget is \defi{engaged} at $x$ in $\tau$, if in the subtour of $\tau$ in the parity gadget, the degrees of $X$ and $X'$ are one.
We say that the parity gadget is \defi{disengaged} at $x$ in $\tau$, if these degrees are two.
In either case, we say that the parity gadget is \defi{strict} at $x$ in $\tau$. We will soon see that the strict gadget is always strict, motivating its name. 
Note that when the parity gadget is not strict at $x$ in $\tau$, it means that the degrees of $X$ and $X'$ in the subtour are not the same.
By \cref{lem:unique_subtours} and Figures~\ref{fig:strict_gadget_1} and \ref{fig:flexible_gadget}, we can characterize exactly when this happens.

\begin{observation}
\label{obs:strict_iff_standard}
	Let $\alpha$ be a parity gadget in $G$, $\tau$ be a tour of $G$, and $x$ be a related $H$-vertex of $\alpha$.
	Then $\alpha$ is strict at $x$ in $\tau$, if and only if the subtour of $\tau$ in $\alpha$ is standard.
\end{observation}

Note that this is the motivation for the names ``strict gadget'' and ``flexible gadget'': On the one hand, the strict gadget is always strict at a related $H$-vertex in every tour, since it does not have any non-standard subtours.
On the other hand, the flexible gadget has non-standard subtours, and hence there can be a tour $\tau$ in which the gadget is not strict at a related $H$-vertex.

Next, we have the following lemma about the edges in a standard tour of $G$.

\begin{lemma}
\label{lem:tour_edges}
    For any standard tour of $G$ and any $H$-vertex~$x$, exactly one of these cases holds:
    \begin{itemize}
        \item The first-set case: The tour uses the left and right first-set edges of~$x$ and does not use any doors or the right second-set edge of~$x$. 
        It also uses the subtour in the XOR gadget assigned to~$x$ incident to the left and right first-set edges.
        Further, all parity gadgets related to $x$ are disengaged at $x$.
        \item The second-set case: The tour uses all the doors and the right second-set edge of~$x$ and does not use the left and right first-set edges of~$x$. 
        It also uses the subtour in the XOR gadget assigned to~$x$ incident to the right second-set edge of $x$ and the closest door to~$x_r$.
        Further, all parity gadgets related to $x$ are engaged at $x$.
    \end{itemize}
\end{lemma}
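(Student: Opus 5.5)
We split according to \cref{lem:end_doors}: in any tour, either both end doors of $x$ (the door incident to $x_{\ell}$ and the closest door to $x_r$) are used and the left and right first-set edges are not, or the reverse holds. These two alternatives will become the second-set case and the first-set case. They are mutually exclusive, so exactly one of the two cases can hold. What remains is to derive the other assertions from the end-door situation. That derivation uses the standardness of the tour and the structure of the second-set path.

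For the XOR gadget, we use \cref{obs:xor_external_edges} together with the fact that a standard tour uses a standard subtour in the gadget. The tour contains exactly two external edges of the gadget, namely the two edges at the gadget's endpoints. If the first-set edges are used, the subtour must be the one incident to the left and right first-set edges. Then $x_r$ already has degree two from the right first-set edge and its neighbour in the cycle, so the right second-set edge is unused. If the first-set edges are not used, the closest door is used, so the subtour must be the one incident to the closest door and the right second-set edge, and the right second-set edge must then be used. When the order is at most one, the two subtours coincide, and we argue directly from the degrees at $x_r$ and $x_{\ell}$.

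The doors and parity gadgets are handled by walking along the second-set path from $x_{\ell}$ to $x_r$. The path alternates doors and parity gadgets; each gadget sits where a gateway $x_ix'_i$ was, with $X=x_i$ and $X'=x'_i$. Consider a gadget whose $X$ is the endpoint of a door $D$ and whose $X'$ is the endpoint of the next door $D'$. The gadget is standard, so by \cref{obs:strict_iff_standard} it is strict at $x$: $X$ and $X'$ have equal degree in the subtour, either both one (engaged) or both two (disengaged). The only edges of $G$ at $X$ are the door $D$ and internal gadget edges, and likewise for $X'$ with $D'$. Since $X$ and $X'$ have degree two in the tour, $D$ is used iff the gadget is engaged, iff $D'$ is used. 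Inducting along the path, either all doors are used and all related gadgets are engaged, or no door is used and all are disengaged. The first door is the door incident to $x_{\ell}$, whose status \cref{lem:end_doors} ties to the first-set edges, so the induction starts correctly. The closest door to $x_r$ is consistent with this at the far end.

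The main obstacle is bookkeeping rather than substance. We must check that after the gadget insertions, each $X$ and $X'$ really has only the door and internal edges as incident $G$-edges. We must also check the case where $x$ plays the role of $Y$ in a gadget; this is symmetric by the convention fixed before \cref{obs:strict_iff_standard}. Finally, degree-zero $H$-vertices are trivial: the second-set path is a single door that is also the closest door to $x_r$.
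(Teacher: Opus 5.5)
Your proposal is correct and follows essentially the same route as the paper. The paper likewise shows that two consecutive doors on the second-set path cannot disagree, since the parity gadget between them would then not be strict at $x$, contradicting standardness via \cref{obs:strict_iff_standard}. It then combines this with \cref{lem:end_doors}, \cref{obs:xor_external_edges}, and the fact that the tour uses a subtour of the XOR gadget. Your forward induction from $x_{\ell}$ is that same argument with the degree bookkeeping spelled out.
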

\begin{proof}
    Recall that the second-set path of $x$ was $(x_{\ell}, x_{1}, x'_{1}, \dots, x_{d(x)}, x'_{d(x)}, x_r)$, where $d(x)$ denotes the degree of $x$ in $H$.
    For ease of argument, we define $x'_0 := x_{\ell}$, and we define $x_{d(x)+1}$ to be the endpoint of the closest door to $x_r$ other than $x'_{d(x)}$ (i.e., $x'_{d(x)}x_{d(x)+1}$ is an external edge of the XOR gadget assigned to~$x$). 
    
    We start by showing that either all the doors of $x$ are in the tour or all doors of $x$ are not in the tour.
    For the sake of contradiction, suppose that this does not hold.
    This implies that there are two edges $x'_{i-1} x_i$ and $x'_i x_{i+1}$ such that one of them is in the tour and the other one is not.
    This implies that the parity gadget that contains $x_i$ and $x'_i$ is not strict at $x$ in the tour.
    By \cref{obs:strict_iff_standard}, the subtour of the tour in this parity gadget is not standard, a contradiction to the assumption that the tour is standard.

    By definition of being engaged and disengaged, it is easy to see that when all doors are in the tour, all parity gadgets related to $x$ are engaged at $x$. 
    Otherwise, all these gadgets are disengaged at $x$.
    
    The lemma then follows from the two preceding paragraphs, \cref{obs:xor_external_edges}, \cref{lem:end_doors}, and the fact that the tour has to use a subtour in the XOR gadget assigned to~$x$.
\end{proof}

For an $H$-vertex $x$, we define an \defi{$x$-change} in a standard tour as the swap that can be performed to change from the first-set case of \cref{lem:tour_edges} to the second-set case, or vice versa. 
The following lemma shows that such a swap always exists and 
it specifies the number of edges to be swapped for an $x$-change.

\begin{lemma}
\label{lem:x-change}
    For an $H$-vertex $x$, an $x$-change is a swap of at least $k$ edges.
    Further, it is a swap of exactly $k$ edges, if and only if the $x$-change only involves standard subtour changes.
\end{lemma}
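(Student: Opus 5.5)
An $x$-change flips $x$ between the first-set case and the second-set case of \cref{lem:tour_edges}. The plan is to count the swapped edges part by part: outside all gadgets, inside the XOR gadget assigned to $x$, and inside each parity gadget related to $x$. The three contributions are added at the end.

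\emph{Edges outside the gadgets.} By \cref{lem:tour_edges}, an $x$-change exchanges the left and right first-set edges of $x$ for all doors of $x$ together with the right second-set edge. The second-set path has $d(x)+1$ doors, counting the closest door to $x_r$ as defined after the XOR assignment, and all of them are toggled. So the external part contributes $2 + (d(x)+1) + 1 = d(x)+4$ edges. Here each toggled edge is counted once, either as removed or as added.

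\emph{Edges inside the XOR gadget.} The XOR gadget has order $p = k-2d(x)-2d^+(x)-1$, and it changes between its two subtours by \cref{obs:xor_external_edges}. For $p\ge 2$ this involves $p-1$ removed and $p-1$ added edges, i.e.\ $2(p-1)$ toggled edges. For $p\le 1$ nothing changes, and I will have to check separately that the right first-set and second-set edges are still counted correctly in this degenerate case.

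\emph{Edges inside the parity gadgets.} Each parity gadget related to $x$ switches between engaged and disengaged at $x$. By \cref{lem:unique_subtours}, this is a change between standard subtours in which the $x$-side ($X$ or $Y$) toggles. Since the other vertex $y$ is fixed, this is exactly a standard subtour change. It involves $2r_x-1$ edges when $x$ plays the role of $X$, or $2r_y-1$ when it plays $Y$. Both values are given by \cref{ob:num_exchanged_edges}. For a strict gadget directed from $x$ to $y$, $x$ takes the $X$-role, so the change involves $7$ edges. In every other case (a flexible gadget, or $x$ the head of the orientation) it involves $3$ edges. Hence the parity gadgets contribute $7d^+(x) + 3(d(x)-d^+(x)) = 3d(x)+4d^+(x)$ when all changes are standard.

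\emph{Adding up.} A swap $(E_1,E_2)$ has $|E_1|=|E_2|$, so the swap size is half the total number of toggled edges. The total is $d(x)+4 + 2(p-1) + 3d(x)+4d^+(x) = 2p + 4d(x)+4d^+(x)+2$. Substituting $p$ gives $2k$, so the swap size is $k$.

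\emph{Non-standard changes.} If some related gadget instead changes between subtours (1) and (4) or between (2) and (3), it involves strictly more than $2\max\{r_x,r_y\}-1$ edges by \cref{def:rx_ry_pg}, so the count strictly increases. This gives both the lower bound and the ``if and only if''.

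\emph{Main obstacle.} The delicate step is the boundary accounting at the XOR gadget. There I must verify that the right first-set edge, the right second-set edge and the closest door are neither double-counted nor omitted relative to the $p-1$ internal swaps, including for $p\in\{0,1\}$. I would also need to carefully justify that each related parity gadget toggles only on $x$'s side. This requires that the $x$-change is defined as keeping every other $H$-vertex fixed.
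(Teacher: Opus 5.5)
Your count is the paper's count: the same split into first-set/door edges, XOR gadget, and related parity gadgets; the same $7d^+(x)+3(d(x)-d^+(x))=3d(x)+4d^+(x)$ for the parity gadgets; and the same use of the last condition of \cref{def:rx_ry_pg} for the strict increase and the ``if and only if''. The one step you left open, $p=0$, still has to be closed. It can occur, e.g.\ for $k=15$, $d(x)=5$, $d^+(x)=2$. In that case the left and right first-set edges are the single edge $x_\ell x_r$, and the right second-set edge is the closest door $vx_r$. So the external part involves $d(x)+2$ edges, not $d(x)+4$. Since the XOR gadget is empty and contributes $0$, your decomposition as justified would overshoot by $2$. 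Your displayed total is right only because the formal value $2(p-1)=-2$ cancels the double count. The correct total $d(x)+2p+2$ holds for all $p\ge 0$. The paper avoids the case split by grouping differently. It charges the left first-set edge and the $d(x)+1$ doors as $d(x)+2$ edges. It charges the right first-set edge, the right second-set edge and the XOR interior together, and this group involves exactly $2p$ edges for every $p$: $0$, $2$, and $2+2(p-1)$ for $p=0$, $p=1$, and $p\ge 2$. Your treatment of $p=1$ is already correct.
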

\begin{proof}
    We consider the $x$-change from the first-set case of \cref{lem:tour_edges} to the second-set case.
    For the other direction, we just switch the two sets of the swap, and hence the number of edges involved is the same.
    Recall that $d(x)$ is the degree of $x$ in $H$ (and in $\overrightarrow{H}$) and $d^+(x)$ is the out-degree of $x$ in $\overrightarrow{H}$.
    We count the involved edges in the following three parts of the $x$-change.
    
    Firstly, we need to remove the left first-set edge and add $d(x)+1$ doors.
    Hence, the total involved edges for this part is $d(x) + 2$.
    
    Secondly, if the order $p$ of the XOR gadget assigned to $x$ is nonzero, we need to remove the right first-set edge and add the right second-set edge.
    In addition, we need to change the subtour in the XOR gadget, by removing and adding $p-1$ edges.
    In any case, the number of involved edges is $2p$.
    
    Lastly, we need to make the change in all parity gadgets related to $x$; in other words, these gadgets change from being disengaged to being engaged at $x$.
    If we only use standard subtour changes, then by construction and \cref{ob:num_exchanged_edges}, in total, 
    we involve exactly $d^+(x) (2 \cdot 4 - 1) + \big(d(x) - d^+(x)\big) (2 \cdot 2 - 1) = 3d(x) + 4d^+(x)$.
    Otherwise, by the last condition in \cref{def:rx_ry_pg}, the number of removed and added internal edges would be more than $3d(x) + 4d^+(x)$.
    
    Adding up all of the above and noting that $p = k - 2d(x)-  2d^+(x) - 1$, we obtain that the total number of involved edges is exactly $d(x) + 2 + 2p + 3 d(x) +  4d^+(x) = 2k$ edges if we only use standard subtour changes, yielding a swap of exactly $k$ edges in this case.
    If there is some nonstandard subtour change, then the total number of involved edges is more than $2k$, and hence, it cannot be a swap of exactly $k$ edges.
    The lemma then follows.
\end{proof}

We define $\phi$ to be the function that maps a standard tour $\tau$ of $G$ to a cut $\gamma$ of $H$ such that for every $H$-vertex $x$, $x$ is in the first set, if and only if the first-set case of \cref{lem:tour_edges} holds for $G$ and $x$.

\begin{lemma}
\label{lem:phi_bijection}
	The function $\phi$ is a bijection between the standard tours in $G$ and the cuts in $H$.
	Further, for any cut $\gamma$ of $H$, $\phi^{-1}(\gamma)$ can be constructed in polynomial time.
\end{lemma}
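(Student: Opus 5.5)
To prove the statement, I would show that $\phi$ is injective, then that it is surjective, and I would obtain surjectivity constructively so that it also yields the polynomial-time construction of $\phi^{-1}(\gamma)$.

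\emph{Injectivity.} Let $\tau$ be a standard tour. By \cref{lem:tour_edges}, the cut $\phi(\tau)$ fixes for every $H$-vertex $x$ the following edges: the left and right first-set edges of $x$, all doors of $x$, the right second-set edge of $x$, and the subtour in the XOR gadget assigned to $x$. For each parity gadget related to $x$ and $y$, the cut also fixes whether it is engaged or disengaged at $x$ and at $y$. Engaged/disengaged at $x$ and at $y$ corresponds to exactly one of the standard subtours (1)--(4). For example, subtour (1) means disengaged at both, and subtour (4) means engaged at both. By \cref{lem:unique_subtours}, each standard subtour is unique, so the subtour of $\tau$ in every parity gadget is determined. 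The remaining edges of $G$ are the cycle edges not assigned to any $H$-vertex or $H$-edge, and the edges incident to degree-two vertices. These must lie in every tour. Hence every edge of $\tau$ is determined by $\phi(\tau)$, and $\phi$ is injective.

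\emph{Surjectivity and construction.} Given a cut $\gamma$, I would build the edge set $\tau_\gamma$ following the same rule. For each $H$-vertex $x$, I take the first-set case or the second-set case of \cref{lem:tour_edges} according to the side of $x$ in $\gamma$. In each parity gadget I take the standard subtour matching the sides of its two related vertices, and I add all forced edges. This is clearly polynomial, and $\tau_\gamma$ is standard by construction. First I would check degrees: every vertex has degree two in $\tau_\gamma$. At $X,X'$ of a parity gadget, the gadget is engaged exactly when the adjacent doors are used, which gives degree two. At $x_\ell$ and $x_r$, the choices are consistent with \cref{obs:xor_external_edges}. At $Z,Z'$, degree two holds by the subtour definition.

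The main obstacle is connectivity: I must show $\tau_\gamma$ is a single Hamiltonian cycle rather than a union of disjoint cycles. I would trace the tour along the original cycle of length $3(n+m)$. At each vertex-assigned position, the tour travels from $x_\ell$ to $x_r$ either through the first-set edge plus the XOR subtour, or through the second-set path plus the XOR subtour. At each $H$-edge position, it travels from $\bar Z$ to $\bar Z'$ through the $ZZ'$-path of the parity gadget. The $XX'$- and $YY'$-paths of engaged gadgets are exactly the pieces that connect consecutive doors along second-set paths. The $XX'$-path replaces the removed gateway with a path that covers part of the gadget's vertices.

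The remaining point is that each parity gadget's vertices are all covered by the union of its $ZZ'$-path and whichever $XX'$/$YY'$-paths are present. This holds because each standard subtour is a spanning path system by definition. Likewise, the XOR subtour spans its gadget. Hence the walk along the big cycle visits every vertex exactly once and closes up, so $\tau_\gamma$ is a tour. Finally, $\phi(\tau_\gamma)=\gamma$ by construction, which completes the bijection and gives $\phi^{-1}(\gamma)=\tau_\gamma$.
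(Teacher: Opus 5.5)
Your proposal is correct and follows the paper's proof. Injectivity comes from \cref{lem:tour_edges}, and you rightly add that \cref{lem:unique_subtours} is what pins down the parity-gadget subtours. Surjectivity uses the same explicit construction of $\phi^{-1}(\gamma)$: the forced edges, the first-set or second-set case for each $H$-vertex, and the matching subtour (1)--(4) for each $H$-edge. Your degree check and your walk along the original $3(n+m)$-cycle spell out the Hamiltonicity that the paper leaves as ``easy to see by construction.''
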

\begin{proof}
	\cref{lem:tour_edges} implies that $\phi$ is injective.
	We now prove that it is surjective.
	Let $\gamma$ be an arbitrary cut of $H$.
	We construct a standard tour $\tau$ in $G$ as follows.
	After initializing the tour to be the empty graph, we add the incident edges of all degree-two vertices in $G$ to the tour.
	Then for each $H$-vertex $x$, if $x$ is in the first set of the cut, we add the left and right first-set edges of $x$ and the subtour in the XOR gadget assigned to $x$ incident to these edges.
    Otherwise, we add all the doors and the right second-set edge of $x$, as well as the subtour in the XOR gadget assigned to $x$ incident to the right second-set edge of $x$ and the closest door to $x_r$.
	Finally, for each $H$-edge $xy$, we add the following subtour in the corresponding gadget: We use the subtour (1) if $x$ and $y$ are in the first set, subtour (2) if $x$ is in the second set and $y$ in the first set, subtour (3) if $x$ is in the first set and $y$ in the second set, and subtour (4) if $x$ and $y$ are in the second set.
	By construction, it is easy to see that $\tau$ is a standard tour of $G$ and that $\phi(\tau) = \gamma$.
	The lemma then follows. 
\end{proof}

We now prove the correspondence between the flip sequence in $H$ and the $k$-swap sequence in $G$ in the following two lemmata.

\begin{lemma}
\label{lem:correspondence_13_1}
    Let $\tau_1$ be a standard tour of $G$, and let $\gamma_1 := \phi(\tau_1)$.
    Then for any cut $\gamma_2$ obtained from $\gamma_1$ by an improving flip, the tour $\phi^{-1}(\gamma_2)$ can be obtained from $\tau_1$ by an improving $k$-swap.
\end{lemma}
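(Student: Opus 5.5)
Let $x$ be the $H$-vertex flipped when going from $\gamma_1$ to $\gamma_2$, and set $\tau_2 := \phi^{-1}(\gamma_2)$, which exists and is standard by \cref{lem:phi_bijection}. I will show two things. First, $\tau_2$ differs from $\tau_1$ by an $x$-change that uses only standard subtour changes, so it is a swap of exactly $k$ edges. Second, the weight of $\tau_2$ is strictly smaller than that of $\tau_1$.

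\emph{Step 1: the two tours differ only at $x$.} By \cref{lem:tour_edges} and the construction in the proof of \cref{lem:phi_bijection}, a standard tour is determined locally by the cut. For each $H$-vertex $z$, it determines which first-set edges, doors, right second-set edge and XOR subtour are used at $z$. For each $H$-edge $yz$, it determines which of the subtours (1)--(4) is used in the corresponding parity gadget. Every $H$-vertex other than $x$ keeps its side, so $\tau_1$ and $\tau_2$ agree on
\begin{itemize}
\item all edges attached to those other vertices,
\item all parity gadgets not related to $x$, and
\item all edges incident to degree-two vertices, which lie in every tour.
\end{itemize}
Within each parity gadget related to $x$ and some neighbour $y$, the side of $y$ is unchanged. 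The subtour therefore moves between (1) and (2), or between (3) and (4), if $x$ plays the role of $X$. It moves between (1) and (3), or between (2) and (4), if $x$ plays the role of $Y$. All of these are standard subtour changes. Hence the symmetric difference of $\tau_1$ and $\tau_2$ is exactly an $x$-change involving only standard subtour changes. By \cref{lem:x-change} it is a swap of exactly $k$ edges, so in particular a $k$-swap.

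\emph{Step 2: the swap is improving.} Only parity-gadget edges have nonzero weight, so the change in tour weight comes entirely from the gadgets related to $x$. By \cref{lem:unique_subtours}, the tour's weight inside the gadget for $xy$ is the same-set weight when $x$ and $y$ lie on the same side, and the different-set weight otherwise. Under the assignment rule, this contribution equals
\[
  \max\{w(xy),0\} - w(xy)\,[\,xy \text{ is in the cut-set}\,].
\]
Summing over all gadgets, the tour weight is a constant minus the cut value; this holds for every standard tour. The flip is improving, so the cut value strictly increases. The tour weight therefore strictly decreases, and the $k$-swap from $\tau_1$ to $\tau_2$ is improving.

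The main point needing care is Step 1: checking that the symmetric difference really is the $x$-change in the sense of \cref{lem:x-change}, with every affected parity gadget undergoing a standard change of the correct type. This rests on the uniqueness of the standard subtours (\cref{lem:unique_subtours}) and on the role of $x$ as the $X$-side or $Y$-side of each gadget. Everything else is bookkeeping.
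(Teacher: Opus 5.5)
Your proof is correct and follows the paper's argument: you identify the swap as an $x$-change that uses only standard subtour changes, so it swaps exactly $k$ edges by \cref{lem:x-change}, and you compare weights via \cref{lem:unique_subtours}. The differences are cosmetic. You start from $\phi^{-1}(\gamma_2)$ and show it differs from $\tau_1$ by that $x$-change, whereas the paper performs the $x$-change and identifies the result. You also phrase the weight comparison as the global identity that every standard tour has weight $\sum_{xy}\max\{w(xy),0\}$ minus the cut value, where the paper computes the local difference $-\Delta$ directly.
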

\begin{proof}
    Suppose the cuts $\gamma_1$ and $\gamma_2$ differ by an $H$-vertex $x$. 
    Let $\tau_2$ be the tour obtained by performing an $x$-change on $\tau_1$ with only standard subtour changes.
    By the definition of $x$-change, it is easy to see that $\tau_2 = \phi^{-1}(\gamma_2)$.
    Further, by \cref{lem:x-change}, we can transform $\tau_1$ to $\tau_2$ by performing a $k$-swap.

    Hence, what remains to be shown is that the $x$-change is improving.
    Suppose $P$ and $Q$ are the two sets of $\gamma_1$, such that $x \in P$. (We do not assume which one of $P$ and $Q$ is the first set.) 
    Then for the change from $\gamma_1$ to $\gamma_2$, the change in value for the Max-Cut instance is
    \[
      - \sum_{xy \in E(H) : y \in Q} w(xy) + \sum_{xy \in E(H) : y \in P} w(xy) := \Delta.
    \]
    Since $\gamma_2$ is the result of an improving flip from $\gamma_1$, we have $\Delta > 0$.

    By \cref{lem:unique_subtours}, for every parity gadget, the total weights of subtours (1) and (4) are the same-set weight of the gadget, while those of subtours (2) and (3) are the different-set weight.
    Further, note that for each parity gadget related to $x$, by the definition of the $x$-change, the subtour in the gadget in exactly one of $\tau_1$ and $\tau_2$ is either subtour (1) or (4) and that in the other tour is either subtour (2) or (3).    
    Since $\gamma_1 = \phi(\tau_1)$ and $\gamma_2 = \phi(\tau_2)$, by the preceding two sentences and the assignment of the same-set and different-set weights to the gadgets, the change of the total edge weight of the tour from $\tau_1$ to $\tau_2$ is
    \begin{align*}
        & - \Big(\sum_{\substack{xy \in E(H) : \\ y \in P, w(xy) \geq 0}} w(xy) - \sum_{\substack{xy \in E(H) : \\ y \in Q, w(xy) < 0}} w(xy) \Big) + \Big(\sum_{\substack{xy \in E(H) : \\ y \in Q, w(xy) \geq 0}} w(xy) - \sum_{\substack{xy \in E(H) : \\ y \in P, w(xy) < 0}} w(xy) \Big) \\
        = & \sum_{\substack{xy \in E(H) : \\ y \in Q}} w(xy) - \sum_{\substack{xy \in E(H) : \\ y \in P}} w(xy) = -\Delta < 0.
    \end{align*}

    The lemma then follows.    
\end{proof}

The next lemma shows the reverse correspondence.
\begin{lemma}
\label{lem:correspondence_13_2}
    Let $\tau_1$ be a standard tour of $G$, and let $\gamma_1 = \phi(\tau_1)$.
    Then for any standard tour $\tau_2$, obtained from $\tau_1$ by performing an improving $k$-swap, the cut $\phi(\tau_2)$ can be obtained from $\gamma_1$ by an improving flip.
\end{lemma}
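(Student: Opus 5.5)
The plan is to show that $\tau_1$ and $\tau_2$ differ in the first-set/second-set case of \cref{lem:tour_edges} for exactly one $H$-vertex $x$. Once this is known, $\phi(\tau_2)$ is obtained from $\gamma_1$ by flipping $x$, and the weight computation in the proof of \cref{lem:correspondence_13_1}, run in reverse, shows the flip is improving. Concretely, $\tau_2$ is exactly the result of an $x$-change applied to $\tau_1$ (both are standard, so all gadget subtours are determined by $\phi$ via \cref{lem:phi_bijection}). Hence the change of tour weight equals $-\Delta$, where $\Delta$ is the change of cut value. Since the swap is improving, $-\Delta<0$, so $\Delta>0$.

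\textbf{Step 1: $\tau_1\neq\tau_2$ and they differ at some vertex.} Because the swap is improving, $\tau_2\neq\tau_1$. By the injectivity of $\phi$ in \cref{lem:phi_bijection}, $\phi(\tau_1)\neq\phi(\tau_2)$. Let $D$ be the set of $H$-vertices whose case differs between $\tau_1$ and $\tau_2$; then $D\neq\emptyset$.

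\textbf{Step 2: $|D|=1$ by edge counting.} Symmetric difference counting is the main obstacle, since the edge sets of distinct $x$-changes may overlap inside parity gadgets related to two flipped vertices. The plan is to count, for each $x\in D$, the edges in $E(\tau_1)\triangle E(\tau_2)$ that are \emph{private} to $x$: its left first-set edge, its doors, its right first-set and right second-set edges, and the internal edges of its XOR gadget. From the proof of \cref{lem:x-change}, these number $d(x)+2+2p(x)=2k-3d(x)-4d^+(x)$, with $p(x)=k-2d(x)-2d^+(x)-1$. These private edges are disjoint across different $x$. In addition, every parity gadget related to some $x\in D$ contributes internal edges to the symmetric difference. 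If only $x$ among its two endpoints flips, the subtour change is a standard change with $2r_x-1$ edges. If both endpoints flip, it is a change between (1) and (4) or between (2) and (3), which by \cref{def:rx_ry_pg} involves more than $2\max\{r_x,r_y\}-1$ edges.

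I would then show that the total exceeds $2k$ whenever $|D|\ge2$. Suppose first that two vertices $x,y\in D$ share no gadget. Then the count is at least $2k+2k$, since each vertex gets its full $x$-change count $2k$ by \cref{lem:x-change}. If $x$ and $y$ share a gadget, charge that gadget's change, of size at least $2\max\{r_x,r_y\}+1$ (the cost is odd, being symmetric difference of path covers with fixed endpoints parity—or simply bounded below), to whichever endpoint it benefits. Even in the worst case, each vertex retains its private count plus its other gadgets. This gives roughly $2k-3d(x)-4d^+(x)$ per vertex, plus at least $3$ per gadget. Summing over two vertices, this is already well beyond $2k$, since $2k-3d-4d^+\ge 2k-23$ and $k\ge15$ gives $4k-46+\dots>2k$. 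More simply, the private count alone is at least $2p(x)+d(x)+2\ge d(x)+2$ per vertex, and I will check that the private count summed with the shared gadget costs is $>2k$. The cleanest route is to bound each $x$-change's count below by $2k$ minus the shared-gadget savings, and observe the savings are small compared to $2k$. A $k$-swap involves at most $2k$ edges, so this forces $|D|=1$.

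\textbf{Step 3: conclude.} With $D=\{x\}$, every gadget not related to $x$ keeps its subtour. Every gadget related to $x$ changes by a standard change, so $\tau_2$ is the $x$-change of $\tau_1$. The weight argument from \cref{lem:correspondence_13_1} then finishes the proof as described at the start.
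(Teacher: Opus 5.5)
Your proposal is correct and follows the paper's route. An $x$-change alone already involves $2k$ edges, so a $k$-swap between standard tours must be a single $x$-change, and the weight computation of \cref{lem:correspondence_13_1} then shows the flip is improving; your explicit handling of a parity gadget shared by two flipped vertices is more careful than the paper's one-line argument.

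Two slips in Step~2 are harmless but worth fixing. First, a $(1)\leftrightarrow(4)$ or $(2)\leftrightarrow(3)$ change involves an even number of edges ($8$ in the strict gadget, $4$ in the flexible one), so you only get $\ge 2\max\{r_x,r_y\}$, not $2\max\{r_x,r_y\}+1$. Second, the private counts alone give only $4k-46\le 2k$ for $k\le 23$. The clean count is to charge every parity gadget related to one $x\in D$ to $x$; each costs $\ge 2r_x-1$ whatever its other endpoint does, giving $\ge 2k$ in total. Any second $y\in D$ then adds at least $d(y)+2\ge 2$ disjoint private edges, exceeding $2k$.
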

\begin{proof}
    If $\tau_2$ is obtained from $\tau_1$ by performing an $x$-change for some $H$-vertex $x$, then we can define~$\gamma_2$ to be the cut obtained from~$\gamma_1$ by flipping $x$.
    Using an analogous argument as in the proof of \cref{lem:correspondence_13_1}, we can see that $\gamma_2 = \phi(\tau_2)$ and $\gamma_2$ can be obtained from $\gamma_1$ by an improving flip.
    Hence, it remains to prove that any $k$-swap from $\tau_1$ to another standard tour has to be an $x$-change, for some $x$.

    By \cref{lem:tour_edges}, if we remove a first-set edge of an $H$-vertex~$x$, a door of $x$, or the second-set edge of~$x$, then we have to do an $x$-change in order to maintain a standard tour.
    Since the $x$-change already involves a swap of at least $k$ edges, we cannot swap any other edge.
    If we do not change any first-set edge or edges on a second-set path, then by the design of parity gadgets, we cannot change the subtour in any gadget.
    As a result, the tour remains the same.
    This completes the proof of the lemma.
\end{proof}

Lemmas~\ref{lem:phi_bijection}, \ref{lem:correspondence_13_1}, and \ref{lem:correspondence_13_2} immediately imply the following.

\begin{corollary}
\label{lem:correspondence}
	The transition graph of the Max-Cut instance $(H,w)$ is isomorphic to the subgraph of the transition graph of the TSP/\kopt instance $(G_{\infty}, c_{\infty})$ induced by the standard tours of $G$.
\end{corollary}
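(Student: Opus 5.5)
The isomorphism will be $\phi$ itself, restricted to the standard tours of $G$. I would argue in three steps, taking the three lemmata just proved as given.

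First, the vertex sets. The transition graph of $(H,w)$ has the cuts of $H$ as vertices. The induced subgraph of the transition graph of $(G_\infty,c_\infty)$ has as vertices the standard tours of $G$. Each of these is a Hamiltonian cycle of $G$, hence of the complete graph $G_\infty$, so it is a feasible solution of the TSP instance. Its cost under $c_\infty$ equals its cost under $c$, because it contains no non-edges. By \cref{lem:phi_bijection}, $\phi$ is a bijection between these two vertex sets.

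Second, edges of the Max-Cut transition graph map to edges of the induced subgraph. Take an arc $(\gamma_1,\gamma_2)$ of the Max-Cut transition graph, i.e.\ $\gamma_2$ arises from $\gamma_1$ by an improving flip. Apply \cref{lem:correspondence_13_1} with $\tau_1=\phi^{-1}(\gamma_1)$. It shows that $\phi^{-1}(\gamma_2)$ arises from $\tau_1$ by an improving $k$-swap. So $\phi^{-1}(\gamma_2)$ is a \kopt neighbour of $\tau_1$ with strictly smaller cost under $c_\infty$ (the swapped edges lie in $G$). Hence $(\tau_1,\phi^{-1}(\gamma_2))$ is an arc of the TSP transition graph, and both endpoints are standard.

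Third, the converse. Take an arc $(\tau_1,\tau_2)$ of the induced subgraph, so both tours are standard and $\tau_2$ is obtained from $\tau_1$ by an improving $k$-swap in $G_\infty$. The swap only exchanges edges between two tours of $G$, so it is a $k$-swap within $G$ and is improving with respect to $c$. By \cref{lem:correspondence_13_2}, $\phi(\tau_2)$ arises from $\phi(\tau_1)$ by an improving flip, which gives the required arc.

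Together these show that $\phi$ preserves arcs in both directions, so it is a digraph isomorphism. The only point needing care is the step from $c_\infty$ to $c$. It holds because standard tours, and swaps between them, use only edges of $G$, so their costs are the same under $c$ and $c_\infty$ whatever weights are later given to the non-edges.
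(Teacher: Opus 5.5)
Your proposal is correct and matches the paper's argument: the paper obtains the corollary directly from \cref{lem:phi_bijection}, \cref{lem:correspondence_13_1}, and \cref{lem:correspondence_13_2}, with $\phi$ as the isomorphism. Your extra checks leave the route unchanged; they spell out what the paper leaves implicit, namely that standard tours have the same cost under $c$ and $c_\infty$, and that a $k$-swap between two standard tours can be taken to use only $G$-edges.
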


\section{Non-standard tours}
\label{sec:flexible}

In \cref{sec:correspondence}, we show a one-to-one correspondence between the standard tours of $G$ and the cuts of $H$ that preserves the local neighborhood.
We now argue that our construction ensures that $G$ does not have any non-standard tour.
The key idea is the following two forcing rules of strictness.

\begin{observation}
\label{lem:forcing_a}
    In a tour of $G$, if a parity gadget is strict at one related $H$-vertex, then it is also strict at the other related $H$-vertex. 
\end{observation}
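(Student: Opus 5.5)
The plan is to reduce the statement to \cref{obs:strict_iff_standard}, which connects strictness to the standardness of the subtour. Strictness is defined separately at each related $H$-vertex: at $x$ it concerns the degrees of $X,X'$ in the subtour, and at $y$ it concerns $Y,Y'$. \cref{obs:strict_iff_standard}, however, says that strictness at a related vertex is equivalent to a property of the subtour alone, namely being standard. That property does not depend on which related vertex we are looking at. So I would argue as follows. Let $\alpha$ be a parity gadget related to $x$ and $y$, and let $\tau$ be a tour of $G$. If $\alpha$ is strict at $x$ in $\tau$, then applying \cref{obs:strict_iff_standard} with vertex $x$ shows that the subtour of $\tau$ in $\alpha$ is standard. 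Applying it again with vertex $y$ shows that $\alpha$ is strict at $y$ in $\tau$.

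For completeness, I would also justify the underlying equivalence directly from \cref{lem:unique_subtours} and the figures, since the whole argument rests on it. By the preceding lemma, the subtour of $\tau$ in $\alpha$ is a subtour of $\alpha$. For the strict gadget, all subtours are standard. In each standard subtour (1)--(4), both $X$ and $X'$ are endpoints of an $XX'$-path (degree one) or both are interior (degree two), and the same holds for $Y,Y'$. Hence every standard subtour is strict at both related vertices.

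For the flexible gadget, I would inspect the three non-standard subtours listed in \cref{fig:flexible_gadget}. In each of them the pair $X,X'$ has unequal degrees and the pair $Y,Y'$ has unequal degrees. This matches the endpoint structure: a non-standard subtour has endpoints such as $X$ and $Y$ without $X'$ and $Y'$. So every non-standard subtour is non-strict at both related vertices.

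The only real work, and the step I expect to be the main obstacle, is this finite check that no flexible-gadget subtour is strict at one side but not the other. A parity argument supports it. The path endpoints form the multiset $\{Z,Z'\}$ together with an even number of vertices from $\{X,X',Y,Y'\}$. Together with the explicit enumeration in \cref{lem:unique_subtours}, this shows that breaking the degree symmetry on the $X$-side forces a broken symmetry on the $Y$-side.
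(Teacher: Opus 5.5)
Your proposal is correct and follows the paper, which states this as an observation that comes straight from \cref{obs:strict_iff_standard}: strictness at either related vertex is equivalent to the subtour being standard, a property that does not depend on the vertex. Your check of the figures (standard subtours have equal degrees on both sides; the three non-standard flexible subtours have unequal degrees on both sides) is exactly what underlies that observation. Incidentally, your parity count proves the statement on its own, without any enumeration: every vertex of $\{X,X',Y,Y'\}$ has degree one or two in a subtour, and the number of degree-one vertices among them is even, so an even count in $\{X,X'\}$ forces an even count in $\{Y,Y'\}$.
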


\begin{lemma}
\label{lem:forcing_b}
    Suppose $x$ is an $H$-vertex with degree $d(x)$.
    If $d(x)-1$ parity gadgets related to $x$ are strict at $x$ in a tour of $G$, then the remaining parity gadget related to $x$ is also strict at $x$.
\end{lemma}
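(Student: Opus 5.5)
The plan is a parity (degree-counting) argument along the second-set path of $x$. Write the path as $(x'_0 = x_\ell, x_1, x'_1, \dots, x_{d(x)}, x'_{d(x)}, x_{d(x)+1})$, as in the proof of \cref{lem:tour_edges}. Here the doors are $x'_{i-1}x_i$ for $i=1,\dots,d(x)+1$, and $x_i, x'_i$ are the vertices $X, X'$ (or $Y, Y'$) of the $i$-th parity gadget related to $x$. Every tour vertex has degree two. Each $x_i$ with $1 \le i \le d(x)$ is incident only to its door $x'_{i-1}x_i$ and to internal edges of the $i$-th gadget. So the degree of $x_i$ in the subtour of that gadget is $2$ minus the indicator that the door $x'_{i-1}x_i$ lies in the tour. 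The same holds for $x'_i$ with the door $x'_ix_{i+1}$. Hence the $i$-th gadget is strict at $x$ if and only if the two doors $x'_{i-1}x_i$ and $x'_ix_{i+1}$ are both in the tour or both out of it.

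Next I use \cref{lem:end_doors}: the first door $x_\ell x_1$ and the last door (the closest door to $x_r$) are either both in the tour or both absent. Let $s_j \in \{0,1\}$ indicate that the $j$-th door lies in the tour, for $j = 1,\dots,d(x)+1$. Gadget $i$ is strict at $x$ exactly when $s_i = s_{i+1}$. Telescoping, $\sum_{i=1}^{d(x)} (s_{i+1}-s_i) = s_{d(x)+1} - s_1 = 0$. More simply, the number of indices $i$ with $s_i \neq s_{i+1}$ must be even, since $s_1 = s_{d(x)+1}$.

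Therefore, if $d(x)-1$ gadgets are strict at $x$, at most one index has $s_i \ne s_{i+1}$. Since the number of such indices is even, it is zero, so the remaining gadget is strict at $x$ as well.

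The only delicate step is justifying the degree claim for $x_i$ and $x'_i$. This requires that, after the gateways are removed and the gadgets identified, these vertices have no $G$-neighbours other than their door and the gadget's internal edges. It also requires the conventions at the ends: $x'_0 = x_\ell$, and the door at $x'_{d(x)}$ is the external edge of the XOR gadget, so \cref{lem:end_doors} applies to it. Both facts follow directly from the construction in \cref{sec:reduction}, so I expect no real obstacle.
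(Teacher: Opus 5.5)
Your proof is correct and follows the same route as the paper: strictness of a gadget at $x$ means the two doors flanking it have the same tour status, \cref{lem:end_doors} ties the first door to the closest door to $x_r$, and so the $d(x)-1$ strict gadgets force all doors to agree, which makes the remaining gadget strict. Your parity/telescoping count and the explicit degree check at $x_i, x'_i$ spell out what the paper's proof states in one line.
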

\begin{proof}
    By \cref{lem:end_doors}, the door incident to $x_{\ell}$ and the closest door to $x_r$ have to be either both present or both absent in the tour.
    Together with the strictness of the other $d(x)-1$ gadgets, it follows that either all doors of $x$ are in the tour or all doors of $x$ are not in the tour.
    This implies the strictness at $x$ for the remaining gadget.
\end{proof}

From the forcing rules above, we have a sufficient condition of the assignment of strict and flexible gadgets to ensure that all parity gadgets are always strict.

\begin{lemma}
\label{lem:strict_flexible_gadgets}
    If the $H$-edges corresponding to the flexible gadgets form a forest in $H$, then all parity gadgets are strict at their related $H$-vertices in any tour of $G$.
\end{lemma}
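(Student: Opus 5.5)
The plan is to propagate strictness through $H$ using the two forcing rules, by induction on the number of flexible gadgets, peeling leaves off the forest. Fix an arbitrary tour $\tau$ of $G$. Each strict gadget has no non-standard subtours (\cref{lem:unique_subtours}), so its subtour in $\tau$ is standard. By \cref{obs:strict_iff_standard} it is therefore strict at both related $H$-vertices. So only the flexible gadgets need work. Let $F$ be the set of $H$-edges carrying flexible gadgets; by hypothesis $(V(H),F)$ is a forest.

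I will show by induction on $|F'|$ that for every subset $F'\subseteq F$, the following holds: if all gadgets on $H$-edges outside $F'$ are strict at their related vertices, then so are those on $F'$. Applying this with $F'=F$ gives the lemma, because the gadgets outside $F$ are strict gadgets and are strict by the previous paragraph.

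For the inductive step, take $F'\neq\emptyset$. Since $(V(H),F')$ is a nonempty forest, it has a leaf $x$ incident to exactly one edge $xy\in F'$. Every other $H$-edge at $x$ lies outside $F'$. By the induction assumption, its gadget is strict at both related vertices, and in particular at $x$. So $d(x)-1$ of the parity gadgets related to $x$ are strict at $x$. By \cref{lem:forcing_b}, the gadget on $xy$ is then strict at $x$. By \cref{lem:forcing_a}, it is also strict at $y$. Now every gadget on an edge outside $F'\setminus\{xy\}$ is strict at both related vertices, so the induction hypothesis applied to $F'\setminus\{xy\}$ finishes the step.

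The main point needing care is justifying \cref{lem:forcing_a}, which is stated as an observation. Strictness at $x$ means $X$ and $X'$ have equal degree in the subtour. From the list in \cref{lem:unique_subtours}, each of the three non-standard flexible subtours has unequal degrees on both the $\{X,X'\}$ pair and the $\{Y,Y'\}$ pair. Hence strictness at one side forces a standard subtour, which is strict at both sides; this is exactly \cref{obs:strict_iff_standard}. A minor check is that the degree-based definition of strictness is symmetric in which pair ($X,X'$ or $Y,Y'$) lies on $x$'s second-set path. Multiple edges do not arise because $H$ is simple.
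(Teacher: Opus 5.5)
Your proof is correct and follows the paper's argument. Strict gadgets are handled by \cref{lem:unique_subtours} together with \cref{obs:strict_iff_standard}, and flexible gadgets by repeatedly peeling a leaf of the forest, applying \cref{lem:forcing_b} at the leaf and then \cref{lem:forcing_a} at the other endpoint. Your formal induction over subsets $F'$, and your check that each non-standard flexible subtour is unbalanced on both the $\{X,X'\}$ and $\{Y,Y'\}$ pairs, just make explicit what the paper leaves as a ``simple recursive argument'' and an observation.
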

\begin{proof}
    Note that the strict gadget has only standard subtours, and hence, we only need to consider the flexible gadget.
    The lemma is obtained from a simple recursive argument on a leaf of the forest, as follows.
    Consider the subgraph of $H$ containing all $H$-edges corresponding to the flexible gadgets that have not been proved to satisfy the property in the lemma.
    Then this subgraph is a forest.
    Applying \cref{lem:forcing_b} to a leaf of the forest, we obtain the strictness at the leaf $H$-vertex for the flexible gadget corresponding to the edge incident to this leaf.
    Combined with \cref{lem:forcing_a}, this flexible gadget is strict at both its related $H$-vertices.
\end{proof}

We now show that our construction satisfies the condition in \cref{lem:strict_flexible_gadgets} and hence $G$ cannot have non-standard tours.

\begin{corollary}
\label{cor:no_nonstandard}
	All tours of $G$ are standard tours.
\end{corollary}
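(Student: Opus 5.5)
The plan is to verify the hypothesis of \cref{lem:strict_flexible_gadgets} for our construction. Once every parity gadget is known to be strict at its related $H$-vertices in every tour, \cref{obs:strict_iff_standard} shows that every parity-gadget subtour is standard. It then remains to check that the XOR-gadget subtours are standard, which settles the corollary.

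First, recall how the gadgets were assigned. Flexible gadgets were placed exactly on the undirected $H$-edges of $\overrightarrow{H}$, namely on the edge set $E_1$ from \cref{lem:partialedgeorientation}. That lemma guarantees that $(V,E_1)$ is a disjoint union of stars. A star is a tree, so $E_1$ is a forest in $H$, and \cref{lem:strict_flexible_gadgets} applies. Its conclusion is that every parity gadget (strict or flexible) is strict at its related $H$-vertices in any tour $\tau$ of $G$.

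By \cref{obs:strict_iff_standard}, the subtour of $\tau$ in each parity gadget is therefore standard. For the XOR gadgets, the lemma at the start of \cref{sec:correspondence} shows that the subtour of $\tau$ in an XOR gadget is a subtour of that gadget. By \cref{def:xor_gadget} and the surrounding discussion, every subtour of an XOR gadget is one of its (at most two) subtours. For XOR gadgets no standard/non-standard distinction was made, so these subtours count as standard. Hence every gadget carries a standard subtour, and $\tau$ is standard by definition.

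The only point needing care is the XOR-gadget case: we must confirm that "standard subtour" for an XOR gadget simply means any subtour. If a stricter reading were intended, we would invoke \cref{obs:xor_external_edges} together with \cref{lem:end_doors} to see that the two possible XOR subtours are exactly those arising in the first-set and second-set cases. Otherwise the argument is a direct combination of \cref{lem:partialedgeorientation} and \cref{lem:strict_flexible_gadgets}.
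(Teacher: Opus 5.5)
Your proof is correct and matches the paper's: the flexible gadgets sit on the star forest $E_1$ from \cref{lem:partialedgeorientation}, so \cref{lem:strict_flexible_gadgets} combined with \cref{obs:strict_iff_standard} gives the corollary. Your extra check of the XOR gadgets is harmless; the paper defines standard subtours only for parity gadgets, so it omits that step.
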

\begin{proof}
	By construction and \cref{lem:partialedgeorientation}, the $H$-edges corresponding to the flexible gadgets form a disjoint union of trees.
 Hence, by \cref{lem:strict_flexible_gadgets}, in any tour of $G$, all parity gadgets are strict at their related $H$-vertices.
	Combined with \cref{obs:strict_iff_standard}, this implies the corollary.
\end{proof}

\section{No local optimum with non-edges}
\label{sec:no_infty_opt}

Recall that the non-edges are the edges not in $G$ and are only added to make a complete graph $G_{\infty}$.
In this section, we describe the weight assignment to the non-edges, completing the weight function $c_{\infty}$ for this complete graph.
Moreover, we show that no locally optimal tour contains a non-edge (\cref{lem:no_infty_local_optima} below).
Note that this is precisely where the substantial gap lies in Krentel's proof~\cite{Kre1989}. 
In particular, he assumes without proof that no non-edges can appear in a local optimum. 
The definition of \pls-completeness~(see \cref{def:pls_complete}) requires that the function $g$ maps local optima to local optima. 
Therefore, one has to show either that a local optimum cannot contain non-edges or how to extend the definition of the function $g$ to local optima that contain non-edges. 
Neither are done in the paper of Krentel~\cite{Kre1989}, and there is no obvious way how to fill this gap. 
We are not aware of a generic way to prove such a result for arbitrary TSP instances as for example those constructed by Krentel~\cite{Kre1989}. For proving the PLS-completeness of the Lin-Kernighan heuristic Papadimitriou~\cite{pap1992}  showed that non-edges cannot be part of a local optimum. However, his argument only applies to the special construction he used for his proof. 

We start with the following observation, based on the construction of $G$.

\begin{observation}
\label{lem:degrees}
    The following holds for the graph $G$:
    \begin{enumerate}[(a)]
        \item \label{item:all_degrees} Each vertex of $G$ has degree two, three, or four.
        \item \label{item:adjacent_degree_two} Every vertex in $G$ is either a degree-two vertex or adjacent to one.
        \item \label{item:degree_four} Every degree-four vertex in $G$ is in a parity gadget.
        Every strict gadget in $G$ has four degree-four vertices (namely, $Z, Z', a$, and $Y'$). 
        Every flexible gadget in $G$ also has four degree-four vertices (namely, $Z, Z', X'$, and $Y'$). 
        \end{enumerate}
\end{observation}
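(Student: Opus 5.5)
The plan is to prove all three parts by one exhaustive case analysis over the vertex types of $G$. First I classify every vertex of $G$ by the construction step that created it:
\begin{enumerate}[(i)]
\item vertices of the initial cycle that are not endpoints of an assigned edge;
\item the endpoints $x_{\ell},x_r$ of first-set edges;
\item the endpoints $\bar Z,\bar Z'$ of $xy$-edges, which are identified with $Z,Z'$ of a parity gadget;
\item vertices of second-set paths, which after assigning parity gadgets are identified with $X,X',Y,Y'$ of parity gadgets;
\item internal vertices of parity gadgets other than the six labelled ones;
\item vertices of XOR gadgets, namely the rail vertices $a_i^x,b_i^x$ and the middle vertices of the length-two paths.
\end{enumerate}
For each type I compute the degree in $G$ as (degree inside its gadget) plus (number of external edges), and record one neighbour of degree two.

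For (i) and (ii) I use the spacing of the assigned edges. The cycle has $3(n+m)$ edges, $n+m$ of which are assigned, and any two assigned edges are at distance at least two. So every cycle vertex lies on at most one assigned edge, and every endpoint of an assigned edge has a cycle neighbour not lying on any assigned edge. Such a neighbour has degree two, which gives (i) and the degree-two neighbour needed for (b) for types (ii) and (iii).
\begin{itemize}
\item $x_{\ell}$ has degree three: its cycle edge, its left first-set edge and its door.
\item $x_r$ has degree three: its cycle edge, its right first-set edge and its right second-set edge.
\end{itemize}
For (vi), each rail vertex has two rail or external edges plus one edge to a middle vertex, so degree three and a degree-two neighbour. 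Middle vertices have degree two.

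For (iii)--(v) I read degrees off Figures~\ref{fig:strict_gadget_1} and~\ref{fig:flexible_gadget}. Each labelled vertex $X,X',Y,Y',Z,Z'$ receives exactly one external edge: a cycle edge for $Z,Z'$, and a door (or the closest door to $x_r$ / door at $x_{\ell}$) for $X,X',Y,Y'$. Removing the gateway leaves exactly one door at each of its endpoints, which justifies this count.
\begin{itemize}
\item In the strict gadget, $Z,Z',Y'$ have internal degree three, $X,X',Y$ have internal degree two, and among the unlabelled vertices $a$ has degree four, $a',b,b'$ have degree three, and the subdivision midpoints have degree two.
\item In the flexible gadget, $Z,Z',X',Y'$ have internal degree three, $X,Y$ have internal degree two, and $R,S$ have degree two.
\end{itemize}
Adding the external edge then gives the degree-four vertices listed in (c), and every other vertex gets degree at most three, which proves (a) and (c). For (b), every vertex of a gadget is either a subdivision midpoint or $R,S$ (degree two), or adjacent to one. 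Here $Z,Z'$ are additionally adjacent to the degree-two cycle vertices $Z_2,Z_3$.

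The only delicate point is bookkeeping, specifically the vertex $v=x'_{d(x)}$ and the vertices that XOR gadgets attach to. I must check that subdividing the closest door and the first-set edge does not raise any degree. This holds because subdivision replaces one edge at each endpoint by one edge, so $v$ keeps its single door and $x_{\ell},x_r$ keep degree three. I must also check that an identified vertex never receives two external edges. I expect this verification against the precise figure adjacencies to be the main, if routine, obstacle.
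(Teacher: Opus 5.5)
Your proposal is correct. It is the same direct, construction-based verification the paper has in mind (the paper states this observation without proof), and your internal-degree counts for the strict, flexible and XOR gadgets match the figures. One small wording slip: $Z$ and $Z'$ are not adjacent to any degree-two vertex inside either parity gadget, so for (b) they are covered \emph{only} via the cycle vertices $Z_2,Z_3$, not ``additionally''.
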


Let $V_2$, $V_3$, and $V_4$ be the vertices in $G$ that have degree two, three, and four, respectively.
Further, denote $N_2 := |V_2|$, $N_3 := |V_3|$, $N_4 := |V_4|$, and $N = N_2 + N_3 + N_4$.
By \cref{lem:degrees}(\ref{item:all_degrees}), $(V_2, V_3, V_4)$ is a partition of the vertex set of $G$, and $N$ is the total number of vertices in $G$.
In this section, when we say degree-two, -three, -four vertices, the degree we refer to is the degree in $G$.

\paragraph{Weight assignment of non-edges.}
We first assign a \defi{priority} $\lambda(v)$ to every vertex~$v$ of $G$ in the decreasing order from $N$ to $1$, as follows.
Firstly, we assign the values in $\{N, \dots, N_3 + N_4 + 1\}$ to the vertices in $V_2$ in an arbitrary order.
Secondly, we assign the values in $\{N_3 + N_4, \dots, N_3 + N_4 - n + 1\}$ to the vertices of the form $x_{\ell}$ in an arbitrary order.
(Recall that $x_{\ell}$ and $x_r$ are the endpoints of the original first-set edge of an $H$-vertex~$x$.)
Thirdly, for each $H$-vertex $x$ (in some arbitrary order), we assign the next available priority values to $x_r$ and the vertices in the XOR-gadget assigned to $x$ in the order of $(a_1, \dots, a_p, x_r, b_p, \dots, b_1)$, where $(a_1, \dots, a_p)$ and $(b_1, \dots, b_p)$ are the two rails of the XOR-gadget assigned to $x$, and $a_1$ and $b_p$ are incident to $x_{\ell}$ and $x_r$, respectively.
Lastly, for each parity gadget~$\alpha$ in the order $\psi$ as defined in \cref{sec:equip}, we assign the next available values of priority to all vertices in~$\alpha$ in the following order: $(Y, X, b, X', b', a', Z, Z', a, Y')$ if $\alpha$ is the strict gadget and $(Y, X, Z, Z', X', Y')$ if $\alpha$ is the flexible gadget.
Note that we assign priorities to the degree-three vertices before assigning them to the degree-four vertices in $\alpha$.
Next, let $M$ be the sum of all edge weights appearing in $G$.
For a non-edge $vv'$, we assign $M \cdot 4^{\max\{\lambda(v), \lambda(v')\}}$ to be the weight of $vv'$.
This completes the definition of the TSP instance $(G_{\infty}, c_{\infty})$. Immediately from this definition we get:
\begin{observation}
    Let $(E_1, E_2)$ be a 3-swap in $(G_{\infty}, c_{\infty})$ such that $E_1$ contains a \infedge
    $e_1$ with $c_{\infty}(e_1) > \max_{e\in E_2} c_{\infty}(e)$.
    Suppose that performing the swap $(E_1, E_2)$ from a tour $\tau$ of $(G_{\infty}, c_{\infty})$ results in another tour.
    Then this swap is improving.
    \label{obs:improving3swap}
\end{observation}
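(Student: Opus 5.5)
The statement is a direct comparison of costs, so the plan is to write the change in tour weight caused by the swap as a sum of removed weights minus added weights and show it is negative. Let $\tau'$ denote the tour obtained from $\tau$ by swapping $E_1$ for $E_2$. Since $\tau'$ is a tour by hypothesis, the swap produces a feasible solution, and it suffices to prove
\[
  c_{\infty}(\tau') - c_{\infty}(\tau) \;=\; \sum_{e \in E_2} c_{\infty}(e) - \sum_{e\in E_1} c_{\infty}(e) \;<\; 0 .
\]
I would first reduce to the case where $E_1$ and $E_2$ are disjoint and $E_1 \subseteq E(\tau)$, $E_2 \cap E(\tau) = \emptyset$. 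Common edges contribute equally to both sums and cancel. This reduction shrinks $E_2$, which can only lower $\max_{e\in E_2}c_\infty(e)$. It also keeps $e_1$ in $E_1$: if $e_1$ were in $E_2$, then $c_\infty(e_1) \le \max_{e\in E_2}c_\infty(e)$, contradicting the hypothesis.

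The main step is to bound the added weight by $c_\infty(e_1)$. The edge $e_1$ is a non-edge, so $c_{\infty}(e_1)= M\cdot 4^{t}$, where $t = \max\{\lambda(v),\lambda(v')\}$ for $e_1 = vv'$. Every edge of $E_2$ has weight strictly less than $c_\infty(e_1)$, and $|E_2| = |E_1| \le 3$.

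There are two kinds of edges in $E_2$ to handle.
\begin{itemize}
  \item A non-edge in $E_2$ has weight $M\cdot 4^{s}$ with integer $s$. Its weight is strictly smaller than $M\cdot 4^t$, so $s\le t-1$ and its weight is at most $M\cdot 4^{t-1}$. This uses $M>0$. If $M=0$, then $G$-edges all have weight $0$, and the only thing to check is that non-edge weights are distinct powers scaled by $M$. To avoid this degenerate case, I would note that one may assume $M\ge 1$, or else replace $M$ by $\max\{M,1\}$ in the definition.
  \item A $G$-edge has weight at most $M \le M\cdot 4^{t-1}$, since $t\ge 1$.
\end{itemize}

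Hence
\[
  \sum_{e\in E_2}c_\infty(e) \;\le\; 3M\cdot 4^{t-1} \;<\; 4M\cdot 4^{t-1} \;=\; c_\infty(e_1) \;\le\; \sum_{e\in E_1}c_\infty(e),
\]
where the last inequality holds because all weights are nonnegative. This proves the swap is improving. The only delicate point is the integrality of the exponents, which is what forces every added non-edge to sit at least one power of $4$ below $e_1$. The factor $4 > 3$ is exactly what absorbs up to three added edges.
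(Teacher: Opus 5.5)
Your proposal is correct and is exactly the argument the paper leaves implicit (``immediately from this definition''): integrality of the priority exponents, together with $M \le M\cdot 4^{t-1}$, puts every edge of $E_2$ at weight at most $M\cdot 4^{t-1}$, and $3\cdot 4^{t-1}<4^{t}$ absorbs up to three added edges. One small remark: for $M=0$ all weights in $(G_{\infty},c_{\infty})$ vanish, so the hypothesis $c_{\infty}(e_1)>\max_{e\in E_2}c_{\infty}(e)$ can never hold and the statement is vacuous; there is no need to alter the definition of the non-edge weights.
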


The following lemma essentially shows that a tour with a non-edge cannot be locally optimal.

\begin{lemma}
    \label{lem:no_infty_local_optima}
    Let $\tau$ be a tour of $(G_{\infty}, c_{\infty})$ containing a \infedge.
    Then there exists an improving $3$-swap for $\tau$.
\end{lemma}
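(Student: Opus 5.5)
Let $e_1 = uv$ be the non-edge of $\tau$ with the largest weight, i.e.\ $\max\{\lambda(u),\lambda(v)\}$ is maximal among all non-edges of $\tau$. Say $\lambda(u) > \lambda(v)$, so $c_\infty(e_1) = M\cdot 4^{\lambda(u)}$. By \cref{obs:improving3swap}, it suffices to find a $3$-swap that removes $e_1$, produces a tour, and adds only edges of weight less than $c_\infty(e_1)$. Added edges can be $G$-edges (weight at most $M$, which is less than $M\cdot 4^{\lambda(u)}$) or non-edges whose endpoints all have priority less than $\lambda(u)$. So I will construct a $3$-swap that removes $e_1$ and two further tour edges, and reconnects the tour using one $G$-edge at $u$ plus edges between lower-priority vertices.

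The construction uses the standard $3$-opt "segment insertion" move. Choose a $G$-neighbour $w$ of $u$ such that the tour edge $uw$ is not already in $\tau$. Orient $\tau$ as $u, v, \dots$ and remove $e_1=uv$. Let $w^-$ and $w^+$ be the tour neighbours of $w$. Among the reconnections of the three removed edges $uv$, $ww^{\pm}$ and one more edge, some $3$-swap adding $uw$ always gives a Hamiltonian cycle. Concretely: remove $uv$, $ww'$ (for the appropriate neighbour $w'$ of $w$ on the tour), and one edge adjacent to these, then add $uw$, $vw'$ or similar. The main choice is $w$. The other added edges join $v$ and the tour neighbours of $w$, so I need $\lambda$ of those vertices to be below $\lambda(u)$, or those edges to be $G$-edges.

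The priority order is designed for exactly this. \emph{Case $u\in V_2$.} The $G$-edges at $u$ are forced, and some $G$-neighbour $w$ of $u$ is not joined to $u$ in $\tau$. If other maximal-priority considerations fail, pick $u$ carefully so that all added edges except $uw$ join vertices of smaller priority. Here I use that every higher-priority endpoint has already been "settled" by the maximality of $e_1$, so all tour edges at vertices of priority above $\lambda(u)$ are $G$-edges. \emph{Cases $u=x_\ell$, $u$ in an XOR gadget or at $x_r$, or $u$ in a parity gadget.} Here I use the specific orders $(a_1,\dots,a_p,x_r,b_p,\dots,b_1)$ and $(Y,X,b,X',b',a',Z,Z',a,Y')$ or $(Y,X,Z,Z',X',Y')$. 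I argue that $u$ has a $G$-neighbour $w$ of higher priority (or a degree-two neighbour) whose tour neighbourhood is fully $G$-forced. Because $w$ has higher priority, all its tour edges are $G$-edges, so $w$ has degree at most $3$ or $4$ in $G$ and uses exactly two of its $G$-edges in $\tau$. The $G$-edge $uw$ is then unused, and its replacement edges go to vertices that are either in $G$-relation or of lower priority. \cref{lem:degrees}(\ref{item:adjacent_degree_two}) ensures such a neighbour exists at least via degree-two vertices. The parity-gadget order is chosen so that degree-three vertices come first and each later vertex has an earlier neighbour.

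The main obstacle is the case analysis in the last step: checking, for every vertex type and gadget position, that a suitable $w$ exists and that the third removed or added edge does not raise the maximum weight. This is worst for the degree-four vertices $Z,Z',a,Y'$ of the strict gadget and $Z,Z',X',Y'$ of the flexible gadget, where the tour at $w$ is not forced. I would first try to show that $u$ always has a higher-priority $G$-neighbour $w$ with $uw\notin\tau$. Then the reconnection edge $vw^\ast$ has weight $M\cdot 4^{\max\{\lambda(v),\lambda(w^\ast)\}}$, and I would bound it by checking that $w^\ast$ is a $G$-neighbour of $w$ whose priority is forced below $\lambda(u)$ by the ordering $\psi$ and the within-gadget orders. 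If that bound fails somewhere, I would switch to reconnecting through the other tour neighbour of $w$.
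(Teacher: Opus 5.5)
\textbf{There is a genuine gap.} Your reduction to Observation~\ref{obs:improving3swap} is correct. The overall shape is also what the paper ends up doing: remove a maximum-weight non-edge $uv$, add a $G$-edge $uw$ at its higher-priority endpoint, and reconnect by a 2- or 3-swap. What is missing is any control over the other added edges. Your only invariant is that vertices of priority above $\lambda(u)$ meet only $G$-edges in $\tau$, and this says nothing about \emph{which} of their $G$-edges are used.

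Concretely, suppose $u\notin V_2$. Every admissible $w$ (a $G$-neighbour with $uw\notin\tau$) then lies outside $V_2$, because a degree-two vertex above $\lambda(u)$ uses both its $G$-edges, including $uw$. By Observation~\ref{lem:degrees}(\ref{item:adjacent_degree_two}), $w$ has a degree-two $G$-neighbour $z$ with $\lambda(z)>\lambda(u)$, so $wz\in\tau$.
\begin{itemize}
\item If the orientation $u,v,\dots$ of $\tau$ makes $z$ the successor of $w$, the only 2-swap that removes $uv$ and adds $uw$ adds the non-edge $vz$. Its weight $M\cdot 4^{\lambda(z)}$ exceeds $c_\infty(uv)$.
\item ``Reconnecting through the other tour neighbour $y$'' of $w$ is not a 2-swap at all, since it splits the tour.
\item You then need a genuine 3-swap: remove $uv$, $yw$ and an edge $tt'$ on the $w$--$u$ path through $z$, and add $uw$, $yt'$, $tv$. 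For this to be improving, $t$, $t'$ (and essentially $y$) must have priority below $\lambda(u)$.
\end{itemize}
Nothing in your proposal shows such an edge $tt'$ exists; a priori that path could consist of high-priority vertices joined by $G$-edges. Your fallback of choosing $w$ with higher priority does not address this. For instance, no such $w$ exists when $\lambda(u)=N$. In the paper's main case the added $G$-edge in fact goes to a \emph{lower}-priority vertex.

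The missing idea is a reference Hamiltonian cycle of $G$ to measure $\tau$ against. The paper reads the cut $\gamma^*$ off the left first-set edges used by $\tau$ and sets $\tau^*=\phi^{-1}(\gamma^*)$. It then replaces your invariant by $\tau^*$-consistency (same incident edges in $\tau$ and $\tau^*$) of every vertex above the highest inconsistent priority $i$. This fixes all the choices at once:
\begin{itemize}
\item the edge to add is a $\tau^*$-edge missing from $\tau$;
\item the edge removed at its other endpoint is one in $\tau\setminus\tau^*$, whose far endpoint is inconsistent and hence of low priority;
\item when the orientation blocks the 2-swap, the relevant tour path plus the added edge would be a short cycle inside $\tau^*$ unless it contains an edge $tt'\notin\tau^*$. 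The endpoints of that edge are again inconsistent, hence of low priority. This is Claim~\ref{claim:3_swap_from_infty}, with the degenerate cases $t=u$ or $t'=\tilde v$ handled via degree-two vertices.
\end{itemize}
Even with $\tau^*$, the top inconsistent vertex may be incident only to $G$-edges. The gadget-by-gadget analysis of Cases 1--5 is what locates a vertex of priority $i$ or $i-1$ that satisfies the claim's hypotheses; it uses the specific within-gadget priority orders, the order $\psi$, and property (***) of flexible gadgets. Your proposal names this case analysis as the main obstacle but supplies neither it nor the tool that makes it work.
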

\begin{proof}
	We call an edge of $G$ a \defi{\fedge}.
	For convenience, we still use the word ``edge" to refer generally to either a \infedge or a \fedge.

    We define the cut $\gamma^*$ of $H$ as follows.
    For every $H$-vertex $x$, if the left first-set edge of $x$ is in $\tau$, then $x$ is in the first set of $\gamma^*$; otherwise it is in the second set.
    Let $\tau^* := \phi^{-1}(\gamma^*)$.
	    
    We say that a vertex of $G_{\infty}$ is \defi{$\tau^*$-consistent}, if the incident edges of that vertex in~$\tau$ are the same as those in $\tau^*$.
    Let $i$ be the smallest integer such that for any vertex $v$ of $G_{\infty}$ with $\lambda(v) > i$, $v$ is $\tau^*$-consistent.
    Since $\tau$ has a \infedge, $i \geq 1$.
    Let $v$ be the vertex with $\lambda(v) = i$.
    By definition of $i$,  $v$ is $\tau^*$-inconsistent.

    We start with the following claim.

\begin{figure}
    \centering
    \begin{tikzpicture}[vertex/.style = {circle,fill=black,inner sep=0.6mm},
                        dashededge/.style={dashed, thick},     
                        solidedge/.style={thick}]     
        \def\radius{1.25} 

        \begin{scope}
            \foreach \i/\j/\t in {1/1/$t$,  2/2/$t'$, 3/5/$\tilde{v}$, 4/6/$v'$, 5/9/$u'$, 6/10/$u$} {
                \coordinate (P\i) at ({\radius*cos(30*\j)}, {\radius*sin(30*\j)});
                \node[vertex] at (P\i) {};
                \node at ({1.25*\radius*cos(30*\j)}, {1.25*\radius*sin(30*\j)}) {\t};
            }
            \draw[solidedge] (P1) to[bend right=12.5] (P2) (P3) to[bend right=12.5] (P4) (P5) to[bend right=12.5] (P6);
            \draw[dashededge] (30:\radius) arc[start angle=30, end angle=-60, radius=\radius];
            \draw[dashededge] (-90:\radius) arc[start angle=-90, end angle=-180, radius=\radius];
            \draw[dashededge] (60:\radius) arc[start angle=60, end angle=150, radius=\radius];
            \draw (0,-2.2) node {$\downarrow$};       
        \end{scope}
        
        \begin{scope}[shift={(0,-4)}]
            \foreach \i/\j/\t in {1/1/$t$,  2/2/$t'$, 3/5/$\tilde{v}$, 4/6/$v'$, 5/9/$u'$, 6/10/$u$} {
                \coordinate (P\i) at ({\radius*cos(30*\j)}, {\radius*sin(30*\j)});
                \node[vertex] at (P\i) {};
                \node at ({1.25*\radius*cos(30*\j)}, {1.25*\radius*sin(30*\j)}) {\t};
            }
            \draw[solidedge] (P1) -- (P4) (P2) -- (P5) (P3) -- (P6);
            \draw[dashededge] (30:\radius) arc[start angle=30, end angle=-60, radius=\radius];
            \draw[dashededge] (-90:\radius) arc[start angle=-90, end angle=-180, radius=\radius];
            \draw[dashededge] (60:\radius) arc[start angle=60, end angle=150, radius=\radius];
            \draw (0,-2.2) node {Case A};
        \end{scope}

        \begin{scope}[shift={(5,0)}]
            \foreach \i/\j/\t in {1/-0.5/,  2/2/$t'$, 3/5/$\tilde{v}$, 4/6/$v'$, 5/9/$u'$, 6/-0.5/} {
                \coordinate (P\i) at ({\radius*cos(30*\j)}, {\radius*sin(30*\j)});
                \node[vertex] at (P\i) {};
                \node at ({1.25*\radius*cos(30*\j)}, {1.25*\radius*sin(30*\j)}) {\t};
            }
            \node at ({1.45*\radius*cos(-15)}, {1.45*\radius*sin(-15)}) {$t = u$};
            \draw[solidedge] (P1) to[bend right=31.25] (P2) (P3) to[bend right=12.5] (P4) (P5) to[bend right=31.25] (P6);
            \draw[dashededge] (-90:\radius) arc[start angle=-90, end angle=-180, radius=\radius];
            \draw[dashededge] (60:\radius) arc[start angle=60, end angle=150, radius=\radius];
            \draw (0,-2.2) node {$\downarrow$};       
        \end{scope}

        \begin{scope}[shift={(5,-4)}]
            \foreach \i/\j/\t in {1/-0.5/,  2/2/$t'$, 3/5/$\tilde{v}$, 4/6/$v'$, 5/9/$u'$, 6/-0.5/} {
                \coordinate (P\i) at ({\radius*cos(30*\j)}, {\radius*sin(30*\j)});
                \node[vertex] at (P\i) {};
                \node at ({1.25*\radius*cos(30*\j)}, {1.25*\radius*sin(30*\j)}) {\t};
            }
            \node at ({1.45*\radius*cos(-15)}, {1.45*\radius*sin(-15)}) {$t = u$};
            \draw[solidedge] (P1) -- (P4) (P2) -- (P5) (P3) -- (P6);
            \draw[dashededge] (-90:\radius) arc[start angle=-90, end angle=-180, radius=\radius];
            \draw[dashededge] (60:\radius) arc[start angle=60, end angle=150, radius=\radius];
            \draw (0,-2.2) node {Case B};       
        \end{scope}

        \begin{scope}[shift={(10,0)}]
            \foreach \i/\j/\t in {1/1/$t$,  2/3.5/, 3/3.5/, 4/6/$v'$, 5/9/$u'$, 6/10/$u$} {
                \coordinate (P\i) at ({\radius*cos(30*\j)}, {\radius*sin(30*\j)});
                \node[vertex] at (P\i) {};
                \node at ({1.25*\radius*cos(30*\j)}, {1.25*\radius*sin(30*\j)}) {\t};
            }
            \node at ({1.3*\radius*cos(105)-0.5}, {1.3*\radius*sin(105)}) {$t' = \tilde{v}$};
            \draw[solidedge] (P1) to[bend right=31.25] (P2) (P3) to[bend right=31.25] (P4) (P5) to[bend right=12.5] (P6);
            \draw[dashededge] (30:\radius) arc[start angle=30, end angle=-60, radius=\radius];
            \draw[dashededge] (-90:\radius) arc[start angle=-90, end angle=-180, radius=\radius];
            \draw (0,-2.2) node {$\downarrow$};       
        \end{scope}

        \begin{scope}[shift={(10,-4)}]
            \foreach \i/\j/\t in {1/1/$t$,  2/3.5/, 3/3.5/, 4/6/$v'$, 5/9/$u'$, 6/10/$u$} {
                \coordinate (P\i) at ({\radius*cos(30*\j)}, {\radius*sin(30*\j)});
                \node[vertex] at (P\i) {};
                \node at ({1.25*\radius*cos(30*\j)}, {1.25*\radius*sin(30*\j)}) {\t};
            }
            \node at ({1.3*\radius*cos(105)-0.5}, {1.3*\radius*sin(105)}) {$t' = \tilde{v}$};
            \draw[solidedge] (P1) -- (P4) (P2) -- (P5) (P3) -- (P6);
            \draw[dashededge] (30:\radius) arc[start angle=30, end angle=-60, radius=\radius];
            \draw[dashededge] (-90:\radius) arc[start angle=-90, end angle=-180, radius=\radius];
            \draw (0,-2.2) node {Case C};       
        \end{scope}        
    \end{tikzpicture}
    \caption{Illustrations for the cases in the proof of \cref{claim:3_swap_from_infty}. Note that $u\tilde{v}$ by assumption is a $G$-edge.}
    \label{fig:3_swap_from_infty}
\end{figure}
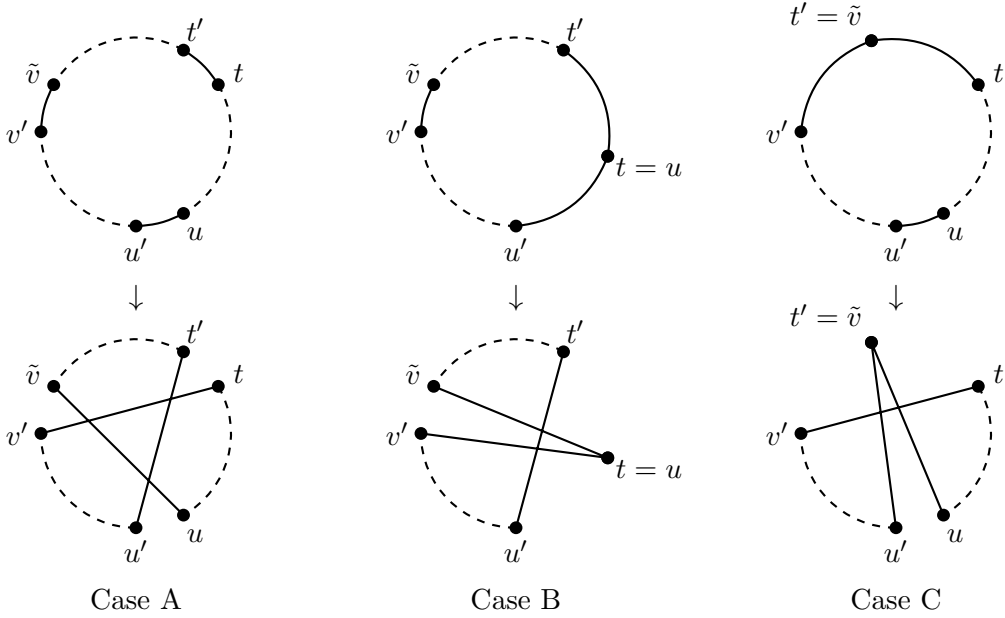

    \begin{claim}
    \label{claim:3_swap_from_infty}
        Suppose there exist a \infedge $\tilde{v}v'$ in~$\tau$ and an edge $u\tilde{v}$ in~$\tau^*$ but not in~$\tau$.
        Further suppose that for any vertex~$v'' \neq u,\tilde{v}$, if $v''$ is not $\tau^*$-consistent, then $\lambda(v'') < \lambda(\tilde{v})$.
        Then there exists an improving 3-swap that removes $\tilde{v}v'$.
    \end{claim}
    \begin{claimproof}[Claim's proof]
        Fix an orientation of $\tau$.
        Let $u'$ be a vertex such that $uu'$ is in $\tau$ but not in $\tau^*$.
        This implies that $u'$ is not $\tau^*$-consistent, and hence, $\lambda(u') < \lambda(\tilde{v})$.
        Similarly, $\lambda(v') < \lambda(\tilde{v})$.
        If $(\tilde{v},v')$ and $(u,u')$ have the same orientation in $\tau$, then since $u\tilde{v}$ is a \fedge, and since $\lambda(u')$ and $\lambda(v')$ are smaller than $\lambda(\tilde{v})$, replacing $\tilde{v}v'$ and $uu'$ by $u\tilde{v}$ and $u'v'$ yields a better tour by \cref{obs:improving3swap}.
        Otherwise, consider the path in $\tau$ between $u$ and $\tilde{v}$ without containing $u'$ and $v'$.
        If this path has only edges in $\tau^*$, then the path together with $u\tilde{v}$ forms a cycle in $\tau^*$ that is not $\tau^*$ itself (since $u'$ and $v'$ are not on the path), a contradiction to $\tau^*$ being a tour.
        Therefore, the path contains an edge $tt'$ that is not in $\tau^*$, where we encounter $t$ before $t'$ when we go along the path from $u$ to $\tilde{v}$.
        This implies that $t$ and $t'$ are not $\tau^*$-consistent.
        
        Let $E_1 = \{ uu', \tilde{v}v', tt' \}$ and $E_2 = \{ u\tilde{v}, t'u', tv' \}$.
        We now show that the swap $(E_1, E_2)$ is improving, by analyzing the following cases; see \cref{fig:3_swap_from_infty} for an illustration.

        \textbf{Case A: $t \neq u$ and $t' \neq \tilde{v}$.}
        This implies that $\lambda(t)$ and $\lambda(t')$ are smaller than $\lambda(\tilde{v})$. 
        By \cref{obs:improving3swap}, replacing $E_1$ with $E_2$ yields a shorter tour.

        \textbf{Case B: $t = u$.}
        Observe that if the vertex $v$ (i.e., the $\tau^*$-inconsistent vertex with the highest priority in $\tau$) has degree at least three, then by the priority assignment, all the degree-two vertices are $\tau^*$-consistent.
        Combined with \cref{lem:degrees}(\ref{item:adjacent_degree_two}), this implies that in $\tau$, every vertex is incident to an edge that also appears in $\tau^*$.
        It follows that $t \neq u$ and $t' \neq \tilde{v}$, a contradiction to the assumption of this case.

        Hence, $v$ is a degree-two vertex.
        Note that by the assumption of the claim, one of $\tilde{v}$ and $u$ is $v$.
        For this vertex, any incident edge that does not appear in $\tau^*$ has to be a \infedge and has weight $M \cdot 4^i$, i.e., the maximum weight among all edges of $\tau$.
        If $\tilde{v} = v$, then the edge $\tilde{v}v'$ has larger weight than the weight of every edge in $E_2$.
        By \cref{obs:improving3swap}, the swap $(E_1, E_2)$ is then improving.
        Otherwise, we have $u = v$.
        Then both $uu'$ and $tt'$ have weight $M \cdot 4^i$.
        Since 
        $c_{\infty}(tv') \leq M \cdot 4^i$ and $c_{\infty}(u\tilde{v}), c_{\infty}(t'u') \leq M \cdot 4^{i-1}$, replacing $E_1$ with $E_2$ yields a shorter tour.
        
        \textbf{Case C: $t' = \tilde{v}$.}
        Using a similar argument as in Case B, we also conclude that $v$ has degree two.
        If $\tilde{v} = v$, then both $\tilde{v}v'$ and $tt'$ have weight $M \cdot 4^i$.
        Since $c_{\infty}(t'u') \leq M \cdot 4^i$ and $c_{\infty}(u\tilde{v}), c_{\infty}(tv') \leq M \cdot 4^{i-1}$, replacing $E_1$ with $E_2$ yields a shorter tour.
        Otherwise, we have $u = v$.
        Then the edge $uu'$ has larger weight than the weight of every edge in $E_2$, and hence the swap $(E_1, E_2)$ is improving by \cref{obs:improving3swap}.
    \end{claimproof}

    By \cref{claim:3_swap_from_infty}, if in $\tau$ the vertex $v$ is incident to a \infedge, we have an improving 3-swap for $\tau$.
    Therefore, for the remainder of the proof, we assume that in $\tau$ the vertex $v$ is incident to only \fedges.
    We consider the following five cases.
    
    \textbf{Case 1: $v \in V_{2}$.} 
    In this case, $v$ only has two incident \fedges, and they have to be both in $\tau^*$ and in $\tau$.
    Therefore, $v$ is $\tau^*$-consistent, a contradiction.

    \textbf{Case 2: $v$ has degree~$d\in\{3,4\}$ in $G$ and has at least $d-1$ neighbors in $G$ with priority higher than $i$.}
    By the choice of $i$, the neighbors with priority higher than $i$ are $\tau^*$-consistent.
    Hence, the edges that are incident to $v$ and one of these neighbors are either both in $\tau$ and $\tau^*$ or both not in $\tau$ and $\tau^*$.
    Hence, if the two incident edges of $v$ in $\tau^*$ are incident to two of these neighbors, then $v$ is $\tau^*$-consistent, a contradiction.
    Otherwise, $v$ is adjacent to exactly one of these neighbors in both $\tau$ and $\tau^*$.
    This implies the edge $vv'$ is in both $\tau$ and $\tau^*$, where $v'$ is the remaining neighbor of $v$ in $G$.
    However, this also means that $v$ is $\tau^*$-consistent, a contradiction.

    \textbf{Case 3: $v = x^j_{\ell}$ for some $H$-vertex~$x$ and $j \in \{1, \dots, n\}$.}
    Observe that $v$ is incident to the left first-set edge of~$x$, a door $vv'$ of $x$, and an edge $uv$, where $u \in V_{2}$.
    By the choice of $i$, $u$ is $\tau^*$-consistent, and hence, $uv$ is both in $\tau$ and $\tau^*$.
    By construction of $\tau^*$, if the left first-set edge of $x$ is in $\tau$, then it is also in $\tau^*$.
    Otherwise, $vv'$ is in $\tau^*$, and by the assumption that $v$ is incident to only \fedges, $vv'$ is also in $\tau$.
    Hence, $v$ is $\tau^*$-consistent, a contradiction.

    \textbf{Case 4: $v$ is in $V_3$ and not covered by Cases 2 and 3.}
    By \cref{lem:degrees}, the degree-three vertices are $x_{\ell}, x_r$ for $x \in V(H)$ and some vertices in the gadgets.
    Note that $x_{\ell}$ is covered by Case 3, while by construction and the definition of $\lambda$, $x_r$ and the degree-three vertices in XOR gadgets are covered by Case 2.
    The degree-three vertices in a flexible gadget are the vertices $X$ and $Y$ of the gadget, and by the priority assignment using the order $\psi$, they are covered by Case 2.
    The same holds for the degree-three vertices $Y$, $X$, $b$, $b'$, $a'$ in a strict gadget.
    Hence, $v$ can only be the vertex $X'$ of some strict gadget~$\alpha$.
    
    In this case, the two neighbors of $v$ that have degree more than two in $\alpha$ are the vertex $b'$ of $\alpha$ and a vertex $v'$ outside of the gadget~$\alpha$.
    We consider two subcases.
    (See \cref{fig:case_3} for an illustration.)

    \begin{figure}[t!]
        \centering
        \includegraphics[scale=1]{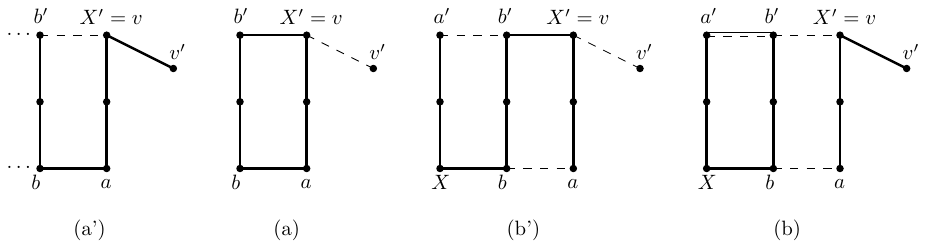}
        \caption{Illustration of Subcases 4a and 4b in the proof of \cref{lem:no_infty_local_optima}. 
        Bold edges are edges of $\alpha$ present in a specific tour, while dashed edges are not present.
        Panels (a') and (b') corresponds to the tour $\tau^*$ in Subcases 4a and 4b, respectively.
        Since $X, b$, and $a$ are $\tau^*$-consistent, these panels imply the situations in the tour $\tau$ in panels (a) and (b).
        The edge $a'b'$ in panel (b) indicates that this edge may be in $\tau$ (which creates a 6-cycle) or not in $\tau$ (which means $b'$ is incident to a \infedge).}
        \label{fig:case_3}
    \end{figure}

    \textit{Subcase 4a: $vv'$ is in $\tau^*$ and $vb'$ is in $\tau$.}
    In this case, $ab$ has to be in $\tau^*$, by the design of the XOR gadget.
    By the priority assignment, $b$ has higher priority than $v$, and hence, $b$ is $\tau^*$-consistent.
    This implies that $ab$ is also in $\tau$.
    However, we then have a subgraph of $\tau$ that is the 6-cycle containing $a, b, b'$, and $v$, a contradiction to $\tau$ being a tour.
    
    \textit{Subcase 4b: $vv'$ is in $\tau$ and $vb'$ is in $\tau^*$.}
    In this case, $Xb$ has to be in $\tau^*$, by the design of the XOR gadget.
    With an analogous argument as before, this edge is also in $\tau$, and hence $a'b'$ must not be in $\tau$, because otherwise, we have a 6-cycle.
    Since $b'$ is not adjacent to $v$ or $a'$ in $\tau$, it must be incident to a \infedge $b'u'$.
    Note that $\lambda(b') = \lambda(v) -1 = i-1$.
    Then applying \cref{claim:3_swap_from_infty} to the edges $b'u'$ and $b'v$, we obtain an improving 3-swap that removes $b'u'$.
    
    \textbf{Case 5: $v$ is in $V_4$ and is not covered by Case 2.}
    By \cref{lem:degrees}(\ref{item:degree_four}), $v$ is in some parity gadget $\alpha$ (which can be either a strict or flexible gadget).
    By \cref{lem:degrees}(\ref{item:adjacent_degree_two}), in $\tau^*$, $v$ is adjacent to a vertex $u \in V_2$.
    Using a similar argument as in Case~4, we deduce that $uv$ is also in $\tau$, and $v$ has a neighbor~$v'$ in $\tau^*$, such that $vv'$ is not in $\tau$.
    Let $t$ be the vertex such that $vt$ is the edge in $\tau$ but not in $\tau^*$.
    Then we must have the following.
    \begin{itemize}
        \item[(*)] $v'$ and $t$ have lower priorities than $i$; and
        \item[(**)] A neighbor of $t$ in $\tau^*$ has lower priority than $i$.
    \end{itemize}
    Indeed, (*) follows from the fact that $v'$ and $t$ are $\tau^*$-inconsistent.
    To see (**), observe that if this does not hold, then all incident edges of $t$ in $\tau^*$ would also have been in $\tau$, contradicting the fact that $t$ is not $\tau^*$-consistent.
    Note that since $\tau^*$ is a standard tour, the subtour~$T$ of $\tau^*$ in $\alpha$ is among subtours (1)--(4) of $\alpha$.
    With all the observations above in mind, we now consider two subcases.

    \textit{Subcase 5a: $\alpha$ is a strict gadget.}
    Recall that the strict gadget and its standard subtours are depicted in \cref{fig:strict_gadget_1}.
    Since $v$ is not covered by Case 2, $v$ must then be the vertex $Z$ of $\alpha$.
    If $T$ is subtour (1) or (3), then there is no choice for $v'$, as all neighbors of $v$ in $T$ have higher priorities than $i$, contradicting (*).
    If $T$ is subtour (2), $t$ must be $Z'$; but the neighbors of $t$ in $T$ then have higher priorities than $i$, contradicting (**).
    Hence, $T$ is subtour (4).
    This implies $v'$ is $Z'$, and $t$ is $Y'$.
    Observe that in this subtour, the neighbors of $a$ have priority higher than $v$.
    Hence, the incident edges of $a$ in $\tau$ and $\tau^*$ are the same.
    Accordingly, the edge $Z'a$ cannot be in $\tau$.
    Since the edge $ZZ'$ is not in $\tau$ either, and since the other neighbors of $Z'$ in $\alpha$ other than $Z$ and $a$ have higher priority than $i$, we conclude that only one \fedge is incident to $Z'$ in $\tau$.
    This means $Z'$ is incident to a \infedge~$Z't'$ in $\tau$.
    Since $\lambda(Z') = \lambda(Z) - 1 = i-1$, and since $t'$ must not be $\tau^*$-consistent, the edges $Z't'$ and $Z'Z$ satisfy the condition of \cref{claim:3_swap_from_infty}.
    Hence, there is a 3-swap that removes $Z't'$.

    \textit{Subcase 5b: $\alpha$ is a flexible gadget.}
    Recall that the flexible gadget and its standard subtours are depicted in \cref{fig:flexible_gadget}.
    Since $v$ is not covered by Case 2, $v$ must then be the vertex $Z$ or $X'$ of $\alpha$.
    We start with the following observation which follows from the gadget assignment according to the order $\psi$, the priority assignment, and the fact that the paths in the subgraph of $\overrightarrow{H}$ induced by the undirected edges have length at most two (by \cref{lem:partialedgeorientation}).

    \begin{itemize}
        \item[(***)] For every flexible gadget assigned in $G$, at least $X'$ or $Y'$ is adjacent to a vertex that does not belong to any parity gadget (i.e., a vertex in an XOR-gadget or a vertex of the form $x_r$).
    \end{itemize}

    We now consider the different cases for $T$.
    Firstly, if $T$ is subtour (1), then $v$ cannot be $Z$, since otherwise, there is no valid choice for $v'$ that satisfies (*).
    In other words, $v$ is $X'$.
    We then have that $v'$ is $Y'$.
    Since $t$ is a neighbor of $X'$ with priority lower than $i$, $t$ must then be a vertex of another parity gadget.
    Then by (***), in $G$, $v'$ (i.e., vertex $Y'$ of $\alpha$) is adjacent to a vertex $v''$ not in any parity gadget.
    By the priority assignment and by the fact that $v$ is in a parity gadget, this implies that $v''$ has priority higher than $i$.
    Therefore, since the edge $v'v''$ is not in $\tau^*$, it is not in $\tau$.
    By a similar reason, $Zv'$ cannot be in $\tau$.
    As $vv'$ is not in $\tau$ either, we conclude that $v'$ is incident to a \infedge $v't'$ in $\tau$.
    Note that $\lambda(v') = \lambda(v) - 1 = i-1$, and $t'$ must be $\tau^*$-inconsistent and have priority lower than $i$.
    Hence, $vv'$ and $v't'$ satisfy the condition of \cref{claim:3_swap_from_infty}, and therefore, there is an improving 3-swap.

    Secondly, 
    if $T$ is subtour (2), then $v$ must be $Z$ or $X'$.
    If $v$ is $Z$, then $t$ is $Z'$, and if $v$ is $X'$, then $t$ is $Y'$.
    In both cases, all neighbors of $t$ in $T$ also have higher priorities than that of $v$, a contradiction to (**).

    Thirdly, if $T$ is subtour (3).
    Regardless of $v$ is $Z$ or $X'$, we have no valid choice for $v'$.

    Lastly, suppose $T$ is subtour (4).
    Then by (***), at least one of $X'$ and $Y'$ is adjacent to a vertex $v''$ not in any parity gadget and therefore $v''$ has priority higher than $i$.
    If $v''$ is the neighbor of $X'$, then $v$ must be $Z$, since there is no valid choice for $v'$ when $v$ is $X'$.
    Further, the neighbors of $X'$ in $\tau^*$ have priorities higher than $i$, and hence the incident edges of $X'$ in $\tau$ and $\tau^*$ are the same.
    With similar argument as before, we then conclude that $Z'$ cannot be adjacent to $X'$, $Z$, or $Y$ in $\tau$, and hence it has to be incident to a \infedge $Z't'$.
    We then can apply \cref{claim:3_swap_from_infty} on $ZZ'$ and $Z't'$ and find an improving 3-swap.
    
    Now suppose $v''$ is the neighbor of $Y'$.
    Regardless of $v$ is $Z$ or $X'$, $t$ is then $Y'$.
    However, for $T$ being subtour (4), the two neighbors of $t$ in $\tau^*$ have priorities higher than $i$, contradicting (**).
     
    This completes the proof of the lemma.
\end{proof}

\section{\pls-Completeness for \texorpdfstring{$k \geq 15$}{k >= 15}}
\label{sec:PLS-complete-proof}
We are now ready to prove a tight \pls-reduction from Max-Cut/Flip to TSP/\kopt.

\begin{theorem}
\label{thm:tight_reduction}
    There exists a tight \pls-reduction from Max-Cut/Flip with maximum degree five to TSP/\kopt for any $k \geq 15$.
\end{theorem}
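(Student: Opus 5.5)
We define the reduction $(h,g)$ and take as $R$ the set of standard tours of $G$. Let $h$ map a Max-Cut instance $(H,w)$ of maximum degree five to $(G_\infty,c_\infty)$ from \cref{sec:reduction,sec:no_infty_opt}. Every step of this construction is polynomial: computing the partial orientation of \cref{lem:partialedgeorientation}, assigning the parity and XOR gadgets (the XOR orders are at most $k$), and computing the priorities. The weights $M\cdot 4^{\lambda}$ are exponentially large, but their encoding length is $O(N)$ bits, so the instance size stays polynomial. For $g$, given a tour $\tau$ of $G_\infty$, we output the cut $\gamma^*$ of \cref{lem:no_infty_local_optima}: $x$ goes to the first set iff the left first-set edge of $x$ lies in $\tau$. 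This is clearly polynomial. On tours using only $G$-edges, \cref{cor:no_nonstandard} says $\tau$ is standard, and \cref{lem:tour_edges} then gives $g(\tau)=\phi(\tau)$.

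Next we check that $g$ sends local optima to local optima. Let $\tau$ be locally optimal for TSP/\kopt. By \cref{lem:no_infty_local_optima}, $\tau$ contains no non-edge, since otherwise it would have an improving $3$-swap, which is also a $k$-swap because $k\ge 3$. So $\tau$ is a tour of $G$, and by \cref{cor:no_nonstandard} it is standard. If $\phi(\tau)$ had an improving flip, \cref{lem:correspondence_13_1} would give an improving $k$-swap from $\tau$, a contradiction. Hence $g(\tau)=\phi(\tau)$ is a local optimum. The same argument proves condition~1 of tightness: every local optimum lies in $R$. Condition~2 follows from \cref{lem:phi_bijection}: for a cut $s_p$ we build $\phi^{-1}(s_p)\in R$ in polynomial time, and $g$ maps it back to $s_p$.

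The main step is condition~3. Take a path in the transition graph of $h(I)$ from $s_q\in R$ to $s_q'\in R$ with no other element of $R$ inside it. We argue the path has length one. Suppose the first step goes from $s_q$ to some tour $\tau'$. Then $\tau'\notin R$ unless $\tau'=s_q'$. By \cref{cor:no_nonstandard}, every tour consisting only of $G$-edges is standard, so $\tau'$ must contain a non-edge. The obstacle is to show that such a $\tau'$ cannot be reached from a standard tour by an improving swap.

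For this we use the weight gap. Every standard tour has weight at most $M$, the total weight of all edges of $G$. Every non-edge has weight at least $M\cdot 4^{1}>M$. An improving swap from $s_q$ lowers the total weight, so the resulting tour has weight below $M$ and cannot contain a non-edge. Hence $\tau'=s_q'$, and the path is a single improving $k$-swap between two standard tours. By \cref{lem:correspondence_13_2}, $g(s_q')=\phi(s_q')$ is obtained from $g(s_q)$ by an improving flip, which is an edge of $T_I$, as condition~3 requires.

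Finally, membership of TSP/\kopt in \pls for fixed $k$ is standard, since the $k$-swap neighborhood has polynomial size. The reduction works for every $k\ge 15$ because all XOR orders $k-2d(x)-2d^+(x)-1$ are nonnegative.
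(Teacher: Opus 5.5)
Your proof is correct and follows the paper's argument. You take $R$ to be the standard tours, show local optima are standard via \cref{lem:no_infty_local_optima} and \cref{cor:no_nonstandard}, and use the weight gap between non-edges and standard tours to reduce condition~3 to a single improving $k$-swap between standard tours, where \cref{lem:correspondence_13_2} applies. The only difference is cosmetic: you define $g$ by the cut $\gamma^*$ on every tour rather than by $\phi$ on standard tours and an arbitrary cut otherwise, which changes nothing since the two agree on standard tours by \cref{lem:tour_edges}.
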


\begin{proof}
	In order to show a tight \pls-reduction, we need to stipulate a \pls-reduction $(h,g)$ as defined in \cref{def:pls_complete} and a set $R$ as defined in \cref{def:tight_pls_complete}.

    For the function $h$, we use the reduction from an instance $(H, w)$ of Max-Cut to an instance $(G_{\infty}, c_{\infty})$ of TSP as described in \cref{sec:reduction} with the weight assignment of the non-edges as described in \cref{sec:no_infty_opt}.
    For the function $g$, we describe a mapping of a tour $\tau$ of $G_{\infty}$ to a cut $\gamma$ of $H$.
    If $\tau$ is a standard tour of $G$, then we map it to $\phi(\tau)$.
    Otherwise, we map it to an arbitrary cut of $H$ (e.g., the cut where all vertices are in the first set).
    
    We show that $(h,g)$ is indeed a \pls-reduction.
    It is easy to see that they can be computed in polynomial time.
    Further, by \cref{cor:no_nonstandard}, if a tour of $G_{\infty}$ is not a standard tour of $G$, it has to contain a non-edge, and by \cref{lem:no_infty_local_optima}, it cannot be a local optimum.
    Otherwise, \cref{lem:correspondence} implies that if $\tau$ is a locally optimal standard tour of $G$, then $\phi(\tau)$ is a locally optimal cut of $H$.
    All of the above show that $(h,g)$ is a \pls-reduction.
    
    For tightness, we define the set $R$ to be the set of all standard tours of $G$.
    We now show that $R$ satisfies the three conditions in \cref{def:tight_pls_complete}.
    As argued above, a locally optimal solution of $(G_{\infty}, c_{\infty})$ is a standard tour of $G$; hence, the first condition is satisfied.
    The second condition follows from \cref{lem:phi_bijection}.
    Lastly, note that a solution not in $R$ is a tour with a non-edge.
    Since the weight of the non-edge is larger than the total weight of any standard tour, an improving $k$-swap can only transform a standard tour to another standard tour.
    Combined with \cref{lem:correspondence}, the third condition then follows.
    This completes the proof of the tight \pls-reduction.
\end{proof}

The main result of our paper then immediately follows.

\plscomplete*
\begin{proof}
    The theorem follows from \cref{thm:tight_reduction} and the fact that Max-Cut/Flip is \pls-complete even when restricted to instances with maximum degree five~\cite{Elsaesser_Max_Cut_5}.
\end{proof}

\section{Concluding remark}

In this paper, we proved the \pls-completeness of TSP/\kopt for $k\ge 15$ via a tight \pls-reduction from Max-Cut/Flip with maximum degree five.
Two natural open problems remain.
\begin{itemize}
    \item Is TSP/\kopt \pls-complete for $k < 15$?
    It can be shown that there is no smaller ``strict gadget''; that is a gadget that has unique subtours (1)--(4), has no other subtours, and is either a $(3,2)$-, a $(2,3)$-, or a $(2,2)$-parity gadget in accordance to \cref{def:rx_ry_pg}.
    Therefore, a proof of \pls-completeness for lower values of $k$ might require a substantially different idea.
    
    \item Is TSP/\kopt tightly \pls-complete for some constant $k$? 
    Since our reduction is a tight \pls-reduction, if bounded degree Max-Cut/Flip is proven to be tightly \pls-complete in the future, a straightforward adaptation of our reduction (in particular, a slight change in the construction of the graph $\overrightarrow{H}$ in \cref{sec:equip}) will answer the second open question above positively.
\end{itemize}

\paragraph{Acknowledgement.} 
This work was initiated at the ``Discrete Optimization'' trimester program of the Hausdorff Institute of Mathematics, University of Bonn, Germany in 2021.
We would like to thank the organizers for providing excellent working conditions and inspiring atmosphere. 

\bibliographystyle{alpha}
\bibliography{refs}

\end{document}